\newcommand{\pushright}[1]{\ifmeasuring@#1\else\omit\hfill$\displaystyle#1$\fi\ignorespaces}
\newcommand{\pushleft}[1]{\ifmeasuring@#1\else\omit$\displaystyle#1$\hfill\fi\ignorespaces}
\newcolumntype{M}[1]{>{\centering\arraybackslash}m{#1}}
\newcolumntype{N}{@{}m{0pt}@{}}
\newtheorem{definition}{Definition}
\newtheorem{proposition}{Proposition}
\newtheorem{theorem}{Theorem}
\newtheorem{theorem-main}{Theorem}
\newtheorem{lemma}{Lemma}
\newcommand\barbelow[1]{\stackunder[0.2pt]{$#1$}{\rule{1.1ex}{.1ex}}}
\newcommand\barbelowsmall[1]{\stackunder[0.1pt]{{\tiny $#1$}}{\rule{0.55ex}{.05ex}}}
\newcommand\barbelowmedium[1]{\stackunder[0.15pt]{{\scriptsize $#1$}}{\rule{0.75ex}{.075ex}}}
\renewcommand{\v}[1]{\ensuremath{\mathbf{#1}}} % for vectors
\newcommand{\gv}[1]{\ensuremath{\text{\boldmath$ #1 $}}}% for vectors of Greek letters
\newcommand{\abs}[1]{\left| #1 \right|}
\newcommand{\norm}[1]{\left\| #1 \right\|} % for norm
\newcommand{\trace}{\mathrm{Tr}}
\newcommand{\sM}{{\textsf{M}}}
\newcommand{\sU}{{\textsf{U}}}
\newcommand{\tp}{{\tilde p}}
\newcommand{\mS}{{\mathcal{S}}}
\newcommand{\mA}{{\mathcal{A}}}
\newcommand{\mC}{{\mathcal{C}}}
\newcommand{\mU}{{\mathcal{U}}}
\newcommand{\mN}{{\mathcal{N}}}
\newcommand{\mM}{{\mathcal{M}}}
\newcommand{\mR}{{\mathcal{R}}}
\newcommand{\frakm}{{\mathfrak{m}}}
\newcommand{\tK}{{\tilde K}}
\newcommand{\ubeta}{\barbelow{\beta}}
\newcommand{\ub}{\barbelow{b}}
\newcommand{\ugamma}{\barbelow{\gamma}}
\newcommand{\mE}{{\mathcal{E}}}
\newcommand{\id}{{\mathbbm{1}}}
\newcommand{\frakh}{{\mathfrak{h}}}
\newcommand{\vh}{{\v{h}}}
\newcommand{\vi}{{\gv{i}}}
\newcommand{\vm}{{\gv{m}}}
\newcommand{\ve}{{\gv{e}}}
\newcommand{\vv}{{\gv{v}}}
\newcommand{\vt}{{\gv{t}}}
\newcommand{\vw}{{\gv{w}}}
\newcommand{\vs}{{\gv{s}}}
\newcommand{\va}{{\gv{a}}}
\newcommand{\vb}{{\gv{b}}}
\newcommand{\vsig}{{\gv{\sigma}}}
\newcommand{\vvsig}{{\gv{\varsigma}}}
\newcommand{\vK}{{\v{K}}}
\newcommand{\bR}{{\mathbb{R}}}
\newcommand{\bE}{{\mathbb{E}}}
\newcommand{\bP}{{\mathbb{P}}}
\newcommand{\bC}{{\mathbb{C}}}
\renewcommand{\Re}{{\mathrm{Re}}}
\renewcommand{\Im}{{\mathrm{Im}}}
\renewcommand{\epsilon}{\varepsilon}
\newcommand{\appropto}{\mathrel{\vcenter{
  \offinterlineskip\halign{\hfil$##$\cr
    \propto\cr\noalign{\kern2pt}\sim\cr\noalign{\kern-2pt}}}}}
\definecolor{fluorescentpink}{rgb}{1.0, 0.08, 0.58}
\let\baraccent=\= % rename builtin command \= to \baraccent
\renewcommand{\=}[1]{\stackrel{#1}{=}} % for putting numbers above =
\newcommand{\thmref}[1]{\hyperref[#1]{Theorem~\ref{#1}}}
\newcommand{\lemmaref}[1]{\hyperref[#1]{Lemma~\ref{#1}}}
\newcommand{\propref}[1]{\hyperref[#1]{Proposition~\ref{#1}}}
\newcommand{\corollaryref}[1]{\hyperref[#1]{Corollary~\ref{#1}}}
\newcommand{\figref}[1]{\hyperref[#1]{Fig.~\ref{#1}}}
\newcommand{\tabref}[1]{\hyperref[#1]{Table~\ref{#1}}}
\newcommand{\figaref}[1]{\hyperref[#1]{Fig.~\ref{#1}(a)}}
\newcommand{\figbref}[1]{\hyperref[#1]{Fig.~\ref{#1}(b)}}
\newcommand{\figcref}[1]{\hyperref[#1]{Fig.~\ref{#1}(c)}}
\newcommand{\figdref}[1]{\hyperref[#1]{Fig.~\ref{#1}(d)}}
\newcommand{\figeref}[1]{\hyperref[#1]{Fig.~\ref{#1}(e)}}
\newcommand{\figfref}[1]{\hyperref[#1]{Fig.~\ref{#1}(f)}}
\renewcommand{\eqref}[1]{\hyperref[#1]{Eq.~(\ref{#1})}}
\newcommand{\secref}[1]{\hyperref[#1]{Sec.~\ref{#1}}}
\newcommand{\eqsref}[2]{\hyperref[#1]{Eqs.~(\ref{#1})-(\ref{#2})}}
\newcommand{\appref}[1]{\hyperref[#1]{Appx.~\ref{#1}}}
\begin{document}

\title{Limits of noisy quantum metrology with restricted quantum controls}

\author{Sisi Zhou}\email{sisi.zhou26@gmail.com}
\affiliation{Perimeter Institute for Theoretical Physics, Waterloo, Ontario N2L 2Y5, Canada}
\affiliation{Institute for Quantum Information and Matter, California Institute of Technology, Pasadena, CA 91125, USA}
\affiliation{Department of Physics and Astronomy and Institute for Quantum Computing, University of Waterloo, Ontario N2L 3G1, Canada}

\date{\today}

\begin{abstract}
The Heisenberg limit (HL, with estimation error scales as $1/n$) and the standard quantum limit (SQL, $\propto 1/\sqrt{n}$) are two fundamental limits in estimating an unknown parameter in $n$ copies of quantum channels and are achievable with full quantum controls, e.g., quantum error correction (QEC). It is unknown though, whether these limits are still achievable in restricted quantum devices when QEC is unavailable, e.g., with only unitary controls or bounded system sizes. In this Letter, we discover various new limits for estimating qubit channels under restrictive controls. The HL is shown to be unachievable in various cases, indicating the necessity of QEC in achieving the HL. Furthermore, a necessary and sufficient condition to achieve the SQL is determined, where a single-qubit unitary control protocol is identified to achieve the SQL for certain types of noisy channels, and for other cases a constant floor on the estimation error is proven. A practical example of the unitary protocol is provided. 
\end{abstract}

\maketitle

\emph{Introduction.---}
Quantum metrology studies parameter estimation of small unknown parameters in quantum systems~\cite{giovannetti2011advances,degen2017quantum,pezze2018quantum,pirandola2018advances}. It has a wide range of application scenarios, e.g., graviational wave detection~\cite{caves1981quantum,yurke19862,ligo2011gravitational,ligo2013enhanced}, quantum imaging~\cite{le2013optical,lemos2014quantum,tsang2016quantum,abobeih2019atomic}, quantum magnetometry~\cite{wineland1992spin,bollinger1996optimal,leibfried2004toward,taylor2008high,zhou2020quantum}, atomic clocks~\cite{rosenband2008frequency,appel2009mesoscopic,ludlow2015optical,kaubruegger2021quantum,marciniak2022optimal}, etc. 
A typical task in quantum metrology is to estimate an unknown parameter encoded in a one-parameter quantum channel given $n$ copies of it. The Heisenberg limit (HL) is the ultimate estimation limit~\cite{giovannetti2006quantum}, stated as $\Delta \hat\theta \propto 1/n$, where $\Delta \hat\theta$ is the estimation precision of an unknown parameter $\theta$. Chasing the HL is a central goal in quantum enhanced metrology. 

To achieve the HL for unitary channels, one can either prepare an entangled state in a large multi-probe system (e.g., the GHZ state~\cite{giovannetti2004quantum,giovannetti2006quantum} or the spin-squeezed state~\cite{wineland1992spin,kitagawa1993squeezed}) or apply quantum channels sequentially on a single probe and measure after a long time of evolution~\cite{ramsey1950molecular}. However, in the presence of noise, the standard quantum limit (SQL), $\Delta \hat\theta \propto 1/\sqrt{n}$ sometimes becomes the optimal limit one can achieve~\cite{huelga1997improvement,ulam2001spin,escher2011general,demkowicz2012elusive,demkowicz2014using}. The SQL is considered more accessible than the HL and can be achieved using only product states or repeated measurements. Recent works~\cite{sekatski2017quantum,demkowicz2017adaptive,zhou2018achieving,zhou2021asymptotic} revealed a necessary and sufficient condition to achieve of the HL using quantum controls, namely, the ``Hamiltonian-not-in-Kraus-span'' (HNKS) condition~\cite{sekatski2017quantum,demkowicz2017adaptive,zhou2018achieving,zhou2021asymptotic}. When satisfied, there exist quantum error correction (QEC) protocols achieving the HL~\cite{kessler2014quantum,arrad2014increasing,unden2016quantum,dur2014improved,zhou2021asymptotic}. 
These protocols, however, are resource demanding in that it requires a real-time measurement and feed-forward QEC procedure, along with either an noiseless ancillary system or a long-range entangled multi-probe system, making experimental demonstration particularly difficult in practice. 

In this Letter, we investigate the metrological limits for noisy quantum metrology with less resource-demanding, restricted quantum controls. We consider estimating one-parameter qubit channels when the controls are unital and show that for any noisy qubit channels (i.e., non-unitary channels), the metrological limit cannot surpass the SQL. We also consider another scenario with full controls but no noiseless ancillary qubits and again find the HL to be unattainable. 

Besides the unachievability of the HL, we also resolve the achievability of the SQL. Remarkably, we find a necessary and sufficient condition called the RGNKS condition, which is similar but less stringent than the HNKS condition, for achieving the SQL using restricted quantum controls. When it holds, we show how to achieve the SQL using single-qubit unitary controls; when violated, we show the estimation precision has a constant floor with only unital controls. We finish with an example demonstrating our SQL-achieving unitary control protocol, which achieves better estimation precision than the traditional repeated measurement protocol without controls, even under state preparation and measurement (SPAM) noise. 

\emph{Setting.---}
Consider a one-parameter quantum state $\rho_\theta$, the estimation error $\Delta \hat\theta$ of an unbiased estimator $\hat\theta$, defined to be the standard deviation of $\hat \theta$ at the true value $\theta$ (assumed to be $\theta = 0$ in this Letter for simplicity~\footnote{We will use $\dot{f_\theta}$ to denote the derivative of any function $f_\theta$ with respect to $\theta$ at $\theta = 0$. We will also implicitly assume functions of $\theta$ are taken at its true value $\theta = 0$ and sometimes drop the subscript $_\theta$ when there is no ambiguity.}), satisfies the quantum Cram\'er--Rao bound~\cite{holevo2011probabilistic,helstrom1976quantum} $\Delta \hat\theta \geq {1}/{\sqrt{N_{\rm expr} F(\rho_\theta)}}$, where $N_{\rm expr}$ is the number of repeated experiments, and $F(\rho_\theta)$ is the quantum Fisher information (QFI)~\cite{braunstein1994statistical,paris2009quantum,petz1996geometries}. The bound is often saturable as $N_{\rm expr} \rightarrow \infty$ using maximum likelihood estimators~\cite{barndorff2000fisher,kay1993fundamentals,casella2002statistical,kobayashi2011probability} and an optimal choice of positive operator-valued measure (POVM) on $\rho_\theta$~\cite{braunstein1994statistical}, making QFI a suitable measure of the sensitivity of a quantum sensor. For detailed preliminaries on QFI and related theories, see~\cite{SM}. 
\nocite{kolodynski2013efficient,kolodynski2014precision,katariya2020geometric,helstrom1967minimum,yuan2017fidelity,perez2006contractivity}

\begin{figure}[tbp]
    \centering
    \includegraphics[width=0.47\textwidth]{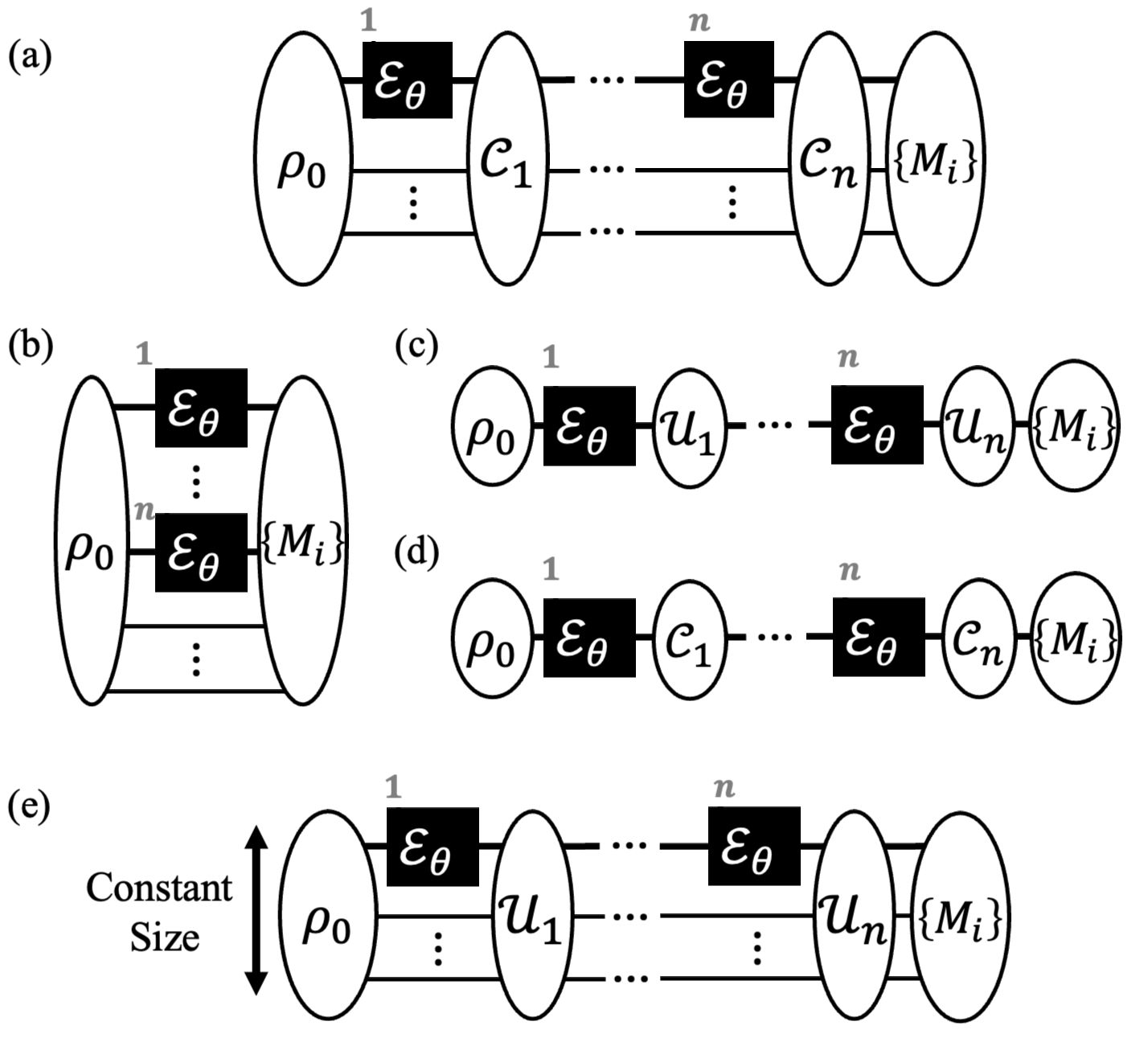}
    \caption{Different quantum metrological strategies for channel estimation. (a) Ancilla-assisted sequential strategy. (b) Parallel strategy. (c) Ancilla-free sequential strategy with unital controls. (d) Ancilla-free sequential strategy. (e) Bounded-ancilla-assisted sequential strategy with unital controls. The inclusion relations of them are: (a) $\supseteq$ (b), and (a) $\supseteq$ (d) (or (e)) $\supseteq$ (c), where A $\supseteq$ B means strategy A can simulate strategy B. Here $\mE_\theta$ is the one-parameter quantum channel to be estimated. $\rho_0$, $\{M_i\}$, $\mC_k$ and $\mU_k$ can be arbitrary quantum states, POVM, channels and unital channels, which function as quantum controls to enhance the estimation precision. The size of the noiseless ancilla is unbounded except in (e).}
    \label{fig:strategies}
\end{figure}

Given multiple copies of one-parameter quantum channels (i.e., completely positive and trace preserving (CPTP) maps~\cite{nielsen2002quantum,watrous2018theory}) $\mE_\theta$, it is a central topic to study how to maximize the QFI of output states with respect to $n$, the number of times $\mE_\theta$ is applied in the system. 
The most powerful strategy we consider is the \emph{ancilla-assisted sequential strategy} (see \figaref{fig:strategies}), where an unbounded noiseless ancilla~\footnote{Later on, when we say \emph{ancilla} we usually mean the noiseless ancillary system that extends the probe system to a larger Hilbert space. We do not explicitly discuss the ancillary qubits used in practice to implement CPTP controls, e.g., through Stinespring dilation.} is available and arbitrary quantum channels, as quantum controls, are applied sequentially. 
It includes the \emph{parallel strategy} (see \figbref{fig:strategies}), where the one-parameter channel acts once on each probe of the arbitrarily large multi-probe system in parallel.

\begin{table*}[htbp]
\centering
\includegraphics[width=0.75\linewidth]{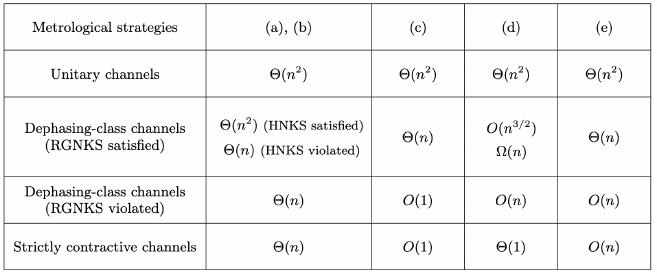}
\caption{Summary: Scalings of the QFI for different metrological strategies and classes of one-parameter qubit channels. }
\label{table}
\end{table*}

Previous works~\cite{sekatski2017quantum,demkowicz2017adaptive,zhou2018achieving,zhou2021asymptotic} studied strategies (a) and (b) comprehensively. The optimal QFI achievable using these strategies follows either the HL (QFI $= \Theta(n^2)$~\footnote{We use the big-O notation for non-negative functions where $f(n) = O(g(n))$ means $f(n) \leq c g(n)$ for some constant $c$ and large enough $n$. Similarly, $\Omega$, $\Theta$ and $o$ are also used in a conventional manner.}), or the SQL (QFI = $\Theta(n)$). Given a quantum channel $\mE_\theta = i \sum_i K_{\theta,i}(\cdot)K_{\theta,i}^\dagger$ represented using Kraus operators, the HL is achievable if and only if the HNKS condition holds, i.e., $H \notin \mS$, where $H := i\sum_{j} K_{\theta,j}^\dagger \dot K_{\theta,j}$ is the so-called Hamiltonian representing the total quantum signal, and $\mS := {\rm span}\{K_{\theta,i}^\dagger K_{\theta,j},\forall i,j\}$ is the Kraus span representing the quantum noise, where ${\rm span}\{\cdot\}$ denotes all Hermitian operators spanned by operators in $\{\cdot\}$~\footnote{Note that although $H$ is not uniquely defined given a quantum channel, $H\notin\mS$ is a well-defined condition with no ambiguity.}. However, to achieve the HL when the HNKS condition holds, perfect ancilla-assisted or multi-probe QEC is required, making it difficult to realize in practice. 

In this Letter, we comprehensively study the metrological limits (i.e., the optimally achievable scalings of the output QFI with respect to $n$) with less demanding restrictive quantum controls for estimating one-parameter \emph{qubit} quantum channels, which is one of the most commonly encountered types of metrological scenarios. \figcref{fig:strategies} shows the \emph{ancilla-free sequential strategy with unital controls}, which is the most restrictive strategy we consider where no noiseless ancilla is allowed and only unital quantum controls are allowed. \figdref{fig:strategies} shows the \emph{ancilla-free sequential strategy} that allows general CPTP controls compared to the previous one. \figeref{fig:strategies} shows the \emph{bounded-ancilla-assisted sequential strategy with unital controls} which allows a bounded noiseless ancilla but only unital controls.

These strategies, with minimum requirements on control types or system size, characterize practical limitations of quantum devices. Strategy (c) can characterize devices where two-qubit gates are too noisy and only single-qubit gates are available. Strategy (d) can characterize devices where CPTP controls are implementable using two-qubit gates and short-lived ancillary qubits (e.g. through Stinepring dilation), while long-lived (noiseless) ancillary qubits are unavailable. Strategy (e) can simulate an important multi-probe sensing scenario where $\mE_\theta^{\otimes k}$ is applied on a $k$-probe system for $n/k$ steps with interleaved unital controls, where $k$ is a constant.  

It is challenging to identify the metrological limits achievable using these strategies. First, previous works~\cite{escher2011general,demkowicz2012elusive,demkowicz2014using,demkowicz2017adaptive,zhou2018achieving,zhou2021asymptotic,wan2022bounds,kurdzialek2022using,altherr2021quantum,liu2023optimal} on deriving the metrological limits for strategies (a) and (b) relied heavily on the channel extension method~\cite{fujiwara2008fibre}, which requires extending the quantum system using noiseless ancilla and is thus not directly applicable in our case. Second, we do not allow mid-circuit measurements in these restricted strategies, and even the SQL that is considered trivially achievable using repeated measurements may not be achievable in our case. 

\emph{Results.---}
Here we present the results of our work, as summarized in \tabref{table}, which shows the scalings of the QFI of metrological strategies with $n$ one-parameter qubit channels.

We first explain the classification of qubit channels. Any qubit channel $\mE$ belongs to one of the three classes~\cite{king2001minimal,SM}: (1)~\emph{Unitary channel} that is described by a unitary operator acting on the system; (2)~\emph{Dephasing-class channel} that is equivalent to a dephasing channel up to unitary rotations, e.g., $V \mE (U \rho U^\dagger)V^\dagger = (1-p) \rho + p Z \rho Z$ for some unitaries $U$ and $V$ where $Z$ is the Pauli-Z operator and $p \in (0,1/2]$ is the noise probability; (3)~\emph{Strictly contractive channel} that satisfies $\norm{\mE(A)}_1 < \norm{A}_1$ for any traceless Hermitian operator, where $\norm{\cdot}_1$ is the trace norm. In particular, any qubit channel $\mE$ can be represented through its action on $\rho = \frac{1}{2}(\id + \vw\cdot\vsig)$ ($\vw \in \bR^3$), that is, 
\begin{equation}
\label{eq:qubit-main}
V \mE (U \rho U^\dagger)V^\dagger = \frac{1}{2}(\id + (\vs + \Lambda\vw)\cdot\vsig),
\end{equation}
where $\vsig = (X,Y,Z)$ is the vector of Pauli matrices and $\Lambda$ is a diagonal matrix~\cite{king2001minimal}. Unitary, dephasing-class, and strictly contractive channels correspond to the cases where the singular values of $\Lambda$ are all equal to $1$, have one element equal to $1$, and are all smaller than $1$.

For one-parameter qubit channels $\mE_\theta$, we further divide them based on the dependence of $\mE_\theta$ on $\theta$. We define the \emph{``rotation-generators-not-in-Kraus-span''} (RGNKS) condition to be that there exists a choice of unitary rotations $U_\theta$ and $V_\theta$ in \eqref{eq:qubit-main} such that either $H_0:= -iU_\theta^\dagger \dot U_\theta \notin \mS$ or $H_1:= -iV_\theta^\dagger \dot V_\theta \notin \mS$. $H_{0,1}$ can be interpreted as the quantum signals before and after the noise channel. (Note that in this Letter we assume regularity conditions that the chosen $U_\theta,V_\theta,\vs_\theta,\Lambda_\theta$ are differentiable; and $\mE_\theta$ remains the same class of qubit channels in the neighborhood of $\theta = 0$.)

The RGNKS condition is a necessary condition of the HNKS condition and they behave similarly. Both conditions are satisfied for non-trivial unitary channels, and are violated for strictly contractive channels (as we will see later). For dephasing-class channels, $\mS = {\rm span}\{\id,UZU^\dagger\}$ and $H = U ((1-p_\theta)H_1+p_\theta ZH_1Z)U^\dagger + H_0$. The RGNKS does not imply the HNKS. An example is 
\begin{equation}
\label{eq:example}
    \mE_\theta(\cdot) = e^{-i\theta X} \big( (1-p)(\cdot) + p Z(\cdot)Z \big) e^{i\theta X},
\end{equation}
where $p$ does not depend on $\theta$, which is a composition of dephasing noise and Pauli-X rotation (see also e.g., \cite{kessler2014quantum,arrad2014increasing,dur2014improved,unden2016quantum}). The HNKS and RGNKS both hold when $p < 1/2$, but when $p = 1/2$ the RGNKS condition holds while the HNKS fails. Note that if $e^{-i\theta X}$ in \eqref{eq:example} is replaced by Pauli-Z rotation $e^{-i\theta Z}$, $\mE_\theta$ then violates both the HNKS and RGNKS. 

We have explained above the classification of qubit channels, as shown in \tabref{table}. It consists of two essential groups of results that we present below: (1)~For all noisy qubit channels, the HL is not achievable under restricted strategies. (2)~The RGNKS condition determines the achievability of the SQL under restricted strategies. 

\emph{Unachievability of the HL.---}
Here we summarize the results on the unachievability of the HL for noisy qubit channels, with detailed proofs in~\cite{SM}. 
\begin{theorem-main}
\label{thm:HL-contractive}
For strictly contractive qubit channels, the HNKS condition is violated, and it implies a QFI upper bound of $O(n)$ using all strategies in \figref{fig:strategies}. 
\end{theorem-main}
\begin{theorem-main}
\label{thm:HL-dephasing}
For dephasing-class channels, the QFI has an upper bound of $O(n)$ using strategies (c) and (e); and an upper bound of $O(n^{3/2})$ using strategies (d). 
\end{theorem-main}

\thmref{thm:HL-contractive} and \thmref{thm:HL-dephasing} establish the unachievability of the HL for all noisy qubit channels. The SQL cannot be surpassed in most cases, but note that when applying strategy (d) to dephasing-class channels satisfying the RGNKS, the possibility of a QFI up to $O(n^{3/2})$ has not been ruled out, according to \thmref{thm:HL-dephasing}.

To prove \thmref{thm:HL-contractive}, we decompose the Kraus operators of the quantum channel into linear combinations of Pauli operators, and prove the Kraus span $\mS$ contains all Hermitian operators, which implies the violation of the HNKS condition. Note that a similar technique was used in~\cite{sekatski2017quantum} for a restrictive scenario. 

To prove \thmref{thm:HL-dephasing}, we further develop and refine the channel extension method~\cite{fujiwara2008fibre,escher2011general,demkowicz2012elusive}. Taking the channel in \eqref{eq:example} as an example, for strategies (c), we show the output state $\rho_\theta$ satisfies 
\begin{equation}
\label{eq:upper-main}
    F(\rho_\theta) \leq \sum_{k=1}^n 4 \trace(\alpha_k) + \sum_{k=1}^{n-1} 8 \trace(\ugamma_k \beta_{k+1}) = O(n),
\end{equation}
where $\alpha_k = \alpha = \id$, $\beta_k = \beta = (1-2p) X$ and $\ugamma_k$ is defined recursively via $\ugamma_k = \mU_k\circ((1-p)\ugamma_{k} + p Z \ugamma_{k} Z + X)$ and $\ugamma_0 = 0$. (For general qubit channels, $\alpha_k,\beta_k$ and $\ugamma_k$ are all functions of the Kraus operators of $\mU_k \circ \mE_\theta$.) Compared to the traditional channel extension method which gives~\cite{kurdzialek2024quantum} 
\begin{equation}
    F(\rho_\theta) \leq 4 n \norm{\alpha} + 4 n(n-1) (\norm{\beta}^2 + o(1)) = O(n^2), 
\end{equation}
where $\norm{\cdot}$ is the operator norm, our method utilizes the algebraic structure (orthogonality of Pauli operators) in the recurrence relation of $\ugamma_k$ which allows the second term in the bound to be tightened to $O(n)$. The SQL upper bound is proven in a similar manner for stratgies (e). For strategies (d), however, $\trace(\alpha_k)$ in \eqref{eq:upper-main} is modified to $\trace(\iota_{k-1}\alpha_k)$, 
where $\iota_{k} = \mC_{k} \circ \mE_{\theta} \circ \cdots \circ \mC_{1} \circ \mE_{\theta} (\id)$ is no longer necessarily equal to $\id$, and the recurrence relation of $\ugamma_k$ also depends on $\iota_{k-1}$. As a result, we can only show an upper bound of $O(n^{3/2})$. The proof consists of two steps: (1)~We first prove an upper bound of $O(n^{3/2})$ when controls are not too non-unital, i.e., $\abs{\trace(\iota_k Z)} \leq 1$ for all $k$, by proving $\norm{\ugamma_k} = O(k^{1/2})$; (2)~In general cases, we partition the channel into segments of not-too-non-unital channels and then use a second layer of the channel extension method to bound the QFI of the entire channel, viewing each segment as individual channels.

\emph{Achievability of the SQL.---}
It remains to answer whether the above upper bounds are achievable using restricted strategies. The results are summarized below with detailed proofs in~\cite{SM}. 
\begin{theorem-main}
\label{thm:SQL-1}
When the RGNKS condition is satisfied, there exists a single-qubit gate as the control operations in strategies (c) that achieves a QFI of $\Theta(n)$.
\end{theorem-main}
\begin{theorem-main}
\label{thm:SQL-2}
For dephasing-class channels that violates the RGNKS condition, the QFI is at most $O(1)$ using strategies (c).
\end{theorem-main}
\begin{theorem-main}
\label{thm:SQL-3}
For strictly contractive qubit channels, the QFI is at most $O(1)$ using strategies (c) and (d).
\end{theorem-main}

\thmref{thm:SQL-1} provides a single-qubit unitary control that achieves the SQL when the RGNKS is satisfied, and this protocol will be illustrated in an example later. \thmref{thm:SQL-2} and \thmref{thm:SQL-3} are no-go results showing the constant QFI upper bounds when the RGNKS condition is violated. In the case of \thmref{thm:SQL-2}, the channel, up to unitary rotations, must be a composition of Pauli-Z rotation and dephasing noise, and the constant QFI bound is proven through the contraction relation of the norm of the Bloch vector at each step. 

\thmref{thm:SQL-3} applies more generally to CPTP controls. To prove it, we use the relation between the QFI and the Bures distance~\cite{zhou2019exact,hubner1992explicit}, 
$F(\rho_\theta) = \lim_{d\theta\rightarrow 0}\frac{d_B^2(\rho_{\theta+d\theta},\rho_{\theta-d\theta})}{(d\theta)^2}$, to prove for any (qudit) channel $\mE_\theta$, 
\begin{equation}
\label{eq:contraction-main}
    F(\mC_n\circ\mE_\theta\circ\cdots \circ \mC_1\circ\mE_\theta(\rho_0)) \leq \sum_{k=0}^{n-1}\eta(\mE)^{n-k-1} F(\mE_\theta),
\end{equation}
where $\eta(\cdot)$ is the contraction coefficient with respect to QFI, defined by $\eta(\mN) := \sup_{\sigma_\theta} \frac{F(\mN(\sigma_\theta))}{F(\sigma_\theta)}$, viewing $\mN$ as a parameter-independent channel, and $F(\mE_\theta) := \sup_{\sigma} F(\mE_\theta(\sigma))$. For all strictly contractive qubit channels, we prove $\eta(\mE) < 1$, which leads to the constant upper bound of the QFI. Note that $\eta(\mE)$ is usually difficult to compute~\cite{hiai2016contraction}, but we manage to bound it using the notion of quantum preprocessing-optimized FI~\cite{len2021quantum,zhou2023optimal}. This technique might find applications in other metrological scenarios.  

\emph{Example.---} Here we analyze in detail an example with dephasing noise and Pauli-X rotation, where $\mE_\theta$ is \eqref{eq:example}. It satisfies the HNKS condition, i.e., $H = (1-2p) X \notin \mS = {\rm span}\{\id,Z\}$ when $p < 1/2$. A QFI of $4(1-2p)^2n^2$ is achievable through QEC. In strategy (a), one can use the two-qubit repetition code with input state $\frac{1}{\sqrt{2}}(\ket{+}\ket{0}_A+\ket{-}\ket{1}_A)$~\cite{kessler2014quantum,arrad2014increasing,unden2016quantum}, where $\ket{\pm} = \frac{1}{\sqrt{2}}(\ket{0}\pm\ket{1})$ and $\ket{\cdot}_A$ is a noiseless ancillary qubit. Each step Pauli-Z errors are corrected by measuring $X \otimes Z_A$ and applying $Z \otimes \id_A$ whenever the syndrome is $-1$. 
Similarly, one can also use the multi-probe repetition code $\frac{1}{\sqrt{2}}(\ket{+}^{\otimes n}+\ket{-}^{\otimes n})$ in strategy (b) to achieve the HL~\cite{dur2014improved}. 

\begin{figure}[tbp]
    \centering
    \includegraphics[width=0.47\textwidth]{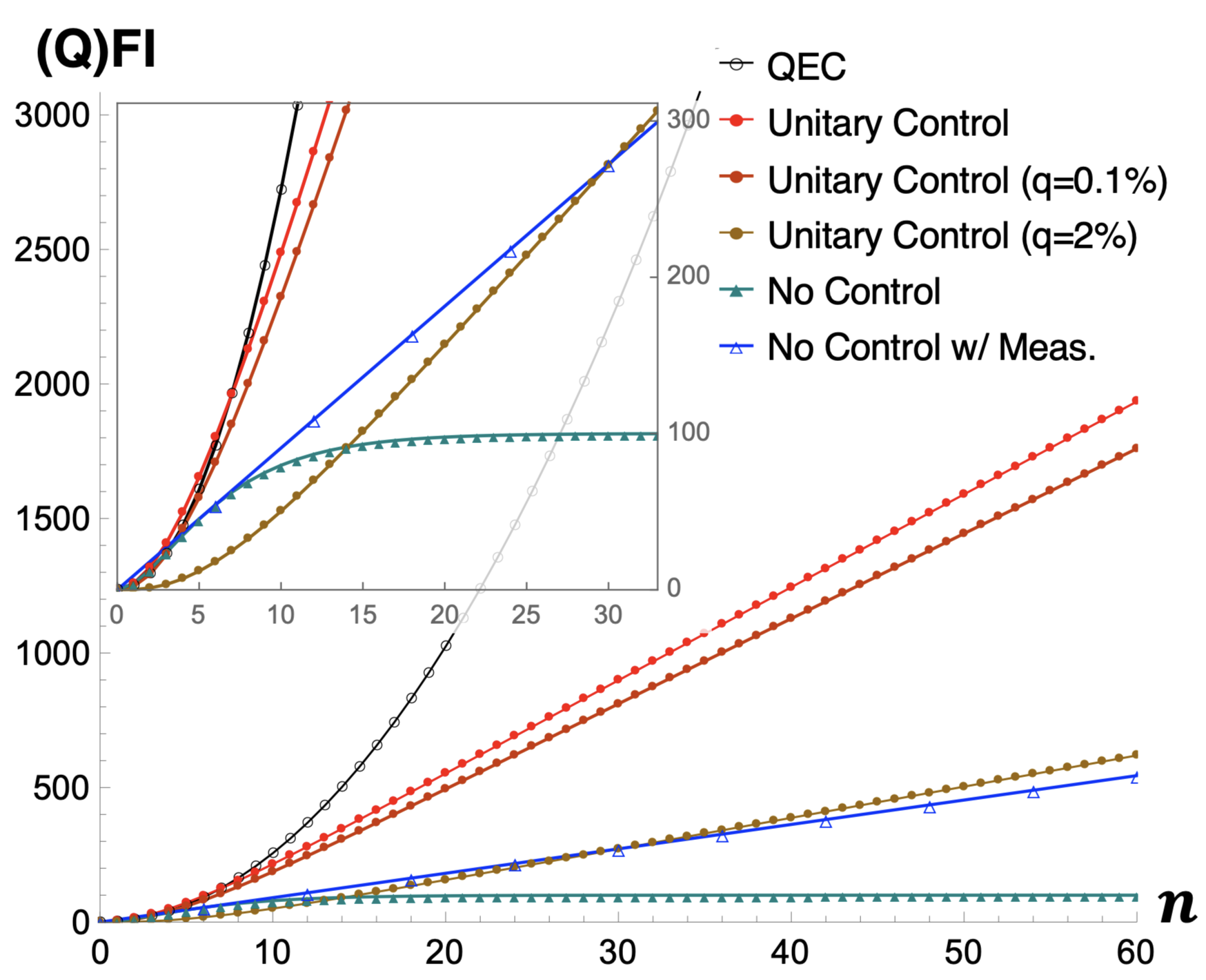}
    \caption{QFI (or FI) of different metrological strategies using $n$ copies of channel \eqref{eq:example} when $p = 0.1$. From top to bottom, the strategies are QEC (empty circle, black), unitary control protocols ($w = 0.01$) with SPAM noise rates $q = 0$, $0.1\%$ and $2\%$ (circle, red/dark red/brown), repeated measurement protocol without control (empty triangle, blue) and the naive protocol without control or intermediate measurement (triangle, teal). The QEC protocol achieves the HL, the unitary control protocols and the repeated measurement protocol achieve the SQL, and the protocol without control or intermediate measurement achieves only a constant QFI. The inset plot zooms in and shows the crossovers of curves when the (Q)FIs are small. } 
    \label{fig:example}
\end{figure}

With restricted strategies, the HL is unachievable, but the SQL is achievable using single-qubit unitary controls. Here, we let the input state be $\ket{0}$, and apply the same $U = \exp({-i \frac{1}{2}\sqrt{\frac{w}{n}} X})$ each step after $\mE_\theta$. (Here $w$ can be any positive constant, and the asymptotic behavior of the QFI is optimal when $w \ll p$.) 
%To roughly see why it works, consider the evolutions of the Bloch vector $\vv$ and its derivative $\dot\vv$. Each step, $\vv$ is rotated around X-axis by an angle of $\sqrt{\frac{w}{n}}$ which leads to $\Theta(\frac{1}{\sqrt{n}})$ in the Y-component of $\vv$, and thus an increase in the Z-component of $\dot\vv$ by $\Theta(\frac{1}{\sqrt{n}})$, which is unaffected by dephasing noise. After $n$ steps, the derivative $\dot\vv$ accumulates to $\Theta(\sqrt{n})$. 
Consider the evolution of the Bloch vector $\vv$ and its derivative $\dot\vv$. After a sufficiently long evolution, the Y-component of $\vv$ is $\Theta(\frac{1}{\sqrt{n}})$, and each step the Z-component of $\dot\vv$ is increased by $\Theta(\frac{1}{\sqrt{n}})$, robust under dephasing noise. After $n$ steps, the derivative $\dot\vv$ accumulates to $\Theta(\sqrt{n})$. 
When $w \ll p$, the output QFI $\propto \frac{4(1-p)}{p} n$. 
\figref{fig:example} shows the linear scaling emerges at a small $n$ using our unitary control, but without control the QFI approaches only a constant for large $n$. 

Our unitary protocol applies measurement only in the end. If instead, we are allowed to measure the state at any intermediate steps, the SQL is achievable through repeated measurements. The blue curve in \figref{fig:example} shows a repeated measurement protocol without control where the state is measured using the optimal basis and then reset to $\ket{0}$ every $6$ steps ($6$ is the optimal choice here). The QFI is equal to the number of intervals multiplied by the QFI of each interval. Remarkably, our unitary protocol significantly outperforms the repeated measurement protocol. 
We also plot the Fisher information (FI) of our unitary protocol with noisy state preparation and measurement (SPAM). We assume the input state is $(1-q)\ket{0}\bra{0} + q \ket{1}\bra{1}$, fix the POVM to be $\{M = (1-q)\ket{0}\bra{0} + q \ket{1}\bra{1},\id-M\}$ and take $q = 0.1\%$ and $2\%$. Even with 2\% SPAM noise, our protocol outperforms the noiseless repeated measurement protocol at large $n$. 

In practice, the readout noise is usually the dominant source of noise (compared to gate noise) in quantum devices~\cite{krantz2019quantum,bruzewicz2019trapped,dutt2007quantum,jiang2009repetitive}, and our simulation shows the advantage of our protocol at 2\% readout noise, a value achievable using current quantum devices~\cite{ibm2024}. 
In addition, our unitary control protocol may have the advantage over the repeated measurement protocol in terms of saving the probing time. For example, for the IBMQ device \textsf{ibm\_sherbrooke}~\cite{ibm2024}, the readout time $\sim$1us is significantly larger than the single-qubit gate time $\sim$10ns.

\emph{Discussion.---}
We studied the metrological limits for qubit channels under restricted quantum controls, where a new hierarchy is formed. It would be interesting to consider generalization to qudit quantum channels where the simple classification is no longer available. However, we expect some of our techniques, e.g., the channel extension method and the contraction coefficient, apply to at least some types of qudit channels, and might be useful for e.g., exploring the achievability of the HL~\cite{liu2024heisenberg}. Another generalization is to include ancilla-free measurement and feed-forward controls in our setting--it is unknown whether the HL is achievable for some types of noisy qubit channels in this case. Finally, it is of practical interest to study how the QFI varies as the size of the ancilla changes~\cite{kurdzialek2024quantum,liu2024efficient}. 

Finally, it is also worth noting the potential connections of our results to quantum channel learning theory (with and without ancilla~\cite{chen2022quantum,chen2023tight,chen2023efficient} or memory~\cite{aharonov2022quantum,huang2022quantum,chen2022exponential}) and to recent no-go results on quantum computation without QEC~\cite{chen2023complexity,tsubouchi2023universal,takagi2023universal,quek2022exponentially}.

%It is worth noting that separations in learning generically unknown quantum channels with different types of quantum controls, e.g., with and without ancilla~\cite{chen2022quantum,chen2023tight,chen2023efficient} or quantum memory~\cite{aharonov2022quantum,huang2022quantum,chen2022exponential}, were also established recently. It would be interesting to investigate the implication of our single-parameter channel estimation results in quantum learning theory. Finally, our results may also be connected to no-go results on quantum computation without QEC~\cite{chen2023complexity,tsubouchi2023universal,takagi2023universal,quek2022exponentially}.

\emph{Acknowledgments.---}
S.Z. thanks Senrui Chen for helpful discussions. S.Z. acknowledges funding provided by Perimeter Institute for Theoretical Physics, a research institute supported in part by the Government of Canada through the Department of Innovation, Science and Economic Development Canada and by the Province of Ontario through the Ministry of Colleges and Universities, and the Institute for Quantum Information and Matter, an NSF Physics Frontiers Center (NSF Grant No. PHY2317110).

\bibliography{refs-small}

\onecolumngrid
\newpage
\appendix

%\onecolumngrid

\setcounter{theorem}{0}
\setcounter{proposition}{0}
\setcounter{lemma}{0}
\setcounter{figure}{0}
\renewcommand{\thefigure}{S\arabic{figure}}
\renewcommand{\thelemma}{S\arabic{lemma}}
\renewcommand{\thetheorem}{S\arabic{theorem}}
\renewcommand{\thecorollary}{S\arabic{corollary}}
\renewcommand{\theproposition}{S\arabic{proposition}}
\renewcommand{\theHfigure}{Supplement.\arabic{figure}}
\renewcommand{\theHlemma}{Supplement.\arabic{lemma}}
\renewcommand{\theHtheorem}{Supplement.\arabic{theorem}}
\renewcommand{\theHcorollary}{Supplement.\arabic{corollary}}

\setcounter{page}{1}

\begin{center}
\large\textbf{Supplemental Material: Limits of noisy quantum metrology with restricted quantum controls}

\vspace{0.3in}

\normalsize Sisi Zhou
\end{center}

\section{Summary of the Supplemental Material}
\label{app:table}

\begin{itemize}[wide, labelwidth=!, labelindent=0pt]
\item \appref{app:preliminaries} and \appref{app:qubit} include preliminaries on QFI, quantum channel estimation, and classification of qubit channels. 
\item In \appref{app:HNKS}, we prove the HNKS condition is violated for all strictly contractive channels, which implies the optimality of the SQL in strategies (a) and (b) (see the first column in \tabref{table}). We also identify the forms of dephasing-class channels that satisfy the HNKS condition. 
\item In \appref{app:dephasing}, we consider estimating dephasing-class channels. We first prove an upper bound of $O(n)$ on the QFI in strategies (c), i.e., ancilla-free sequential strategies with unital controls. Then we generalize that to strategies (e). Finally, we consider CPTP controls and prove an upper bound of $O(n^{3/2})$ on the QFI in strategies (d). (Note that the results are slightly more general than described here, where we in fact allow the assistance of an arbitrary unbounded ancilla in the initial state and the final measurement, but not the controls inbetween.) %\sisi{Comment on ancilla in initial and final measurement, generalized result}
\item In \appref{app:sql}, we show a metrological protocol that uses sequential strategies with only unitary controls and no ancilla that achieves the SQL for dephasing-class channels when RGNKS holds. It implies a lower bound $\Omega(n)$ of the QFI in strategies (c), (d) and (e). 
\item In \appref{app:dephasing-2}, we prove, to estimate dephasing-class channels in strategies (c), the QFI is at most a constant when RGNKS fails. 
\item In \appref{app:contractive}, we prove for strictly constractive channels, the QFI is at most a constant in strategies (d). 
\end{itemize}

\section{Preliminaries: Quantum Fisher information and quantum channel estimation}
\label{app:preliminaries}

Consider a quantum state $\rho_\theta$ which is a differentiable function of an unknown parameter $\theta$. The function $\rho_\theta$ is assumed to be differentiable around its true value. The value of $\theta$ can be inferred from the measurement statistics of $\rho_\theta$ on a positive operator-valued measure (POVM) $\{M_i\}$ ($M_i$ are positive semi-definite operators that sum up to $\id$). An estimator $\hat\theta$ maps the measurement outcomes to the inferred value of $\theta$. $\hat\theta$ is called a locally unbiased estimator at $\theta = \theta_0$, if $\bE[\hat\theta|\theta]|_{\theta = \theta_0} = \theta_0$ and $\partial_\theta\bE[\hat\theta|\theta]|_{\theta = \theta_0} = 1$, where $\bE[\hat\theta|\theta]$ means the expectation value of the estimator given the quantum state $\rho_\theta$. The Cram\'er--Rao bound~\cite{kay1993fundamentals,casella2002statistical,kobayashi2011probability} provides a lower bound on the estimation precision, i.e., the standard deviation of any locally unbiased estimator at $\theta$ that is 
\begin{equation}
    \Delta \hat\theta := \bE[(\hat\theta-\theta)^2|\theta]^{1/2} \geq \frac{1}{\sqrt{N_{\rm expr} F(\rho_\theta,\{M_i\})}},
\end{equation}
where $N_{\rm expr}$ is the number of repeated experiments and $F(\rho_\theta,\{M_i\}) $ is the (classical) Fisher information (FI) defined by 
\begin{equation}
    F(\rho_\theta,\{M_i\}) := \sum_{i:p_{i,\theta} \neq 0} \frac{(\partial_\theta p_{i,\theta})^2}{p_{i,\theta}} =  \sum_{i:\trace(\rho_\theta M_i) \neq 0} \frac{(\partial_\theta \trace(\rho_\theta M_i))^2}{\trace(\rho_\theta M_i)},
\end{equation}
as a function of the measurement probability distribution $\{p_{i,\theta} =  \trace(\rho_\theta M_i)\}$. The larger the FI, the preciser the parameter estimation can be. The quantum Fisher information (QFI)~\cite{braunstein1994statistical,holevo2011probabilistic,helstrom1976quantum,barndorff2000fisher,petz1996geometries,paris2009quantum} is the maximum FI among all possible POVM. 
It can be calculated as 
\begin{equation}
    F(\rho_\theta) :=\trace(\rho_\theta L^2),
\end{equation}
where $L$ is any Hermitian operator (called the symmetric logarithmic derivative) satisfying $\frac{1}{2}(L\rho_\theta+\rho_\theta L) = \partial_\theta\rho_\theta$. 
Then we have the quantum Cram\'er--Rao bound~\cite{holevo2011probabilistic,helstrom1976quantum,barndorff2000fisher},
\begin{equation}
    \Delta \hat\theta \geq \frac{1}{\sqrt{N_{\rm expr} F(\rho_\theta)}},
\end{equation}
which represents the ultimate estimation precision allowed by quantum mechanics.

The QFI is a nice information-theoretic quantity and below we list a few of its mathematical  properties~\cite{kolodynski2014precision,katariya2020geometric} that are useful. (Below we use $\rho_\theta$, $\sigma_\theta$ to represent arbitrary one-parameter quantum states.) 
\begin{itemize}[wide, labelwidth=!, labelindent=0pt]
    \item Faithfulness. $F(\rho_\theta) \geq 0$ and the equality holds if and only if $\partial_\theta \rho_\theta = 0$. 
    \item Data-processing inequality. $F(\mN(\rho_\theta)) \leq F(\rho_\theta)$ for any parameter-independent CPTP map $\mN(\cdot)$. 
    \item Convexity. Let $\rho_\theta = \sum_i p_i \rho_{\theta,i}$ where $\{p_i\}$ is a parameter-independent probability distribution and $\rho_{\theta,i}$ are density matrices, 
    \begin{equation}
        F(\rho_\theta) \leq \sum_i p_i F(\rho_{\theta,i}). 
    \end{equation}
    \item Additivity. $F(\rho_\theta \otimes \sigma_\theta) = F(\rho_\theta) + F(\sigma_\theta)$. 
    \item Expression of QFI from diagonalization. Let $\rho_\theta = \sum_{i} \lambda_{\theta,i}\ket{\psi_{\theta,i}}\bra{\psi_{\theta,i}}$ where $\lambda_{\theta,i}$ and $\ket{\psi_{\theta,i}}$ are its eigenvalues and eigenstates. 
    \begin{equation}
        F(\rho_\theta) = 2 \sum_{i,j:\lambda_{\theta,i}+\lambda_{\theta,j} \neq 0} \frac{\abs{\bra{\psi_{\theta,i}}\partial_\theta\rho_\theta\ket{\psi_{\theta,j}}}^2}{\lambda_{\theta,i}+\lambda_{\theta,j}}.
    \end{equation}
    In particular, when $\rho_\theta = \ket{\psi_\theta}\bra{\psi_\theta}$ is pure, $
        F(\rho_\theta) = 4 \left(\braket{\partial_\theta\psi_\theta|\partial_\theta\psi_\theta} - \abs{\braket{\psi_\theta|\partial_\theta\psi_\theta}}^2\right)$. 
    \item Relation to the Bures distance~\cite{zhou2019exact,hubner1992explicit}.
    \begin{equation}
    F(\rho_\theta) = \lim_{d\theta \rightarrow 0}\frac{d^2_{\rm B}(\rho_{\theta+d\theta},\rho_{\theta-d\theta})}{(d\theta)^2},    
    \end{equation}
    where $d_{\rm B}$ is the Bures distance defined by $d_{\rm B}(\sigma_1,\sigma_2) = \sqrt{2(1 - \trace\left(\sqrt{\sqrt{\sigma_1}\sigma_2\sqrt{\sigma_1}}\right))}$~\cite{helstrom1967minimum}.   
    \item Purification-based definition~\cite{fujiwara2008fibre,kolodynski2014precision}.     \begin{equation}
    F(\rho_\theta) = 4 \min_{\ket{\Psi_\theta}:\trace_E(\ket{\Psi_\theta}\bra{\Psi_\theta})=\rho_\theta} \braket{\partial_\theta\Psi_\theta|\partial_\theta\Psi_\theta},
    \end{equation}
    where $\ket{\Psi_\theta}$ is any (differentiable) purification state of $\rho_\theta$ when extended to an unbounded environment $E$. 
\end{itemize}

The definition of the QFI of quantum states can be extended to the QFI of quantum channels when the input state of a quantum channel are allowed to be chosen arbitrarily. Given a one-parameter quantum channel $\mE_\theta$, the channel QFI is~\cite{kolodynski2013efficient}
\begin{equation}
    F(\mE_\theta) := \sup_{\rho}F(\mE_\theta(\rho)) = \sup_{\psi = \ket{\psi}\bra{\psi}}F(\mE_\theta(\psi)), 
\end{equation}
where we use the convexity of the QFI in the second equality. We will use $F(\mE_\theta \otimes \id)$ to denote the ancilla-assisted channel QFI~\cite{fujiwara2008fibre} (Here $\id$ represents an identity quantum channel on an unbounded ancillary system), which means the input state $\rho$ can be chosen as an entangled state between the probe system and an ancillary system. Clearly, $F(\mE_\theta \otimes \id) \geq F(\mE_\theta)$. 

The purification-based definition of QFI can be applied to efficiently compute the ancilla-assisted channel QFI $F(\mE_\theta \otimes \id)$. We explain how it works, which includes some techniques that we will adopt later. (In this Letter, we consider only one-parameter quantum channels that can be represented using differentiable Kraus operators.) Let $\mE_\theta(\cdot) = \sum_{i=1}^r K_{\theta,i}(\cdot)K_{\theta,i}^\dagger$. Any purification $\ket{\Psi_\theta}$ of $\mE_\theta(\psi)$ has the form~\cite{nielsen2002quantum} 
    \begin{equation}
    \label{eq:purification-channel}
    \ket{\Psi_\theta} 
    = \sum_{i=1}^r K_{\theta,i}\ket{\psi} \otimes U_\theta \ket{i}_E 
    = \sum_{i=1}^r\sum_{j=1}^{r'} u_{\theta,ji} K_{\theta,i}\ket{\psi} \otimes \ket{j}_E
    = \sum_{j=1}^{r'} \tK_{\theta,j}\ket{\psi} \otimes \ket{j}_E,
    \end{equation}
    where $U_\theta$ is an isometric operator satisfying $U_\theta^\dagger U_\theta = \id$, $u_{\theta,ji} = \bra{j}U_\theta\ket{i}$ satisfies $u_\theta^\dagger u_\theta = \id$ and $\tK_{\theta,i} = \sum_{i} u_{\theta,ji} K_{\theta,i}$ is another Kraus representation of $\mE_\theta$.  
    Using the purification-based definition of QFI, we have 
    \begin{equation}
    \label{eq:purification-channel-2}
    \begin{split}
    F(\mE_\theta(\psi)) 
    &= 4 \min_{\ket{\Psi_\theta}\text{ in \eqref{eq:purification-channel}}} \braket{\dot\Psi_\theta|\dot\Psi_\theta} = 4 \min_{U_\theta:U_\theta^\dagger U_\theta = \id} \trace\bigg(\psi \sum_{j} \dot\tK^\dagger_{j} \dot\tK_{j} \bigg)\\
    &= 4 \min_{\text{Hermitian }h} \trace\bigg(\psi \sum_{j} \Big(\dot K_{j} - i \sum_{i} h_{ji}  K_{i}\Big)^\dagger \Big(\dot K_{j} - i \sum_{i} h_{ji}  K_{i}\Big) \bigg),
    \end{split}
    \end{equation}
    where $h = i  u^\dagger_\theta \dot u_\theta$ can be an arbitrary Hermitian operator. In general, we can without loss of generality always assume $u = e^{-ih\theta}$, or equivalently, 
    \begin{equation}
    \tK_{i} = K_{i},\quad 
    \dot\tK_{i} = \dot K_{i} - i\sum_j h_{ij}K_{j}, 
    \end{equation}
    where $h$ can be any Hermitian operators representing the degrees of freedom in choosing different Kraus representations. 
    Then we have    
    \begin{equation}
    \label{eq:channel-QFI}
    {F(\mE_\theta)} = 4\sup_\psi \min_{\text{Hermitian }h} \trace\bigg(\psi \sum_{j} \Big(\dot K_{\theta,j} - i \sum_{i} h_{ji}  K_{\theta,i}\Big)^\dagger \Big(\dot K_{\theta,j} - i \sum_{i} h_{ji}  K_{\theta,i}\Big) \bigg). 
    \end{equation}
    No efficient algorithms to compute \eqref{eq:channel-QFI} were known. However, when an unbounded ancilla is allowed, we have~\cite{fujiwara2008fibre,demkowicz2012elusive,kolodynski2013efficient} 
    \begin{equation}
    \begin{split}
    {F(\mE_\theta\otimes\id)} &= 4\sup_\rho \min_{\text{Hermitian }h} \trace\bigg(\rho \sum_{j} \Big(\dot K_{\theta,j} - i \sum_{i} h_{ji}  K_{\theta,i}\Big)^\dagger \Big(\dot K_{\theta,j} - i \sum_{i} h_{ji}  K_{\theta,i}\Big) \bigg)\\
    &= 4 \min_{\text{Hermitian }h} \bigg\|\sum_{j} \Big(\dot K_{\theta,j} - i \sum_{i} h_{ji}  K_{\theta,i}\Big)^\dagger \Big(\dot K_{\theta,j} - i \sum_{i} h_{ji}  K_{\theta,i}\Big) \bigg\|,  \label{eq:sdp}
    \end{split}
    \end{equation}
    and Sion's minimax theorem is used in the second step to exchange the orders of minimization and maximization. The ancilla-assisted channel QFI in \eqref{eq:sdp} is thus efficiently computable using a semi-definite program. 

Another useful property of the ancilla-assisted channel QFI that we will use later is the chain rule of root QFI~\cite{yuan2017fidelity,katariya2020geometric}, i.e., 
\begin{equation}
    F( (\mM_\theta \circ \mN_\theta) \otimes \id)^{1/2} \leq F(  \mN_\theta \otimes \id)^{1/2} + F( \mM_\theta \otimes \id)^{1/2},
\end{equation}
where $\circ$ represents composition of quantum channels and $\mM_\theta$ and $\mN_\theta$ are arbitrary one-parameter quantum channels.

\section{Classification of qubit channels}
\label{app:qubit}

Any qubit channel $\mE(\cdot)$ can be represented through the Bloch sphere language as~\cite{king2001minimal} 
\begin{equation}
\mE\left(\frac{1}{2}(\id + \vw\cdot\vsig)\right) = \frac{1}{2}(\id + (\vt + T\vw)\cdot\vsig),
\end{equation}
where $\vw \in \bR^3$ is a vector inside the Bloch sphere (i.e., $\norm{\vw}^2 \leq 1$), $\vsig = \begin{pmatrix}X \\ Y \\ Z \end{pmatrix}$ is the vector of Pauli matrices and we use $\vw\cdot\vsig$ to represent the inner product between $\vw$ and $\vsig$, i.e., $\sum_{i=1}^3 v_i \sigma_i = v_1 X + v_2 Y + v_3 Z$. $\vt \in \bR^3$ and $T \in \bR^{3\times 3}$ uniquely determine the qubit channel $\mE$ and vise versa. When $\mE$ is a unital channel, i.e., $\mE(\id) = \id$, $\vt = 0$. 
In order for the channel to be positive, the output vector $\vt + T \vw$ must have norm at most 1, implying $\norm{T} \leq 1$. (However, not every $\vt$ and $T$ satisfying $\norm{\vt + T \vw} \leq 1$ for $\norm{\vw} \leq 1$ is a quantum channel.) 

There always exist unitaries $U$ and $V$ such that~\cite{king2001minimal} 
\begin{equation}
\label{eq:class-def}
V \mE\left(U\frac{1}{2}(\id + \vw\cdot\vsig)U^\dagger\right)V^\dagger = \frac{1}{2}(\id + (\vs + \Lambda\vw)\cdot\vsig),
\end{equation}
where $\Lambda$ is a diagonal matrix. When diagonal elements of $\Lambda$ are $\pm 1$, $\mE$ is a \textbf{unitary} channel. When $\norm{\Lambda} < 1$, we call $\mE$ a  \textbf{strictly contractive} channel, which is a class of channels that includes both depolarizing channels and amplitude damping channels. Otherwise, we call $\mE$ a \textbf{dephasing-class} channel. In this case, $\vs = 0$ and $\Lambda$ has one diagonal entry equal to $1$ and the other two smaller than $1$~\cite{king2001minimal}. We will specifically choose $U$ and $V$ such that the third diagonal element of $\Lambda$ is $1$, and then  
\begin{equation}
\label{eq:dephasing-class}
V \mE(U \rho U^\dagger) V^\dagger = (1-p) \rho + p Z \rho Z,
\end{equation}
which means $\mE$ is equivalent to a qubit dephasing channel up to unitary rotations. 

In channel estimation, we consider channels $\mE_\theta$ that are functions of an unknown parameter $\theta$ around a local point $\theta = 0$. We will assume (1) our one-parameter channels under consideration do not switch to other families in a neighborhood of the local point $\theta = 0$; (2) there exists a parametrization of  $U_\theta,V_\theta,\vs_\theta$ and $\Lambda_\theta$ in \eqref{eq:class-def} such that all of them are differentiable. Finally, without loss of generality, we assume $\trace(U_\theta^\dagger \partial_\theta U_\theta) = \trace(V_\theta^\dagger \partial_\theta  V_\theta) = 0$ because otherwise we can always multiply $U_\theta$ and $V_\theta$ by two $U(1)$ operators respectively which results in a different parametrization that both satisfies the requirement and does not affect the parametrization of the original quantum channel. 

We also state the formal definition of the ``rotation-generators-not-in-Kraus-span'' (RGNKS) condition here. Note that the RGNKS condition is a necessary but not sufficient condition of the HNKS condition. 

\begin{definition}[RGNKS]
    A one-parameter qubit channel satisfies the RGNKS condition if there exists a suitable parametrization (\eqref{eq:class-def}) for it such that either $H_0 \notin \mS$ or $H_1 \notin \mS$, where $H_0 := -i U_\theta^\dagger \partial_\theta U_\theta$ and $H_1 := -i V_\theta^\dagger \partial_\theta V_\theta$ are called the rotation generators of the qubit channel.
\end{definition}

Note that for unitary channels that are not parameter independent, the RGNKS condition holds true and for strictly contractive channels, the RGNKS condition is violated. For dephasing-class channels (\eqref{eq:dephasing-class}), both situations can arise and the RGNKS condition means at least of $H_0$ and $H_1$ is not in $\mS = {\rm span}\{\id,U Z U^\dagger\}$. 

Here we also explain why RGNKS is a necessary condition for HNKS but not vise versa.
For unitary channels, HNKS and RGNKS always hold true when $\dot\mE \neq 0$ and fails when $\dot\mE = 0$. For strictly contractive channels, the Kraus span $\mS$ is the entire matrix space and HNKS and RGNKS always fail. Therefore, we only need to show for dephasing-class channels HNKS $\subsetneq$ RGNKS. To see this, we first assume $\mE_\theta$ has the form in \eqref{eq:one-parameter-dephasing-qubit-channel} which is the most general form of dephasing channels up to constant unitary rotations. Then we can compute the Hamiltonian $H$ using Kraus operators $K_{\theta,0} = \sqrt{1-p_\theta} e^{-iG_0\theta}$ and $K_{\theta,1} = \sqrt{p_\theta} Z e^{-iG_1\theta}$ which gives $H = (1-p)G_0 + p G_1$ and $\mS = \{\id,Z\}$. On the other hand, RGNKS requires that at least one of $G_0,G_1$ is not in $\mS$, which is a necessary condition for HNKS. However, there are easily counter examples where RGNKS holds but HNKS fails. For example, $G_0 = p X$ and $G_1 = -(1-p)X$ leads to $H = 0$, but RGNKS still holds.

\section{HNKS condition: Qubit channels}
\label{app:HNKS}

Here we prove for qubit channels, the HNKS condition can be satisfied only when the channel is unitary or dephasing-class. We also prove a lemma for later use in \appref{app:dephasing}.

\subsection{Kraus representation in Pauli basis}

Consider a qubit channel $\mN(\rho)  = \sum_{i=1}^r K_i \rho K_i^\dagger$ and let $K_i = \sum_{j=0}^3 \sM_{ij} \sigma_j$ where $\sigma_0 = \id$. Then the CPTP condition translates to 
\begin{equation}
\label{eq:CPTP}
    \sum_{i} K_i^\dagger K_i 
    = \sum_i \left(\sum_j \sM_{ij}^* \sigma_j\right) \left(\sum_k \sM_{ik} \sigma_k\right) = \sum_{jk}\sigma_j \sigma_k \sum_i \sM_{ij}^* \sM_{ik} = \vvsig^\dagger (\sM^\dagger \sM) \vvsig = \id,
\end{equation}
where $\vvsig = \begin{pmatrix}
\id \\ X \\ Y \\ Z
\end{pmatrix}$ is the vector of Pauli matrices including $\id$ as the first element. Multiple $\sM$'s can represent the same quantum channel due to the freedom in choosing Kraus operators. In particular, when $\sM$ represents $\mN$, $\sU \sM$ represents $\mN$ as well, when $\sU$ is an isometry satisfying $\sU^\dagger \sU = 1$. Thus, without loss of generality, we can assume $\sM \in \bC^{4 \times 4}$ and 
    \begin{equation}
    \label{eq:canonical}
    \sM := \begin{pmatrix}
    m_{00} & \vm^\dagger \\
    0 & \frakm
    \end{pmatrix}, 
    \end{equation}
    where $m_{00} \geq 0$ and $\frakm \in \bC^{3 \times 3}$ is positive semi-definite. Note that $\sM$ is uniquely defined, i.e., $\sM$ is a function of $\mN$ (from the uniqueness of polar decomposition). We will call, in this appendix, the Kraus representation given by \eqref{eq:canonical} the canonical Kraus representation of $\mN(\cdot)$.  Plugging the following into \eqref{eq:CPTP}  
    \begin{equation}
    \sM^\dagger \sM = 
    \begin{pmatrix}
    m_{00} & 0 \\
    \vm & \frakm^\dagger
    \end{pmatrix}
    \begin{pmatrix}
    m_{00} & \vm^\dagger \\
    0 & \frakm
    \end{pmatrix}
    = 
    \begin{pmatrix}
    m_{00}^2 & m_{00}\vm^\dagger \\
    m_{00}\vm & \vm\vm^\dagger + \frakm^\dagger\frakm
    \end{pmatrix}, 
    \end{equation}
    we have 
    \begin{equation}
    \vvsig^\dagger (\sM^\dagger \sM) \vvsig = (m_{00}^2 + \vm^\dagger \vm)\id + 2m_{00}\Re[\vm] \cdot \vsig + 2 (\Re[\vm]\times\Im[\vm]) \cdot \vsig + 
    \vsig^\dagger (\frakm^\dagger \frakm) \vsig = \id, 
    \end{equation}
    where we use $\Re[\cdot]$ and $\Im[\cdot]$ to represent the real and imaginary parts of vectors. 
    If the channel is also unital, i.e., $\mN(\id) = \id$, we have 
    \begin{equation}
    \label{eq:unital}
    \sum_{i} K_i K_i^\dagger 
    = \sum_i \left(\sum_j \sM_{ij} \sigma_j\right) \left(\sum_k \sM_{ik}^* \sigma_k\right)  = \sum_{jk}\sigma_j \sigma_k \sum_i \sM_{ij} \sM_{ik}^* = \vvsig^\dagger (\sM^T \sM^*) \vvsig = \id. 
    \end{equation}    
Furthermore, let the diagonalization of $\frakm$ be 
    \begin{equation}
    \frakm = \sqrt{\gamma_1} \vv_1 \vv_1^\dagger + \sqrt{\gamma_2} \vv_2 \vv_2^\dagger + \sqrt{\gamma_3} \vv_3 \vv_3^\dagger,
    \end{equation}
    where $\vv_1$, $\vv_2$, $\vv_3$ are mutually orthogornal unit vectors and $\gamma_1 \geq \gamma_2 \geq \gamma_3 \geq 0$.  
The CPTP condition (\eqref{eq:CPTP}) translates to 
    \begin{gather}
    m_{00}^2 + \vm^\dagger \vm + \gamma_1 + \gamma_2 + \gamma_3 = 1,
\\
     m_{00} \Re[\vm] +  \Re[\vm]\times \Im[\vm] +  \gamma_1 \Re[\vv_1]\times \Im[\vv_1] +  \gamma_2 \Re[\vv_2]\times \Im[\vv_2] +  \gamma_3 \Re[\vv_3]\times \Im[\vv_3] = 0.\label{eq:CPTP-2}
    \end{gather}
    In addition, if the channel is unital, \eqref{eq:unital} becomes
    \begin{equation}
     m_{00} \Re[\vm] -  \Re[\vm]\times \Im[\vm] -  \gamma_1 \Re[\vv_1]\times \Im[\vv_1] -  \gamma_2 \Re[\vv_2]\times \Im[\vv_2] -  \gamma_3 \Re[\vv_3]\times \Im[\vv_3] = 0.
    \end{equation}
Then we have the following lemma:  
\begin{lemma}
\label{lemma:non-unital-criterion}
A qubit channel $\mN$ is unital if and only if $m_{00}\Re[\vm] = 0$. 
\end{lemma}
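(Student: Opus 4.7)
The plan is to exploit the symmetry between the CPTP condition (\eqref{eq:CPTP}, equivalently \eqref{eq:CPTP-2}) and the unital condition (\eqref{eq:unital}), which differ only by sign flips on the ``cross-product'' terms. Both conditions, when expanded in the Pauli basis using the canonical representation $\sM$ in \eqref{eq:canonical}, yield an identity on the coefficient of $\id$ and a vector identity on the coefficients of $X, Y, Z$. The coefficient of $\id$ is automatically identical for $\sM^\dagger\sM$ and $\sM^T\sM^*$ (it is just a sum of squared magnitudes), so the only place the two conditions can disagree is in the vector (Pauli-component) identity.

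First I would write out the vector identity coming from the unital condition in the same way it is written for CPTP, using $\sigma_j\sigma_k = \delta_{jk}\id + i\epsilon_{jkl}\sigma_l$ for $j,k\in\{1,2,3\}$ and noting that the cross-product terms flip sign when $\vm \mapsto \vm^*$ and $\vv_i\mapsto \vv_i^*$, giving
\begin{equation}
m_{00}\Re[\vm] - \Re[\vm]\times\Im[\vm] - \sum_{i=1}^3 \gamma_i \Re[\vv_i]\times\Im[\vv_i] = 0.
\end{equation}
The linear ``drift'' term $m_{00}\Re[\vm]$ keeps its sign (since it comes from $\sigma_0\sigma_j = \sigma_j$ contributions that are real), while the bilinear cross-product terms reverse sign relative to \eqref{eq:CPTP-2}.

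Next I would simply add and subtract this identity with \eqref{eq:CPTP-2}. Addition yields $2 m_{00}\Re[\vm] = 0$, proving the ``only if'' direction. For ``if'', subtraction gives
\begin{equation}
2\Bigl(\Re[\vm]\times\Im[\vm] + \sum_{i=1}^3 \gamma_i \Re[\vv_i]\times\Im[\vv_i]\Bigr) = -2 m_{00}\Re[\vm],
\end{equation}
so, assuming $m_{00}\Re[\vm] = 0$, the sum of cross products vanishes by CPTP, and plugging both facts back into the unital identity shows it is satisfied. Hence the unital condition holds.

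The main obstacle, such as it is, is bookkeeping: verifying carefully that the Pauli expansion of $\sM^T\sM^*$ differs from that of $\sM^\dagger\sM$ exactly by flipping the sign of every bilinear $\times$-term (coming from the antisymmetric piece of the product of Pauli matrices) while leaving the linear $m_{00}\Re[\vm]$ term untouched. Once that sign structure is nailed down, the result follows by one addition and one subtraction of vector identities, with no further calculation required.
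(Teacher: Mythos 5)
Your proposal is correct and follows essentially the same route as the paper: the paper derives the Pauli-basis vector identities for the CPTP condition (\eqref{eq:CPTP-2}) and for unitality, which differ exactly by the sign of the bilinear cross-product terms, and the lemma is read off by comparing them just as you do. Your write-up is in fact slightly more explicit than the paper's (which states the lemma without further argument), and your sign bookkeeping and the observation that the $\id$-coefficient is automatically shared are both accurate.
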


\subsection{Non-unital qubit channels: HNKS is violated}

Using the canonical Kraus representation of qubit channels, we now prove the following lemma for one-parameter qubit channels. 
\begin{lemma}
\label{lemma:non-unital-HKS}
Any non-unital qubit channel $\mN_\theta$ violates the HNKS condition.
\end{lemma}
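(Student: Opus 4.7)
The plan is to prove a stronger claim: for any non-unital qubit channel $\mN$, the Kraus span $\mS$ is the \emph{entire} 4-dimensional real space of Hermitian $2\times 2$ matrices. Since $H$ is Hermitian, this immediately gives $H\in\mS$ for any one-parameter extension $\mN_\theta$, violating HNKS. Because $\id\in\mS$ from the CPTP condition, it suffices to show that the orthogonal complement $\mS^\perp$ (inside the Hermitian operators, with respect to the Hilbert--Schmidt inner product) contains no nonzero traceless element.

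I would argue by contradiction. Assume $0\neq A\in\mS^\perp$; by conjugating the channel with a unitary (which preserves both unitality and the Kraus-span structure) I may reduce to $A=Z$. The condition $Z\in\mS^\perp$ translates to $\bra{0}K_j^\dagger K_i\ket{0}=\bra{1}K_j^\dagger K_i\ket{1}$ for all $i,j$, which is precisely the statement that the Gram matrices of $\va_i:=K_i\ket{0}$ and $\vb_i:=K_i\ket{1}$ agree. This yields an isometry $W$ on $\bC^2$ (extending to a unitary, trivial in two dimensions) with $W\va_i=\vb_i$ for every $i$. Combined with the CPTP identity $\sum_i\va_i^\dagger\vb_i=\bra{0}\id\ket{1}=0$, this implies $\tr(WM)=0$, where $M:=\sum_i\va_i\va_i^\dagger=\mN(\ketbra{0}{0})$ is positive with unit trace.

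Diagonalizing $W=V\diag(e^{i\alpha},e^{i\beta})V^\dagger$ and setting $M'=V^\dagger MV$, the constraints $\tr(M')=1$, $M'\geq 0$, and $e^{i\alpha}M'_{00}+e^{i\beta}M'_{11}=0$ force $M'_{00}=M'_{11}=1/2$ and $e^{i\alpha}=-e^{i\beta}$. Hence $W=e^{i\beta}R$ with $R^2=\id$ a self-adjoint involution, and a direct computation shows $M+WMW^\dagger=\id$ (the off-diagonals of $M'$ flip sign under conjugation by $R$ and cancel). But $M+WMW^\dagger=\sum_iK_iK_i^\dagger=\mN(\id)$, so $\mN$ is unital, contradicting the hypothesis. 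Hence $\mS^\perp=\{0\}$, $\mS$ is the full Hermitian space, and HNKS fails.

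The step I anticipate as the main subtlety is the degenerate case $\dim\mathrm{span}\{\va_i\}=1$, where $W$ is only pinned down on a line; however, $M$ is then rank-1, and the CPTP identity $\sum_i\va_i^\dagger\vb_i=0$ directly forces the $\vb_i$'s to lie on the line orthogonal to the $\va_i$'s, so $M+\sum_i\vb_i\vb_i^\dagger=\id$ by inspection, still yielding unitality. A secondary check is that the reduction to $A=Z$ is legitimate --- conjugation by $U$ sends Kraus operators $K_i\mapsto U^\dagger K_iU$, so $\mS$, unitality of $\mN$, and the HNKS condition all transform covariantly, making the reduction harmless.
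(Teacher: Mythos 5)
Your proof is correct, and it takes a genuinely different route from the paper's. Both arguments establish the same stronger fact---that for a non-unital qubit channel the Kraus span $\mS$ is all of the Hermitian operators---but the paper does this constructively: it writes the Kraus operators in a canonical Pauli-basis form $K_i=\sum_j \sM_{ij}\sigma_j$, computes $\vvsig^\dagger(\sM^\dagger h \sM)\vvsig$ explicitly, reduces surjectivity to the non-vanishing of a determinant $\gamma_i\det(\widetilde{\frakm}_i)$, and uses the CPTP identity contracted with $m_{00}\Re[\vm]$ (whose non-vanishing is exactly the non-unitality criterion of \lemmaref{lemma:non-unital-criterion}) to exhibit an index $i$ with nonzero determinant. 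Your argument is instead a duality/contradiction argument: a traceless $A\in\mS^\perp$, normalized to $Z$, forces the Gram matrices of $\{K_i\ket{0}\}$ and $\{K_i\ket{1}\}$ to coincide, hence an isometry $W$ with $\trace(WM)=0$ for the density matrix $M=\mN(\ketbra{0}{0})$; the spectral analysis of $W$ then yields $M+WMW^\dagger=\mN(\id)=\id$, contradicting non-unitality. Your treatment of the rank-one degenerate case and of the WLOG reduction to $A=Z$ is sound (pre-composition with a unitary conjugates $\mS$ and preserves unitality). What your version buys is brevity and a basis-free, conceptual picture; what it gives up is the explicit solution $h$ of $H+\sum_{ij}h_{ij}K_i^\dagger K_j=0$, which the paper's computation reuses downstream to define the canonical gauge $h^{\mathrm{cn}}$ and the bound $\xi(a,b)$ in \lemmaref{lemma:alpha-bound-non-unital}. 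For the lemma as stated, your proof is complete.
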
 
    
    \begin{proof} 
    Assume $\mN_\theta(\cdot) = \sum_{i=1}^4 K_{\theta,i}(\cdot) K_{\theta,i}^\dagger$ is a non-unital one-parameter qubit channel with canonical Kraus representation. $\sM,m_{00},\vm,\frakm,\gamma_i,\vv_i$ are defined for $\mN(\cdot)$ as in the previous section.
We will prove that for any Hermitian $H$, $\exists$ Hermitian $ h$, s.t. 
    \begin{equation}
    H + \sum_{ij} h_{ij} K_{i}^\dagger K_{j} = 0,
    \end{equation}
where
    \begin{equation}
    \sum_{ij} h_{ij} K_{i}^\dagger K_{j} = \sum_{ij} h_{ij} \left(\sum_l \sM_{il}^* \sigma_l\right) \left(\sum_k \sM_{jk} \sigma_k\right) = \sum_{lk} \sigma_l \sigma_k  \sum_{ij} \sM_{il}^* h_{ij} \sM_{jk} = \vvsig^\dagger (\sM^\dagger h \sM) \vvsig . 
    \end{equation}
Then the HNKS condition $i \sum_{i} K_{i}^\dagger\dot K_{i} + \sum_{ij} h_{ij} K_{i}^\dagger K_{j} = 0$ must be violated when we choose $H = i \sum_{i} K_{i}^\dagger\dot K_{i}$, the Hamiltonian of $\mN_\theta$. 

In the following, we will prove $\vvsig^\dagger (\sM^\dagger h \sM) \vvsig$ can represent any Hermitian operator in $\bC^{2\times 2}$. Note that $\vvsig^\dagger (\sM^\dagger \id \sM) \vvsig = \id$ from \eqref{eq:CPTP}. It will be sufficient to prove that the traceless part of $\vvsig^\dagger (\sM^\dagger h \sM) \vvsig$ can represent any traceless Hermitian operator in $\bC^{2\times 2}$. 
    Let 
    \begin{equation}
    h = \begin{pmatrix}
    0 & \vh^\dagger \\
    \vh & \frakh
    \end{pmatrix}.
    \end{equation}
        Then
    \begin{equation}
    \begin{split}
    \sM^\dagger h \sM 
    &= \begin{pmatrix}
    m_{00} & 0 \\
    \vm & \frakm^\dagger
    \end{pmatrix}
    \begin{pmatrix}
    0 & \vh^\dagger \\
    \vh & \frakh
    \end{pmatrix}
    \begin{pmatrix}
    m_{00} & \vm^\dagger \\
    0 & \frakm
    \end{pmatrix} 
    = 
    \begin{pmatrix}
    0 & m_{00}\vh^\dagger \frakm \\
    m_{00}\frakm^\dagger \vh  & \vm \vh^\dagger \frakm + \frakm^\dagger \vh \vm^\dagger + \frakm^\dagger \frakh\frakm
    \end{pmatrix}.
    \end{split}
    \end{equation}
Using the following property of Pauli decomposition  for any $\va,\vb \in \bC^{3}$, 
    \begin{equation}
    \begin{split}
    \vsig^\dagger (\va \vb^\dagger) \vsig 
    &= (\va\cdot\vsig)(\vb^*\cdot\vsig) = (\va\cdot\vb^*) \id + i (\va \times \vb^*) \cdot \vsig
    \\
    &= (\va\cdot\vb^*) \id + (\Re[\va] \times \Im[\vb] - \Im[\va] \times \Re[\vb]) \cdot \vsig + i(\Re[\va] \times \Re[\vb] - \Im[\va] \times \Im[\vb]) \cdot \vsig, 
    \end{split}
    \end{equation} 
    we have 
    \begin{equation}
    \begin{split}
    &\quad\; \vvsig^\dagger (\sM^\dagger h \sM ) \vvsig \\
    &= 2 ( m_{00}\Re[\frakm \vh]\cdot \vsig) + \left( 2 \Re[\vh^\dagger \frakm \vm ]\right)\id + i (\vm \times (\frakm \vh)^*) \cdot \vsig - i (\vm^*\times \frakm \vh ) \cdot \vsig + \vsig^\dagger (\frakm^\dagger \frakh\frakm) \vsig \\
    &= \left( 2 \Re[\vh^\dagger \frakm \vm ]\right)\id + \vsig^\dagger (\frakm^\dagger \frakh\frakm) \vsig 
    + 2 ( m_{00}\Re[\frakm \vh]\cdot \vsig) 
    + 2 ( (\Re[\vm]\times\Im[\frakm \vh])\cdot \vsig) 
    - 2 ( (\Im[\vm]\times\Re[\frakm \vh])\cdot \vsig) . 
    \end{split}
    \end{equation}
    Let $\vh = h_1\vv_1 + h_2 \vv_2 + h_3 \vv_3$ and $\frakh = g_1 \vv_1\vv_1^\dagger + g_2 \vv_2\vv_2^\dagger + g_3 \vv_3\vv_3^\dagger$. 
    A few additional calculations show that the traceless part of $\vvsig^\dagger (\sM^\dagger h \sM ) \vvsig$ is equal to 
    \begin{gather}
    \label{eq:traceless}
    \left( \sum_i \widetilde{\frakm}_i \begin{pmatrix}
    \Re[h_i]\sqrt{\gamma_i}\\
    \Im[h_i] \sqrt{\gamma_i}\\
    g_i  \gamma_i
    \end{pmatrix}\right) \cdot (2\vsig),  
    \end{gather}
    where $\widetilde{\frakm}_i \in \bC^{3\times 3}$ and 
    \begin{multline*}
    \widetilde{\frakm}_i := 
            \left( m_{00}\Re[\vv_i] + \Re[\vm] \times \Im[\vv_i] - \Im[\vm] \times \Re[\vv_i] , \right. \\ \left.  
            -m_{00}\Im[\vv_i] + \Re[\vm] \times \Re[\vv_i] + \Im[\vm] \times \Im[\vv_i] , \Re[\vv_i]\times \Im[\vv_i] \right). 
    \end{multline*}
    \eqref{eq:traceless} spans the entire traceless Hermitian matrix subspace if for some $i$,  
    \begin{equation}
    \gamma_i \det\left(\widetilde{\frakm}_i \right) \neq 0, 
    \end{equation}
    Let $\Re[\vv_i]\times \Im[\vv_i] = \norm{\Re[\vv_i]\times \Im[\vv_i]} \ve_i$ where $\ve_i$ is a unit vector, 
    \begin{equation}
    \det\left( \widetilde{\frakm}_i \right) = \norm{\Re[\vv_i]\times \Im[\vv_i]} \big(m_{00} (\Re[\vm]\cdot\ve_i) - \left(m_{00}^2 + (\Re[\vm]\cdot\ve_i)^2 + (\Im[\vm]\cdot\ve_i)^2 \right) \norm{\Re[\vv_i]\times \Im[\vv_i]}\big). 
    \end{equation}
    Note that the CPTP condition implies that (from $m_{00}\Re[\vm] \cdot$ \eqref{eq:CPTP-2})
    \begin{equation}
    \norm{m_{00} \Re[\vm]}^2 +  \sum_i \gamma_i (m_{00}\Re[\vm]\cdot\ve_i) \norm{\Re[\vv_i]\times \Im[\vv_i]} = 0. 
    \end{equation}
    According to \lemmaref{lemma:non-unital-criterion}, $\norm{m_{00} \Re[\vm]}^2 > 0$ for non-unital channels, there must be an $i$ such that $\gamma_i (m_{00}\Re[\vm]\cdot\ve_i) \norm{\Re[\vv_i]\times \Im[\vv_i]} < 0$ and then  
    \begin{equation}
    \gamma_i \det\left( \widetilde{\frakm}_i \right) \neq 0. 
    \end{equation}
    The claim is then proven.
\end{proof}

    We've shown above that any non-unital channel violate the HNKS condition. Given a non-unital qubit channel $\mN(\cdot) = \sum_i K_i (\cdot)K_i^\dagger$ in canonical form and $H$, there might be different choices of $h$ such that 
    \begin{equation}
        H + \sum_{ij} h_{ij} K_{i}^\dagger K_{j} = 0. 
    \end{equation}
    Here we pick a unique gauge for the proof of a lemma below. (Readers can skip to the next appendix for now and come back here later as the meaning of the lemma below will not be perfectly clear until it is applied in \appref{app:dephasing}.)  
    Using the procedure in the proof above, we can pick an $h$ of the form  
    \begin{equation}
    h = \trace\left(-H-\vvsig^\dagger \left(\sM^\dagger \begin{pmatrix}
        0 & \vh^\dagger \\ 
        \vh  & \frakh \\
    \end{pmatrix} \sM\right) \vvsig\right) \cdot \frac{\id}{2}  + \begin{pmatrix}
        0 & \vh^\dagger \\ 
        \vh  & \frakh \\
    \end{pmatrix}, 
    \end{equation}
    where $\vh = h_i \vv_i$ and $\frakh = g_i \vv_i\vv_i^\dagger$ for some $i$ such that $\gamma_i \det\left( \widetilde{\frakm}_i \right) \neq 0$. Specifically, $h_i$ and $g_i$ are the solution of 
    \begin{equation}
    \label{eq:h-solution}
    H - \frac{\trace(H)}{2} \id = \left(  \widetilde{\frakm}_i \begin{pmatrix}
    \Re[h_i]\sqrt{\gamma_i}\\
    \Im[h_i] \sqrt{\gamma_i}\\
    g_i  \gamma_i
    \end{pmatrix}\right) \cdot (2\vsig), 
    \end{equation}
    which is guaranteed to exist. There can be different $i$ that satisfies $\gamma_i \det\left( \widetilde{\frakm}_i \right) \neq 0$, then we pick one such that $\norm{h}$ is minimized. (If $\norm{h}$ are the same for more than one choices of $i$ that satisfies $\gamma_i \det\left( \widetilde{\frakm}_i \right) \neq 0$, we can uniquely choose $i$ using any pre-defined total order on $\bC^3$ that we won't specify here as because it will not affect our discussion below.) 
    %Then we define $\norm{h(H,\mN)}$ to be the solution of $h$ picked from the above procedure. Note that $h(a H,\mN) = a h(H,\mN)$ for any $a > 0$. 
    Then $\norm{h^{\mathrm{cn}}(H,\mN)}$ becomes a well-defined function of $H$ and $\mN$ where $h^{\mathrm{cn}}$ is a canonical choice of $h$ picked from the above procedure. Note that $\norm{h^{\mathrm{cn}}(a H,\mN)} = a \norm{h^{\mathrm{cn}}(H,\mN)}$ for any $a > 0$. 
    
    Then we have the following lemma. (Readers can skip this lemma and come back to it later as its meaning will not be perfectly clear until it is applied in \appref{app:dephasing}.)

\begin{lemma}[Bounding {$\norm{\alpha}$} when {$\beta = 0$}]
Let $\mN_\theta$ be a non-unital qubit channel and $\{K_i\}_{i=1}^r$ one of its Kraus representations, such that $\norm{i \sum_{i} K_{i}^\dagger\dot K_{i}} \leq a$ for some Kraus representation $K_{i}$ and $\norm{\mN_\theta(\id) - \id} \geq b > 0$ for some constants $a,b$, then there exists a universal non-negative and finite function $\xi(a,b)$ of $(a,b)$, such that 
\begin{equation}
\label{eq:alpha-bound-non-unital}
\min_{h: i \sum_{i} K_{i}^\dagger\dot K_{i} + \sum_{ij} h_{ij} K_{i}^\dagger K_{j} = 0} \bigg\|{\sum_{i} \Big(\dot K_{i} - i\sum_j h_{ij}K_{j}\Big)^\dagger \Big(\dot K_{i} - i\sum_j h_{ij}K_{j}\Big)}\bigg\| \leq 2\bigg\|\sum_{i} \dot K_{i}^\dagger \dot K_{i}\bigg\|+ \xi(a,b),
\end{equation}
and $\xi(a,b) = a^2  \xi(1,b)$. 
\label{lemma:alpha-bound-non-unital}
\end{lemma}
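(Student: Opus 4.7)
The plan is to take $h$ to be the canonical minimizer $h^{\mathrm{cn}}(H,\mN)$ constructed immediately before the lemma statement --- which is well defined because non-unitality together with Lemma \ref{lemma:non-unital-HKS} makes the constraint $H + \sum_{ij} h_{ij} K_i^\dagger K_j = 0$ solvable --- and then expand $\sum_i \widetilde K_i^\dagger \widetilde K_i$ with $\widetilde K_i = \dot K_i - i\sum_j h_{ij}K_j$ to isolate a residual that is quadratic in $h$.

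The expansion step is purely algebraic. Using Hermiticity of $h$ and bundling the Kraus operators as $\vK = (K_1,\dots,K_r)^T$,
\begin{equation*}
\textstyle\sum_i \widetilde K_i^\dagger \widetilde K_i = \sum_i \dot K_i^\dagger \dot K_i - i\,\dot \vK^\dagger h \vK + i\,\vK^\dagger h \dot \vK + \vK^\dagger h^2 \vK.
\end{equation*}
Since $V = \sum_i |i\rangle_E \otimes K_i$ is an isometry by $\sum_i K_i^\dagger K_i = \id$, the last term obeys $\|\vK^\dagger h^2 \vK\| = \|V^\dagger(h^2\otimes\id)V\| \leq \|h\|^2$, while the cross term satisfies $\|\dot\vK^\dagger h \vK\| \leq \|h\|\sqrt{\|\sum_i \dot K_i^\dagger \dot K_i\|}$ and AM--GM gives $2\|h\|\sqrt{\|\sum_i \dot K_i^\dagger \dot K_i\|} \leq \|h\|^2 + \|\sum_i \dot K_i^\dagger \dot K_i\|$. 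Combining,
\begin{equation*}
\bigl\|\textstyle\sum_i \widetilde K_i^\dagger \widetilde K_i\bigr\| \leq 2\bigl\|\textstyle\sum_i \dot K_i^\dagger \dot K_i\bigr\| + 2\|h^{\mathrm{cn}}\|^2,
\end{equation*}
which reduces the lemma to proving $\|h^{\mathrm{cn}}(H,\mN)\|^2 \leq \tfrac{1}{2}\xi(a,b)$ with the required scaling.

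The norm bound on $h^{\mathrm{cn}}$ is linear-algebraic. By \eqref{eq:h-solution}, $h^{\mathrm{cn}}$ depends linearly on $H$ through $\widetilde{\frakm}_i^{-1}$ (for the index $i$ picked by the canonical procedure) plus a trace-part shift, so $\|h^{\mathrm{cn}}\| \leq C(\mN)\|H\| \leq C(\mN)\,a$; the quadratic dependence $\xi(a,b) = a^2 \xi(1,b)$ with $\xi(1,b) = 2C(\mN)^2$ is then automatic. To show $C(\mN)$ can be controlled by $b$ alone, I will use that universal bounds $m_{00},\|\vm\|,\|\vv_j\| \leq 1$ (from CPTP) cap $\|\widetilde{\frakm}_i\|$ from above, and that the identity
\begin{equation*}
\|m_{00}\Re[\vm]\|^2 + \textstyle\sum_j \gamma_j(m_{00}\Re[\vm]\cdot\ve_j)\|\Re[\vv_j]\times\Im[\vv_j]\| = 0
\end{equation*}
established in the proof of Lemma \ref{lemma:non-unital-HKS}, together with $2\|m_{00}\Re[\vm]\| = \|\mN(\id)-\id\| \geq b$, forces by pigeonhole some index $i$ with $\gamma_i|(m_{00}\Re[\vm]\cdot\ve_i)\|\Re[\vv_i]\times\Im[\vv_i]\|| \geq b^2/12$; this lower-bounds $\gamma_i|\det(\widetilde{\frakm}_i)|$ by some explicit $\zeta(b) > 0$.

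The hard part will be tracking the reparametrization $(h_i,g_i)\mapsto(\Re[h_i]\sqrt{\gamma_i},\Im[h_i]\sqrt{\gamma_i},g_i\gamma_i)$ in the canonical solution: inverting \eqref{eq:h-solution} introduces $1/\sqrt{\gamma_i}$ and $1/\gamma_i$ factors that must be absorbed into the lower bound $\gamma_i|\det(\widetilde{\frakm}_i)| \geq \zeta(b)$ without losing $b$-only dependence. Because $h^{\mathrm{cn}}$ by construction selects the index $i$ that minimizes $\|h\|$ over admissible choices, treating the pigeonhole index as a mere witness is enough for the upper bound. One minor additional subtlety is that the hypothesis fixes a given Kraus representation $\{K_i\}$ while the canonical construction is stated for the canonical form \eqref{eq:canonical}; conjugation by the isometry relating the two representations preserves all norms, so the bound transfers without loss. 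Everything else is routine triangle-inequality bookkeeping.
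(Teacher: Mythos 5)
Your proposal is correct and its algebraic skeleton coincides with the paper's: both take $h$ to be the canonical solution $h^{\mathrm{cn}}(H,\mN)$, transport it between the given and canonical Kraus representations via the connecting isometry, expand the quadratic form, and use the isometry property $\vK^\dagger\vK=\id$ together with AM--GM to land on $2\norm{\sum_i\dot K_i^\dagger\dot K_i}+2\norm{h^{\mathrm{cn}}}^2$. Where you genuinely diverge is the key analytic step of showing $\norm{h^{\mathrm{cn}}}\leq C(b)\,a$. The paper does this \emph{softly}: it defines $f(a,b)$ as a supremum over the compact set $\{\norm{H}\leq a\}\times\{\norm{\mN(\id)-\id}\geq b\}$, argues local boundedness of $\norm{h}$ near each channel in the set (continuity of the solution of \eqref{eq:h-solution} where $\gamma_i\det(\widetilde{\frakm}_i)\neq 0$), and extracts a finite subcover; no explicit constant is produced. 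You instead propose a \emph{quantitative} route: pigeonhole on the CPTP identity $\norm{m_{00}\Re[\vm]}^2+\sum_j\gamma_j(m_{00}\Re[\vm]\cdot\ve_j)\norm{\Re[\vv_j]\times\Im[\vv_j]}=0$ yields an index with $\gamma_i\abs{\det(\widetilde{\frakm}_i)}\geq\zeta(b)>0$, Cramer's rule with universally bounded adjugate entries controls the rescaled solution vector, and $\gamma_i\leq 1$ lets the single lower bound on $\gamma_i\abs{\det(\widetilde{\frakm}_i)}$ absorb both the $1/\sqrt{\gamma_i}$ and $1/\gamma_i$ factors; since the canonical choice minimizes $\norm{h}$ over admissible indices, the witness suffices. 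This closes, and it buys an explicit $\xi(1,b)$ (polynomial in $1/b$) where the paper only guarantees finiteness. Two minor slips: the correct identity is $\norm{\mN(\id)-\id}=4\norm{m_{00}\Re[\vm]}$ rather than $2\norm{m_{00}\Re[\vm]}$ (this only rescales $\zeta(b)$), and you should note that the sign structure of the pigeonhole term (it is the negative one) is what guarantees $\gamma_i\det(\widetilde{\frakm}_i)$ is bounded \emph{away} from zero rather than merely one of its two summands being large.
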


\begin{proof}
    Since $\norm{\mN_\theta(\id) - \id} \geq b > 0$, $\mN_\theta$ must be a non-unital channel.  The HNKS condition must be violated.  Let $H = i \sum_{i} K_{i}^\dagger\dot K_{i} = i \sum_{i} K_{i}^{\mathrm{cn}\dagger}\dot K_{i}^{\mathrm{cn}}$. 
    We choose $h^{\mathrm{cn}} = h^{\mathrm{cn}}(H,\mN)$ as defined above, such that $H + \sum_{ij} h^{\mathrm{cn}}_{ij} K_{i}^{\mathrm{cn}\dagger} K_{j}^{\mathrm{cn}} = 0$, where $\{K_i^{\mathrm{cn}}\}_{i=1}^4$ is the canonical Kraus representation of $\mN$. (Here we implicitly assume the Kraus rank is $4$. In some cases, some canonical Kraus operators can be zero, and we should remove it and replace $4$ with the actual Kraus rank. It will not affect the discussion below.) Since both $\{K_i\}_{i=1}^r$ and $\{K_i^{\mathrm{cn}}\}_{i=1}^4$ represent the same channel, there is an isometry $\sU$ such that $K_i = \sum_i \sU_{ij} K_j^{\mathrm{cn}}$.  Let $h = \sU h^{\mathrm{cn}} \sU^\dagger$, we have $H + \sum_{ij} h^{\mathrm{cn}}_{ij} K_{i}^{\mathrm{cn}\dagger} K_{j}^{\mathrm{cn}} = H + \sum_{ij} h_{ij} K_{i}^{\dagger} K_{j} = 0$. 
    Let 
    \begin{equation}
    \vK = \begin{pmatrix}
        K_1 \\
        K_2 \\ 
        \vdots \\
        K_r
    \end{pmatrix}, \quad     
    \dot\vK = \begin{pmatrix}
        \dot K_1 \\
        \dot K_2 \\ 
        \vdots \\
        \dot K_r
    \end{pmatrix}. 
    \end{equation}
    Then 
    \begin{equation}
    \begin{split}
    &\quad \bigg\|{\sum_{i} \Big(\dot K_{i} - i\sum_j h_{ij}K_{j}\Big)^\dagger \Big(\dot K_{i} - i\sum_j h_{ij}K_{j}\Big)}\bigg\| \\
    & \leq \bigg\|{\sum_{i} \dot K_{i}^\dagger \dot K_{i}}\bigg\| + \bigg\|{ i \sum_{ij} h_{ij}^* K_j^\dagger \dot K_i - i \sum_{ij} h_{ij} \dot K_i^\dagger K_j + \sum_{ijj'} h_{ij}^* h_{ij'} K_j^\dagger K_{j'} }\bigg\|\\
    & \leq  \norm{\dot\vK^\dagger \dot\vK} + 2 \norm{\dot\vK}\norm{\vK}\norm{h} + \norm{h}^2 \norm{\vK}^2 = \norm{\dot\vK^\dagger \dot\vK} + 2 \sqrt{ \norm{\dot\vK^\dagger \dot\vK}}\norm{h^{\mathrm{cn}}} + \norm{h^{\mathrm{cn}}}^2 \\
    & \leq 2 \norm{\dot\vK^\dagger \dot\vK} + 2\norm{h^{\mathrm{cn}}}^2 
    \leq 2 \norm{\dot\vK^\dagger \dot\vK} + 2f(a,b)^2, 
    \end{split}
    \end{equation}
    where we use the submultiplicity of operator norm, $\norm{\vK} = \norm{\vK^\dagger} = \sqrt{\norm{\vK^\dagger\vK}}$ and $\norm{h^{\mathrm{cn}}} = \big\|{\sU h^{\mathrm{cn}} \sU^\dagger}\big\| = \norm{h}$. 
Here we use the following functions: 
    \begin{equation}
    f(a,b) := \sup_{\substack{H:\norm{H} \leq a,\\ \mN:\norm{\mN(\id)-\id} \geq b}} \norm{h^{\mathrm{cn}}(H,\mN)}. 
    \end{equation}
    \sloppy In order for our upper bound to be non-trivial, we must prove $f(a,b)$ is well-defined, i.e., $f(a,b) < \infty$ for any $a,b > 0$, and we prove this below. Define compact subsets $S_1:=\{H:\norm{H} \leq a\}$ and $S_2:=\{\mN:\norm{\mN(\id)-\id} \geq b\}$. Then for any $\mN_0 \in S_2$, there exists $\epsilon_{\mN_0} > 0$ such that $f(H,\mN)$ bounded by a constant $C(\mN_0)$ in an open subset $S_1^{\mN_0}\times B(\mN_0)$ where $S_1^{\mN_0} = \{H:\norm{H} < a + \epsilon_{\mN_0}\}$, $B(\mN_0) = \{\mN:d(\mN_0,\mN) < \epsilon_{\mN_0}\}$ and $d(\mN_0,\mN)$ can be e.g., the diamond distance between quantum channels that we use as the metric in the space of quantum channels~\cite{watrous2018theory}. Here $\epsilon_{\mN_0}$ needs only to be small enough such that for some $i$, $\gamma_i \det\left( \widetilde{\frakm}_i \right)\neq 0$ is satisfied for all $\mN \in B(\mN_0)$. It guarantees the continuity and then the boundness of $\norm{h(H,\mN)}$ in $S_1^{\mN_0}\times B(\mN_0)$ when $h$ is the solution of \eqref{eq:h-solution} for the specific $i$ chosen such that $\gamma_i \det\left( \widetilde{\frakm}_i \right)\neq 0$, and then $\norm{h^{\mathrm{cn}}(H,\mN)} \leq \norm{h(H,\mN)}$ must also be bounded. Note that $S_1^{\mN}\times B(\mN)$ with all $\mN \in S_2$ forms an open cover of $S_1 \times S_2$. The compactness of $S_1 \times S_2$ guarantees there exists a finite subcover of $S_1\times S_2$, i.e, there exists a finite number $\ell$ and a finite set $\{\mN_k\}_{k=1}^\ell$ such that $\bigcup_{k=1}^\ell S_1^{\mN_k} \times B(\mN_k) \supseteq S_1\times S_2$. It implies 
    \begin{equation}
        f(a,b) = \sup_{(H,\mN) \in S_1\times S_2} f(H,\mN) \leq \max_{k\in[\ell]} \sup_{(H,\mN) \in S_1^{\mN_k}\times B(\mN_k)} f(H,\mN) \leq \max_{k\in[\ell]} C(\mN_k) <\infty. 
    \end{equation}

    Clearly, $f(a,b) = a f(1,b)$ because $\norm{h^{\mathrm{cn}}(a H,\mN)} = a \norm{h^{\mathrm{cn}}(H,\mN)}$ for any $a > 0$. Then let 
    \begin{equation}
    \xi(a,b) := 2f(a,b)^2. 
    \end{equation}
    \eqref{eq:alpha-bound-non-unital} is proven. $\xi(a,b) = a^2 \xi(1,b)$ is also satisfied. 
\end{proof}

\subsection{Unital qubit channels: when HNKS is satisfied}

    For unital qubit channels, we prove the following necessary and sufficient condition of the HNKS condition. 
    \begin{theorem}[Single-qubit HNKS]
    \label{thm:single-qubit-HNKS}
    A one-parameter qubit channel $\mN_\theta$ satisfies the HNKS condition if and only if 
    \begin{enumerate}[(1)]
        \item There exist unitaries $U,V$ and $p \in [0,1/2]$, such that 
    \begin{equation}
    V \mN(U \rho U^\dagger) V^\dagger = (1-p) \rho + p Z \rho Z,
    \end{equation}
    That is, $\mN$ is either a unitary channel or a dephasing-class channel.
        \item The Hamiltonian satisfies $H = i \sum_{i} K_{i}^\dagger\dot K_{i} \notin \mS = {\rm span}\{\id,U Z U^\dagger\}$ . 
    \end{enumerate}
    \end{theorem}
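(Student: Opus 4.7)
The $\Leftarrow$ direction is immediate because condition (2) is literally the HNKS statement $H\notin\mS$. For the $\Rightarrow$ direction the only nontrivial content is part (1): that HNKS forces $\mN$ to be either unitary or dephasing-class. I would prove the contrapositive by ruling out the two complementary classes, namely non-unital channels (which are already handled by \lemmaref{lemma:non-unital-HKS}) and unital strictly contractive channels (the genuinely new case).

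For a unital $\mN_0$, absorbing the constant unitaries $U_0 = U_{\theta=0}$ and $V_0 = V_{\theta=0}$ from the canonical form \eqref{eq:class-def} into the Kraus operators brings the channel to Pauli-channel form $\mN_0(\rho) = \sum_{j=0}^{3} p_j\,\sigma_j \rho \sigma_j$, with $p_j \ge 0$, $\sum_j p_j = 1$, and Bloch-matrix eigenvalues $\lambda_i$ recovered from the $p_j$'s through the Walsh-Hadamard relations (e.g., $\lambda_3 = p_0 - p_1 - p_2 + p_3$). Since the sandwiching $\mN_0 \mapsto V_0\,\mN_0(U_0\cdot U_0^\dagger)\,V_0^\dagger$ conjugates $\mS$ and $H$ jointly by $U_0^\dagger(\cdot)U_0$, HNKS is invariant under this reduction, so I may as well work with the Pauli form.

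Strict contractivity $|\lambda_i|<1$ for all $i$ forces at least three of the $p_j$'s to be strictly positive, since two nonzero $p_j$'s immediately push one $|\lambda_i|$ to $1$. With at least three positive $p_j$'s I would compute $K_i^\dagger K_j = \sqrt{p_i p_j}\,\sigma_i \sigma_j$ and extract Hermitian combinations: the diagonal terms $K_j^\dagger K_j = p_j\,\id$ supply $\id$; if $p_0, p_k>0$ then $K_0^\dagger K_k \propto \sigma_k$ is already Hermitian; and if $p_j, p_k>0$ for distinct $j,k\in\{1,2,3\}$, then $i(K_j^\dagger K_k - K_k^\dagger K_j)\propto \sigma_l$ with $l\in\{1,2,3\}\setminus\{j,k\}$. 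Running this over the two inequivalent patterns of ``at least three positive $p_j$'s'' (namely ``$p_0$ plus two of $p_{1,2,3}$'' and ``$p_1,p_2,p_3$ all positive'') shows that $\mS$ always equals the full space of $2\times 2$ Hermitian matrices, so every Hermitian $H$ lies in $\mS$ and HNKS fails.

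Combining the above with \lemmaref{lemma:non-unital-HKS} yields the desired conclusion: HNKS implies $\mN_0$ is unital and has at least one $|\lambda_i|=1$. The Fujiwara--Algoet inequalities $|\lambda_1 \pm \lambda_2| \le |1 \pm \lambda_3|$ then rule out ``exactly two singular values equal to $1$'' (they collapse into three, the unitary case), and in the ``exactly one singular value equal to $1$'' case force the remaining two to be equal, yielding the dephasing form $(1-p)\rho + pZ\rho Z$ (after the implicit re-labeling absorbed into $U,V$), establishing (1). The main obstacle is the span computation for strictly contractive Pauli channels, but as outlined it reduces to a brief finite case check; everything else is bookkeeping.
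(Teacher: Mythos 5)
Your proof is correct, and for the decisive unital case it takes a genuinely different route from the paper. Both arguments dispose of non-unital channels via \lemmaref{lemma:non-unital-HKS}, but for unital channels the paper stays inside its canonical Kraus-matrix formalism: HNKS forces $\Re[\vv_i]\times\Im[\vv_i]=0$ through the determinant criterion, and the conclusion follows from a case analysis on the rank of $\frakm$ (rank $0$ unitary, rank $1$ dephasing-class, ranks $2$ and $3$ excluded by exhibiting an off-diagonal $\frakh$ that makes the Kraus span full). You instead pass to the Pauli-channel normal form $\sum_j p_j\,\sigma_j(\cdot)\sigma_j$, note that strict contractivity forces at least three positive $p_j$, and compute the Kraus span directly from products of Pauli matrices; the inequalities $|\lambda_1\pm\lambda_2|\le|1\pm\lambda_3|$ then pin down the remaining unital channels as unitary or dephasing-class. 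Your route is more elementary and easier to audit (the span computation is a finite check on which $p_j$ vanish), at the cost of importing two standard facts the paper's formalism does not need: that a unital qubit channel with diagonalized Bloch matrix is a Pauli channel, and the positivity conditions on $(\lambda_1,\lambda_2,\lambda_3)$. Two small points to tidy up: in the one-unit-eigenvalue case with $\lambda_3=-1$ the inequalities force $\lambda_1=-\lambda_2$ rather than $\lambda_1=\lambda_2$, which must be absorbed by composing a Pauli conjugation into $V$ (your ``re-labeling'' remark covers this, but it should be said explicitly); and condition (2) of the theorem is, as you observe, just the HNKS condition restated once (1) is in hand, since the Kraus span of a dephasing-class channel is manifestly ${\rm span}\{\id,UZU^\dagger\}$.
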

    \begin{proof}
    The sufficiency of the above conditions is obvious by definition. Thus we only prove their necessity. Assume the HNKS condition is satisfied. Using \lemmaref{lemma:non-unital-HKS}, $\mN$ must be unital. For unital channels, $m_{00} \Re[\vm] = 0$. It implies, from \eqref{eq:CPTP-2}, that 
    \begin{equation}
    \Re[\vm]\times \Im[\vm] +  \gamma_1 \Re[\vv_1]\times \Im[\vv_1] +  \gamma_2 \Re[\vv_2]\times \Im[\vv_2] +  \gamma_3 \Re[\vv_3]\times \Im[\vv_3] = 0.
    \end{equation}
    Assume $\Re[\vm] = 0$ and $m_{00}\neq 0$. (Note that if $m_{00} = 0$, it is always possible to transform it into another quantum channel through unitary rotation before or after the channel so that $m_{00} \neq 0$.) 
    \begin{equation}
    \det\left( \widetilde{\frakm}_i \right) = -\norm{\Re[\vv_i]\times \Im[\vv_i]}^2 \big(m_{00}^2 + (\Im[\vm]\cdot\ve_i)^2\big). 
    \end{equation}
    In the case where the HNKS condition is satisfied, all determinants have to be zero and we have 
    \begin{equation}
    \Re[\vv_i]\times \Im[\vv_i] = 0,\quad \text{i.e.},\, \Re[\vv_i]\propto \Im[\vv_i] ,\quad \forall i~ \text{s.t.} ~\gamma_i > 0. 
    \end{equation}
    Without loss of generality, assume $\Im[\vv_i] = 0$ for all $i$ s.t. $\gamma_i > 0$. 
    Let $h = \begin{pmatrix}
    0 & \vh^\dagger \\
    \vh & \frakh
    \end{pmatrix}$ and assume $\vh = h_1 \vv_1 + h_2 \vv_2 + h_3 \vv_3$. 
    There are four possible situations: 
    \begin{enumerate}[(1)]
        \item Rank of $\frakm$ is 0. The Kraus span contains only the identity operator. $\mN$ is a unitary channel in this case. 
        \item Rank of $\frakm$ is 1. The Kraus span is spanned by the identity operator and 
        $\left(m_{00} \vv_i - \Im[\vm] \times \vv_i \right) \cdot \vsig$. $\mN$ is a dephasing-class channel in this case. 
        \item Rank of $\frakm$ is 2. Then $\gamma_{1,2} > 0$. If instead of choosing $\frakh$ to be diagonal in basis $\{\vv_1,\vv_2,\vv_3\}$ like before, we choose $\frakh = g_{12} \vv_1 \vv_2^\dagger + g_{12}^* \vv_2 \vv_1^\dagger$. Then the traceless part of $\vvsig^\dagger (\sM^\dagger h \sM) \vvsig$ is 
        \begin{equation} 
        \begin{pmatrix}
        m_{00} \vv_1 - \Im[\vm] \times \vv_1 & 
        m_{00} \vv_2 - \Im[\vm] \times \vv_2 & 
        \vv_1 \times \vv_2
        \end{pmatrix}
        \begin{pmatrix}
        \sqrt{\gamma_1}\Re[h_1]\\
        \sqrt{\gamma_2}\Re[h_2]\\
        - \sqrt{\gamma_1\gamma_2} \Im[g_{12}]
        \end{pmatrix} \cdot (2\vsig).
        \end{equation}
        The determinant of the first matrix term above is always positive (which is obvious by drawing a 3d coordinate system). Then $\vvsig^\dagger (\sM^\dagger h \sM) \vvsig$ spans the entire Hermitian matrix space. The HNKS condition must be violated and therefore this case is impossible. 
        \item Rank of $\frakm$ is 3. The HNKS condition must be violated and therefore this case is impossible, using the same argument in the rank-2 case. 
    \end{enumerate}
    \end{proof}
    
\section{Dephasing-class channels: Unachievability of the HL}
\label{app:dephasing}
   
    Here we consider the metrological limit in estimating dephasing-class channels. We will first consider the ancilla-free sequential strategies with unital controls (strategies (c) in \figref{fig:strategies}). Then we will generalize to strategies (d) (with CPTP controls) and (e) (with bounded ancilla). In these strategies, $n$ copies of $\mE_\theta$ are sequentially applied on the quantum device and quantum channels $\{\mC_k\}_{k=1}^n$ are applied after each $\mE_\theta$ as quantum controls. Arbitrarily initial states and measurements are allowed. 
    
    The entire quantum channel under consideration (without ancilla) is  
    \begin{equation}
    \label{eq:entire-channel}
    \mE^{(n)}_\theta = \mC_n \circ \mE_{\theta} \circ \mC_{n-1} \circ \mE_{\theta} \circ \cdots \circ \mC_{1} \circ \mE_{\theta}. 
    \end{equation}
    We will use $\mE^{(k)}_\theta$ to represent the first $k$ step evolution for all $0\leq k \leq n$ and $\mE^{(0)}_\theta$ is the identity channel. Our goal is to obtain the optimal scaling of the channel QFI $F(\mE^{(n)}_\theta)$ when arbitrary controls (or unital controls) $\{\mC_k\}_{k=1}^n$ are allowed.

\subsection{General upper bound on channel QFI (Channel extension method)}
\label{app:sequential-upper}

    According to~\eqref{eq:channel-QFI} and \eqref{eq:sdp}, we have 
    \begin{gather}
    F(\mE^{(n)}_\theta) = \sup_{\psi=\ket{\psi}\bra{\psi}} \min_{\mE^{(n)}_\theta(\cdot) = \sum_i \tK^{(n)}_{\theta,i} (\cdot)\tK^{(n)\dagger}_{\theta,i}} 4 \trace\Big(\psi \sum_i \dot \tK^{(n)\dagger}_{i} \dot \tK^{(n)}_{i}\Big),\\
    F(\mE^{(n)}_\theta \otimes \id ) = \sup_{\rho} \min_{\mE^{(n)}_\theta(\cdot) = \sum_i \tK^{(n)}_{\theta,i} (\cdot)\tK^{(n)\dagger}_{\theta,i}} 4 \trace\Big(\rho \sum_i \dot \tK^{(n)\dagger}_{i} \dot \tK^{(n)}_{i}\Big),
    \end{gather}
    where $\tK^{(n)}_{i}$ is any Kraus representation of $\mE^{(n)}$. Since $\ket{\psi}\bra{\psi} \leq \id$ (we use $A \leq B$ to indicate $B - A$ is positive semi-definite), we have the following upper bound: 
    \begin{equation}
    \label{eq:upper-channel}
    F(\mE^{(n)}_\theta) \leq F(\mE^{(n)}_\theta \otimes \id) \leq 4 \trace\Big(\sum_i \dot \tK^{(n)\dagger}_{i} \dot \tK^{(n)}_{i}\Big),
    \end{equation}
    where $\tK^{(n)}_{\theta,i}$ can be any Kraus representation of $\mE^{(n)}_\theta$. By choosing a proper Kraus representation, the following Lemma that involves an expression of an upper bound on $F(\mE^{(n)}_\theta)$ can be derived, which later allows us to further obtain non-trivial upper bounds on $F(\mE^{(n)}_\theta)$ for dephasing-class channels. 
    \begin{lemma}[Refined channel extension method]
    \label{lemma:channel-extension}
    For any quantum channel to be estimated and any corresponding quantum controls, the channel QFI of \eqref{eq:entire-channel} satisfies 
    \begin{equation}
        F(\mE^{(n)}_\theta) \leq F(\mE^{(n)}_\theta \otimes \id) \leq \sum_{k=1}^{n} 4 \trace(\iota_{k-1} \alpha_k) + \sum_{k=1}^{n-1} 8\, \trace(\ugamma_k \beta_{k+1}),
    \end{equation}
    where the relevant Hermitian operators used here are defined by 
    \begin{gather}
    \alpha_k = \sum_{i_k} \dot \tK_{i_k}^\dagger \dot \tK_{i_k},\quad \beta_k = i \sum_{i_k}  \tK_{i_k}^\dagger \dot \tK_{i_k},\quad 
    \ugamma_{k} = \mC_{k}\circ\mE_\theta(\ugamma_{k-1}) + \ubeta_{k},\quad \ugamma_0 = 0,\label{eq:alpha-beta}\\
    \ubeta_k = \frac{1}{2} \bigg( i \sum_{i_k} \dot\tK_{i_k}  \iota_{k-1} \tK_{i_k}^\dagger - i \sum_{i_k} \tK_{i_k}  \iota_{k-1}  \dot\tK_{i_k}^\dagger\bigg),\label{eq:ubeta}\\
    \iota_{k-1} = \mE^{(k-1)}_\theta(\id) = \mC_{k-1}\circ\mE_\theta \circ \cdots \mC_{1}\circ \mE_\theta(\id),
    \end{gather}
    for all $ 1 \leq k \leq n$. 
    Here $\{\tilde K_{\theta,i_k}\}_{i_k=1}^{r_k}$ can be any Kraus representation of quantum channel $\mC_k\circ\mE_\theta$. (Note that although the Kraus operators for different $\mC_k$ are not necessarily the same, we slightly abuse the notation and only distinguish them using the subscript in $_k$ in the indices $i_k$ of Kraus operators for simplicity of notation.)
    \end{lemma}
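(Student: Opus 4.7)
\emph{Proof plan.} The plan is to begin from the general upper bound in \eqref{eq:upper-channel}, $F(\mE^{(n)}_\theta\otimes\id)\leq 4\trace(\sum_i \dot{\tK}^{(n)\dagger}_{i}\dot{\tK}^{(n)}_{i})$, which holds for \emph{any} Kraus representation of $\mE^{(n)}_\theta$. I will choose the natural concatenated representation $\tK^{(n)}_{(i_1,\ldots,i_n)}=\tK_{i_n}\cdots\tK_{i_1}$ built from a fixed Kraus representation $\{\tK_{i_k}\}$ of each step $\mC_k\circ\mE_\theta$, and apply the Leibniz product rule to obtain $\dot{\tK}^{(n)}_{(i_1,\ldots,i_n)}=\sum_{k=1}^{n}\tK_{i_n}\cdots\tK_{i_{k+1}}\dot{\tK}_{i_k}\tK_{i_{k-1}}\cdots\tK_{i_1}$. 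Expanding the square then decomposes $\trace(\sum_{(i)}\dot{\tK}^{(n)\dagger}\dot{\tK}^{(n)})$ into $n^2$ terms indexed by pairs $(k,k')$, which I split into a diagonal $k=k'$ contribution and an off-diagonal sum over $k\neq k'$.

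For the diagonal piece, I would use cyclicity of the trace to bring the left tail $\tK_{i_{k-1}}\cdots\tK_{i_1}\tK_{i_1}^\dagger\cdots\tK_{i_{k-1}}^\dagger$ adjacent to the central quadratic $\dot{\tK}_{i_k}^\dagger\dot{\tK}_{i_k}$. Summing indices $i_j$ with $j>k$ collapses the right tail to $\id$ via $\sum_{i_j}\tK_{i_j}^\dagger\tK_{i_j}=\id$, while summing $i_1,\ldots,i_{k-1}$ telescopes the left tail iteratively through $\mC_\ell\circ\mE_\theta$ into $\iota_{k-1}$. What remains is $\sum_{i_k}\trace(\iota_{k-1}\dot{\tK}_{i_k}^\dagger\dot{\tK}_{i_k})=\trace(\iota_{k-1}\alpha_k)$, producing the first sum $4\sum_k\trace(\iota_{k-1}\alpha_k)$.

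For each off-diagonal pair $k<k'$ the same collapsing procedure sends the $j>k'$ indices to $\id$ and the $j<k$ indices to $\iota_{k-1}$ sandwiched between $\dot{\tK}_{i_k}$ and $\tK_{i_k}^\dagger$. Summing $i_k$ yields the operator $M_k:=\sum_{i_k}\dot{\tK}_{i_k}\iota_{k-1}\tK_{i_k}^\dagger$, which the parameter-independent channels $\mC_j\circ\mE_\theta$ for $k<j<k'$ propagate forward to $\Phi_{k+1\to k'-1}(M_k)$, with $\Phi_{k+1\to k'-1}:=\mC_{k'-1}\circ\mE_\theta\circ\cdots\circ\mC_{k+1}\circ\mE_\theta$; summing $i_{k'}$ of $\dot{\tK}_{i_{k'}}^\dagger\tK_{i_{k'}}$ gives $i\beta_{k'}$ by the differentiated CPTP identity $\sum_i(\dot{\tK}_i^\dagger\tK_i+\tK_i^\dagger\dot{\tK}_i)=0$. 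The conjugate contribution from $(k',k)$ replaces $M_k$ by $M_k^\dagger$, so that adding the two gives $2\trace(\beta_{k'}\Phi_{k+1\to k'-1}(\ubeta_k))$ once $M_k-M_k^\dagger=-2i\ubeta_k$ is recognized. Fixing $k'$ and summing over $k<k'$, the definition $\ugamma_{k'-1}:=\sum_{k=1}^{k'-1}\Phi_{k+1\to k'-1}(\ubeta_k)$ immediately satisfies the stated recursion $\ugamma_k=\mC_k\circ\mE_\theta(\ugamma_{k-1})+\ubeta_k$ with $\ugamma_0=0$. Multiplying by the overall factor of $4$ gives the second sum $8\sum_k\trace(\ugamma_k\beta_{k+1})$, completing the bound.

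I expect the main obstacle to be pure bookkeeping: keeping the Kraus-product orderings, the direction of cyclic permutations, and the conventions for complex conjugation and the factors of $i$ perfectly consistent, so that the anti-Hermitian part of $M_k$ lands on exactly $-2i\ubeta_k$ and the telescoping over the off-diagonal double sum reproduces the intended recursion defining $\ugamma_k$. Everything else follows from the CPTP condition, its derivative, and a single reindexing of the double sum.
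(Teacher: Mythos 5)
Your proposal is correct and follows essentially the same route as the paper: both start from the bound $F(\mE^{(n)}_\theta\otimes\id)\leq 4\trace\big(\sum_{\vi}\dot\tK^{(n)\dagger}_{\vi}\dot\tK^{(n)}_{\vi}\big)$ with the concatenated Kraus representation, and your explicit diagonal/off-diagonal split of the $(k,k')$ double sum is exactly what the paper's recursion $\alpha^{(n)}=\alpha^{(n-1)}+\cdots$ produces once unrolled, with the same identifications $\sum_{\vi}\tK^{(k-1)}_{\vi}\tK^{(k-1)\dagger}_{\vi}=\iota_{k-1}$, $M_k-M_k^\dagger=-2i\ubeta_k$, and the telescoping that yields the recursion for $\ugamma_k$. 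The sign bookkeeping you flag does work out as you describe, so no gap remains.
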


    \begin{proof}
    An obvious choice of the Kraus representation of the entire quantum channel $\mE^{(n)}_\theta$ is given by 
    \begin{equation}
    \tK^{(n)}_{\theta,\vi} = \tK_{\theta,i_n} \cdots \tK_{\theta,i_1},
    \end{equation}
    where $\{\tK_{\theta,i_k}\}_{i_k=1}^{r_k}$ is a Kraus representation of $\mC_k \circ \mE_\theta$ and $\vi = (i_1,i_2,\ldots,i_n) \in [r_1] \times [r_2] \times \cdots \times [r_n]$. Its derivative is 
    \begin{equation}
    \begin{split}
    \dot\tK^{(n)}_{\vi} 
    = \dot\tK_{i_n} \cdots \tK_{i_1} 
    + \tK_{i_n} \dot\tK_{i_{n-1}}  \cdots \tK_{i_1} 
    + \cdots 
    + \tK_{i_n} \cdots \dot\tK_{i_1}
    = \dot\tK_{i_n} \tK^{(n-1)}_{\vi} + \tK_{i_n} \dot \tK^{(n-1)}_{\vi}, 
    \end{split}
    \end{equation}
    where $\tK^{(k)}_{\vi} := K_{\theta,i_k} \cdots K_{\theta,i_1}$ is the Kraus operator of $\mE^{(k)}$. Let $\alpha^{(n)} = \sum_\vi \dot \tK^{(n)\dagger}_{\vi} \dot \tK^{(n)}_{\vi}$, we then have 
    \begin{equation}
    \alpha^{(n)} = \alpha^{(n-1)} + \sum_{i_1\cdots i_{n-1}}  \tK^{(n-1)\dagger}_{\vi} \alpha_n \tK^{(n-1)}_{\vi} +  \tK^{(n-1)\dagger}_{\vi} i\beta_n \dot\tK^{(n-1)}_{\vi} + \dot\tK^{(n-1)\dagger}_{\vi} (-i\beta_n) \tK^{(n-1)}_{\vi},
    \end{equation}
    where $\alpha_n = \sum_{i_n} \dot \tK_{i_n}^\dagger \dot \tK_{i_n}$, $\beta_n = i \sum_{i_n}  \tK_{i_n}^\dagger \dot \tK_{i_n} = - i\sum_{i_n}  \dot\tK_{i_n}^\dagger \tK_{i_n}$. Continuing the iteration, we have 
    \begin{equation}
    \alpha^{(n)} = \sum_{k=1}^n \bigg(\sum_{\vi}  \tK^{(k-1)\dagger}_{\vi} \alpha_{k} \tK^{(k-1)}_{\vi}\bigg) + \sum_{k=1}^n \bigg( \sum_{\vi} \tK^{(k-1)\dagger}_{\vi} i\beta_k \dot\tK^{(k-1)}_{\vi} +  \dot\tK^{(k-1)\dagger}_{\vi} (-i\beta_k) \tK^{(k-1)}_{\vi} \bigg). 
    \end{equation}
    \eqref{eq:upper-channel} implies that 
    \begin{equation}
    \begin{split}
    \frac{F(\mE_\theta^{(n)} \otimes \id)}{4}\leq \trace(\alpha^{(n)}) 
    &= \sum_{k=1}^n \trace\bigg(\sum_{\vi}  \tK^{(k-1)}_{\vi} \tK^{(k-1)\dagger}_{\vi} \alpha_{k} \bigg)  + \sum_{k=1}^n \trace\bigg( \sum_{\vi} \left( i\dot\tK^{(k-1)}_{\vi}\tK^{(k-1)\dagger}_{\vi} - i\tK^{(k-1)}_{\vi} \dot\tK^{(k-1)\dagger}_{\vi} \right) \beta_k \bigg). 
    \end{split}
    \end{equation} 
    To further simplify, we use 
    \begin{equation}
    \begin{split}
    & \quad\; \frac{1}{2}\bigg( i \sum_\vi \dot\tK^{(k-1)}_{\vi}\tK^{(k-1)\dagger}_{\vi} - i \sum_\vi \tK^{(k-1)}_{\vi}\dot\tK^{(k-1)\dagger}_{\vi} \bigg)\\
    &    =  \sum_\vi   (\tK_{i_{k-1}} \cdots \tK_{i_2})   (\ubeta_1 ) ( \tK_{i_{k-1}} \cdots  \tK_{i_2}) ^\dagger  +  (\tK_{i_{k-1}} \cdots  \tK_{i_3})   (\ubeta_2 ) (\tK_{i_{k-1}} \cdots \tK_{i_3}) ^\dagger + \cdots +  \ubeta_{k-1} \\
    & =    \mC_{k-1}\circ\mE_\theta \circ\cdots \circ \mC_{2}\circ\mE_\theta(\ubeta_1)  + \mC_{k-1}\circ\mE_\theta \circ\cdots \mC_{3}\circ\mE_\theta (\ubeta_2) + \cdots +  \ubeta_{k-1}. 
    \end{split}
    \end{equation}
    Then we have 
    \begin{equation}
    \trace(\alpha^{(n)}) =  \sum_{k=1}^{n} \trace(\iota_{k-1} \alpha_k) + \sum_{k=1}^{n-1} 2\, \trace(\ugamma_k \beta_{k+1}),  
    \end{equation}
    using the definitions of $\ubeta_k$ and $\ugamma_k$. 
    \end{proof}
\subsection{The most general form of one-parameter dephasing-class channels}

\lemmaref{lemma:channel-extension} applies to any general quantum channel $\mE_\theta$. Here we focus on dephasing-class qubit channels. 

According to \thmref{thm:single-qubit-HNKS}, the HNKS condition is satisfied only when $\mE_\theta$ is a dephasing-class channel (except for unitary channels).  That means, there exists $U$ and $V$ such that 
\begin{equation}
V \mE(U\rho U^\dagger) V^\dagger = (1-p)  \rho + p  Z  \rho  Z, 
\end{equation}
for some $p \in (1,1/2]$. 
In the metrological strategies we consider in this Letter, arbitrary single-qubit unitary controls are always allowed in the system. Therefore, we can always apply unitary rotations such that $\mE_\theta$ is a dephasing channel at $\theta = 0$. Without loss of generality, we can assume $\mE_\theta$ depends on $\theta$ in the following way: 
\begin{equation}
\label{eq:one-parameter-dephasing-qubit-channel}
\mE_\theta(\rho) = (1-p_\theta) e^{-i G_0 \theta} \rho e^{i G_0 \theta} + p_\theta  Z e^{-i G_1 \theta} \rho e^{-i G_1 \theta} Z, 
\end{equation}
where $p_\theta \in (0,1/2]$ is differentiable and $G_0$, $G_1$ are traceless Hermitian operators independent of $\theta$.

Below we explain why \eqref{eq:one-parameter-dephasing-qubit-channel} is sufficient to represent any one-parameter qubit dephasing-class channel that we consider up to unitary rotations. We note that any QFI $F(\rho_\theta)$ is completely determined by $\rho_\theta$ and its derivative $\dot\rho_\theta$ at its true value ($\theta = 0$). Therefore, it is sufficient to show the derivative of \eqref{eq:one-parameter-dephasing-qubit-channel} is has the most general form. To see this, remember that in \appref{app:qubit}, we assumed any dephasing-class channel $\mE_\theta$, \emph{up to unitary rotations}, can be written as 
\begin{equation}
\label{eq:dephasing-1}
    V_\theta\mE_\theta(U_\theta \rho U_\theta^\dagger)V_\theta^\dagger = (1 - \tp_\theta) \rho + \tp_\theta Z \rho Z,
\end{equation}
where we can assume $V_\theta,U_\theta$ and $\tp_\theta$ are differentiable and $V_\theta=U_\theta=\id$ at $\theta = 0$. Moreover, $H_0 = -i U_\theta^\dagger \partial_\theta U_\theta$ and $H_1 = -i V_\theta^\dagger \partial_\theta V_\theta$ are two traceless Hermitian operators. We assert that the derivative of \eqref{eq:one-parameter-dephasing-qubit-channel} is the same as the derivative of $\mE_\theta$ in \eqref{eq:dephasing-1} when 
\begin{equation}
\label{eq:dephasing-2}
    p_\theta = \tp_\theta, \quad G_0 = H_1 + H_0,\quad G_1 = Z H_1 Z + H_0. 
\end{equation}
This can be verified straightforwardly, noting that $\mE_\theta$ in \eqref{eq:dephasing-1} can be represented using Kraus operators $\tK_{\theta,0} = \sqrt{1-p_\theta} V_\theta^\dagger U_\theta^\dagger$ and $\tK_{\theta,1} = \sqrt{p_\theta}  V_\theta^\dagger Z U_\theta^\dagger$. Clearly, at $\theta = 0$, they satisfy 
\begin{gather}
\label{eq:dephasing-3}
\tK_0 = \sqrt{1-\tp}\id,\quad \tK_1 = \sqrt{\tp}  Z,\\ 
\label{eq:dephasing-4}
    \dot \tK_0 = -i\sqrt{1-\tp}(H_0 + H_1) - \frac{\dot \tp}{2\sqrt{1-\tp}} \id,\quad \dot \tK_1 = -i\sqrt{\tp}(H_1 Z + Z H_0) + \frac{\dot \tp}{2\sqrt{\tp}} Z.   
\end{gather}
On the other hand, from \eqref{eq:one-parameter-dephasing-qubit-channel}, $\mE_\theta$ can be represented using Kraus operators $K_{\theta,0} = \sqrt{1-p_\theta} e^{-iG_0\theta}$ and $K_{\theta,1} = \sqrt{p_\theta} Z e^{-iG_1\theta}$. They satisfies 
\begin{gather}
K_0 = \sqrt{1-p} \id,\quad K_1 = \sqrt{p} Z,\\ 
    \dot K_0 = -i\sqrt{1-p}G_0 - \frac{\dot p}{2\sqrt{1-p}} \id,\quad \dot K_1 = -i\sqrt{p}ZG_1 + \frac{\dot p}{2\sqrt{p}} Z,  
\end{gather}
which are clearly identical to \eqref{eq:dephasing-3} and \eqref{eq:dephasing-4} if \eqref{eq:dephasing-2} holds, proving the generality of \eqref{eq:one-parameter-dephasing-qubit-channel}. 
Finally, we remark that there are can multiple sets of $(G_0,G_1)$ representing the same one-parameter quantum channel around a local point $\theta = 0$. But we will not specify a choice. Additionally, we note that the condition that $\exists G_0,G_1$ representing $\mE_\theta$ such that at least one of $G_0,G_1 \not\propto Z$ is equivalent to the RGNKS condition, i.e., $\exists H_0,H_1$ representing $\mE_\theta$ such that at least one of $H_0,H_1 \not\propto Z$.

Below we explain the degree of freedom in choosing the Kraus representation of \eqref{eq:one-parameter-dephasing-qubit-channel}. As mentioned before, one natural choice of the Kraus operators for \eqref{eq:one-parameter-dephasing-qubit-channel} are 
\begin{equation}
\label{eq:natural-kraus}
K_{\theta,0} = \sqrt{1-p_\theta}e^{-iG_0 \theta},\quad K_{\theta,1} = \sqrt{p_\theta}Ze^{-iG_1 \theta}. 
\end{equation}
In general (see e.g., discussions below \eqref{eq:purification-channel} and \eqref{eq:purification-channel-2}), we shall choose 
\begin{equation}
\begin{pmatrix}
  \tK_{\theta,0}\\
  \tK_{\theta,1}
\end{pmatrix}
= (e^{-i h \theta} \otimes \id)
\begin{pmatrix}
  K_{\theta,0}\\
  K_{\theta,1}
\end{pmatrix},
\end{equation}
where $h$ is an arbitrary Hermitian matrix in $\bC^{2\times 2}$ that represents the degree of freedom in choosing Kraus operators. Then the Kraus operators and their derivatives are 
\begin{gather}
\label{eq:tk01} \tK_{0} = \sqrt{1-p}\id,\quad \tK_{1} = \sqrt{p}Z,\\
\label{eq:tk0d} i \dot \tK_{0} = i \dot K_{0} + h_{00} K_0 + h_{01} K_1 = \frac{-i\dot p}{2\sqrt{1-p}}\id + \sqrt{1-p} G_0 + h_{00} \sqrt{1-p}\id + h_{01}\sqrt{p}Z,\\
\label{eq:tk1d} i \dot \tK_{1} = i \dot K_{1} + h_{10} K_0 + h_{11} K_1  = \frac{i\dot p}{2\sqrt{p}}Z + \sqrt{p} Z G_1 + h_{10} \sqrt{1-p}\id + h_{11}\sqrt{p}Z. 
\end{gather}
For some fixed $k$ and taking $\mC_k = \id$, the operators $\alpha_k, \beta_k$ and $\ubeta_k$ that appear in \eqref{eq:alpha-beta} and \eqref{eq:ubeta} in \lemmaref{lemma:channel-extension} can be calculated as follows, where $h \in \bC^{2\times 2}$ can be chosen as any Hermitian matrix for our convenience.
{\small
\begin{align}
    \alpha_k|_{\mC_k = \id}
    &= \sum_{i} \dot \tK_{i_k}^\dagger \dot \tK_{i_k} = 2 (1-p)h_{00}G_0 + 2 p h_{11}G_1 + 2\sqrt{p(1-p)}(h_{00}+h_{11})\Re[h_{01}]  Z \nonumber\\
     + & \left(\frac{1-p}{2}\trace(G_0^2) + \frac{p}{2}\trace(G_1^2) + (1-p)h_{00}^2 + p h_{11}^2 + \abs{h_{01}}^2 + \Re[h_{01}] \sqrt{p(1-p)} \trace((G_0+G_1)Z) + \frac{\dot p^2}{4p(1-p)}\right) \id \nonumber\\
     + & \sqrt{p(1-p)} \Im[h_{01}]\left( - \trace((G_0 -G_1)Y) X + \trace((G_0 -G_1)X) Y + \frac{-\dot p}{p(1-p)}Z\right),\label{eq:alpha-general}
\\
    \beta_k|_{\mC_k = \id} &= i \sum_{i} \tK_{i_k}^\dagger\dot \tK_{i_k}
    = (1-p)G_0 + p G_1 + ((1-p)h_{00} + p h_{11}) \id +  2\sqrt{p(1-p)}\Re[h_{10}] Z
    \label{eq:beta-general}
\\
    \ubeta_k |_{\mC_k = \id}
    &= \frac{1}{2} \sum_{i_k} i\dot \tK_{i_k} \iota_{k-1}  \tK_{i_k}^\dagger - i \tK_{i_k} \iota_{k-1} \dot \tK_{i_k} ^\dagger, \nonumber\\
    &= \frac{1-p}{2}\{\iota_{k-1},G_0\} + \frac{p}{2}Z\{\iota_{k-1},G_1\}Z + (1-p) h_{00} \iota_{k-1} + p h_{11}  Z \iota_{k-1} Z + \sqrt{p(1-p)}\Re[h_{10}] \{\iota_{k-1},Z\}.
    \label{eq:ubeta-general}
\end{align}}In general, for any unital channel $\mC_k$, we will take the Kraus operators for $\mC_k \circ \mE_k$ to be $\tK_{(a,b)_k} = C_{a_k}  \tK_{b_k}$ where $\{C_{a_k}\}_a$ are Kraus operators of $\mC_k$ and $\{\tK_{b_k}\}_b$ are Kraus operators of $\mE_k$ for some specific choices of $h$. As a result, we have 
\begin{equation}
    \alpha_k = \alpha_k|_{\mC_k = \id}, \quad \beta_k = \beta_k|_{\mC_k = \id}, \quad \ubeta_k = \mC_k(\ubeta_k|_{\mC_k = \id}). 
\end{equation}
For simplicity, we also define the following notations for later use: 
\begin{equation}
G_{+} := (1-p) G_0 + p G_1,
\quad 
G_{-} := (1-p) G_0 - p G_1,
\end{equation}
Also, note that $h$ can in principle be different for each $k$ but we omit the subscript $k$ for simplicity.

\subsection{Unital controls}
\label{app:dephasing-unital}

We first consider the situation where $\mC_k$ are unital controls (Note that single-qubit unital channels are equivalent to single-qubit mixed unitary channels~\cite{watrous2018theory}) and we will extend the result to non-unital controls later. In this section, $\mC_k$ are all assumed to be unital and denoted by $\mU_k$. For unital controls $\{\mU_k\}_{k=1}^n$, we have for $1 \leq k \leq n$, 
\begin{equation}
    \iota_{k-1} = \mE^{(k-1)}_\theta(\id) = \mU_{k-1}\circ\mE_\theta \circ \cdots \mU_{1}\circ \mE_\theta(\id) = \id,
\end{equation}
because dephasing-class channels $\mE_\theta$ are also unital. 

We first calculate $\alpha_k, \beta_k$ and $\ubeta_k$. Let 
\begin{equation}
\label{eq:h-choice}
h_{00} = h_{11} = 0 \text{~~and~~} 
    h_{01} = h_{10} = -\frac{(1-p)\trace(G_0 Z) + p \trace(G_1 Z)}{4\sqrt{p(1-p)}} 
\end{equation}
in \eqsref{eq:alpha-general}{eq:ubeta-general}. Using $\iota_{k-1} = \id$, we have 
\small 
\begin{align}
    \alpha_k 
    &= \bigg(\!\frac{1-p}{2}\trace(G_0^2) + \frac{p}{2}\trace(G_1^2) + \frac{(1-p)(1-4p)}{16p}\trace(G_0Z)^2 - \frac{1}{8} \trace(G_0Z)\trace(G_1Z) - \frac{(3-4p)p}{16(1-p)}\trace(G_1Z)^2 + \frac{\dot p^2}{4p(1-p)}\!\bigg) \id,
    \label{eq:alpha-dephasing-unital}
    \\
    \beta_k &= \frac{\trace(G_+ X)}{2} X  + \frac{\trace(G_+ Y)}{2} Y,
    \label{eq:beta-dephasing-unital}
\\
    \ubeta_k 
    &=  \frac{\trace(G_- X)}{2} \mU_k(X)  + \frac{\trace(G_- Y)}{2} \mU_k(Y).
    \label{eq:ubeta-dephasing-unital}
\end{align}
\normalsize
Note that the Hamiltonian of our dephasing channel (\eqref{eq:one-parameter-dephasing-qubit-channel}) is $H = G_+$ (using Kraus representation in \eqref{eq:natural-kraus}), the theorem below is trivially true when $G_+ \in {\rm span}\{\id,Z\}$ from previous results~\cite{zhou2021asymptotic}. Therefore, we will focus only on the case where $G_+ \notin {\rm span}\{\id,Z\}$ below. This implies $\trace(\beta_k^2) > 0$ for all $k$.  
Now we are ready to prove the following theorem. 
    \begin{theorem}[QFI upper bound for estimating dephasing-class channels using ancilla-free sequential strategies with unital controls]  
\label{thm:dephasing-unital}
    Consider estimating an unknown parameter $\theta$ in a dephasing-class channel $\mE_\theta$ of the most general form in \eqref{eq:one-parameter-dephasing-qubit-channel} using the ancilla-free sequential strategy and assume all controls $\mU_k$ are unital, the channel QFI of \eqref{eq:entire-channel} satisfies 
    \begin{equation}
    F(\mE_\theta^{(n)}) \leq F(\mE^{(n)}_\theta \otimes \id) \leq \mu_1 n,  
    %+ O(1/n),
    \end{equation}
    where $\mu_1$ is a constant independent of $n$ and $\mu_1 = O(1/p_\theta)$ for small $p_\theta$.
    \end{theorem}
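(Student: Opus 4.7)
\emph{Proof proposal.} The plan is to begin from the Refined Channel Extension Method (\lemmaref{lemma:channel-extension}), which bounds $F(\mE^{(n)}_\theta \otimes \id) \leq \sum_{k=1}^{n} 4 \trace(\iota_{k-1} \alpha_k) + \sum_{k=1}^{n-1} 8\,\trace(\ugamma_k \beta_{k+1})$. Since every $\mU_k$ is unital and $\mE_\theta$ is dephasing-class (hence unital at $\theta = 0$), $\iota_{k-1} = \id$ throughout. Fixing the Kraus gauge of \eqref{eq:h-choice} then makes the explicit formulas \eqref{eq:alpha-dephasing-unital}--\eqref{eq:ubeta-dephasing-unital} available: $\alpha_k$ is a $k$-independent positive multiple of the identity with $\trace(\alpha_k) = O(1/p)$; $\beta_{k+1}$ is a $k$-independent traceless operator in ${\rm span}\{X,Y\}$; and $\ubeta_k = \mU_k(c_X X + c_Y Y)$ is the image under $\mU_k$ of a fixed vector in ${\rm span}\{X,Y\}$. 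Consequently the first sum contributes $O(n/p)$ immediately, which already accounts both for the linear-in-$n$ scaling and for the $1/p$ prefactor in $\mu_1$.

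Everything else reduces to bounding $\sum_{k=1}^{n-1} 8\trace(\ugamma_k\beta_{k+1})$ by $O(n/p)$. I would pass to the Bloch picture: let $R_k$ be the real $3\times 3$ contraction matrix of $\mU_k$, let $D = \diag(1-2p, 1-2p, 1)$ be the Bloch matrix of $\mE_0$, and let $\vec{b}$ and $\vec{c}_0$ be the (both $XY$-plane) Bloch vectors of $\beta_{k+1}$ and of $c_X X + c_Y Y$. The recurrence \eqref{eq:alpha-beta} becomes $\vec{\gamma}_k = R_k D \vec{\gamma}_{k-1} + R_k \vec{c}_0$, so the task is to control $\vec{b}^{\,T} \sum_{k=1}^{n-1} \vec{\gamma}_k = \sum_{j=1}^{n-1} \sum_{k=j}^{n-1} \vec{b}^{\,T} (R_k D \cdots R_{j+1} D) R_j \vec{c}_0$. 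Equivalently, passing to the adjoint side produces a backward recursion $\tilde\beta_k := \beta_{k+1} + \mE_0 \mU_{k+1}^\dagger(\tilde\beta_{k+1})$ with $\tilde\beta_{n-1} = \beta_n$, for which $\sum_k 8\trace(\ugamma_k \beta_{k+1}) = \sum_k 8\trace(\ubeta_k \tilde\beta_k)$, so the problem reduces to showing $\|\tilde\beta_k\|$ is uniformly bounded in $k$.

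The main obstacle is that $D$ preserves the $Z$-direction, so an arbitrary $\mU_k$ can rotate $XY$-content into $Z$, let it propagate undamped through many subsequent applications of $\mE_0$, and rotate it back into $XY$ much later; a naive operator-norm estimate then yields only $\|\tilde\beta_k\| = O(n)$ and a useless $O(n^2)$ bound on the full sum. The decisive algebraic input that saves the day is that both $\vec{b}$ and $\vec{c}_0$ live in the $XY$-plane, on which $D$ acts by strict contraction by $1-2p$. I would make this quantitative by splitting each $\tilde\beta_k$ into its $XY$- and $Z$-projections via the Pauli orthogonality $\trace(XZ) = \trace(YZ) = 0$, and using the key observation that whenever a $Z$-component of $\tilde\beta_{k+1}$ is rotated back into the $XY$-plane by $\mU_{k+1}^\dagger$, the very next application of $\mE_0$ damps that $XY$-content by the factor $1-2p$. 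Iterating this observation produces a geometric series whose partial sums are bounded by $O(1/p)$, giving $\|\tilde\beta_k\| = O(1/p)$ uniformly in $k$; combined with the uniform bound $\|\ubeta_k\| \leq \|c_X X + c_Y Y\|$, this yields $\sum_k 8\trace(\ubeta_k \tilde\beta_k) = O(n/p)$ and hence $F(\mE^{(n)}_\theta \otimes \id) \leq \mu_1 n$ with $\mu_1 = O(1/p_\theta)$, as claimed.
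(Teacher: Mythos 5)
Your setup (the refined channel-extension bound of \lemmaref{lemma:channel-extension}, $\iota_{k-1}=\id$ for unital controls, the gauge choice \eqref{eq:h-choice} putting $\beta_k$ and $\ubeta_k|_{\mU_k=\id}$ in ${\rm span}\{X,Y\}$, and the $O(n/p)$ first sum) coincides with the paper's. The gap is in your treatment of the second sum: the claimed uniform bound $\norm{\tilde\beta_k}=O(1/p)$ on the backward-propagated quantity $\tilde\beta_k=\beta_{k+1}+\mE_0\circ\mU_{k+1}^\dagger(\tilde\beta_{k+1})$ is false. The paper's own SQL-achieving protocol (\appref{app:sql}), with every control equal to $U=\exp(-i\tfrac{1}{2}\sqrt{w/n}\,X)$, is a counterexample: the matrix $DS^T$ governing the backward recursion has an eigenvalue $1-\Theta(\varphi^2/p)$ with $\varphi=\sqrt{w/n}$, whose eigenvector is mostly $Z$ but has a $\Theta(\varphi/p)$ component along $Y$; the source $\vec b\in{\rm span}\{X,Y\}$ therefore feeds $\Theta(\varphi/p)$ into that slow mode at every step, and summing $\Theta(n)$ steps of a mode that decays only on the timescale $\Theta(p n/w)$ gives $\norm{\tilde\beta_0}=\Theta(\sqrt{n})$, not $O(1/p)$. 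Your heuristic ("$Z$-content that returns to the $XY$-plane is immediately damped, hence a geometric series") fails because a control can transfer $XY$-content into the undamped $Z$-direction \emph{coherently over many steps}, each step contributing only $O(\varphi)$ but with no damping acting while the content sits in $Z$; the resulting accumulation is not geometric. In this same example the total $\sum_k\trace(\ugamma_k\beta_{k+1})$ is indeed still $O(n/p)$, but only because of a cancellation between the large, mostly-$Z$ part of $\tilde\beta_k$ and the mostly-$XY$ operator $\ubeta_k$ — a cancellation invisible to any term-by-term norm estimate. (A term-by-term bound $\abs{\trace(\ugamma_k\beta_{k+1})}\leq\norm{\ugamma_k}_1\norm{\beta_{k+1}}$ with $\norm{\ugamma_k}=O(\sqrt{k})$ is exactly what the paper settles for in the CPTP-control case, and it only yields $O(n^{3/2})$.)

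What is needed — and what the paper supplies — is an \emph{amortized} argument rather than a uniform one. The paper uses $\trace(\ugamma_k^2)$ as a potential: the recurrence $\ugamma_{k+1}=\mU_{k+1}\circ\mE_\theta(\ugamma_k)+\ubeta_{k+1}$ together with the contractivity of unital maps shows the potential increases by at most $\frac{\ub_1^2+\ub_2^2}{4p(1-p)}$ per step, while on any step where $\abs{\trace(\ugamma_k\beta_{k+1})}$ exceeds the threshold $\sim\frac{(1-2p)b\abs{\ub_1}}{2p(1-p)}$ the potential strictly \emph{decreases} by an amount quadratic in the excess (because $((1-2p)x+\ub)^2-x^2<0$ for large $x$). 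Balancing the total increase against the total decrease bounds both the number and the magnitude of the large terms, giving $\sum_k\abs{\trace(\ugamma_k\beta_{k+1})}=O(n/p)$. If you want to keep your backward formulation, you would need an analogous potential argument on $\tilde\beta_k$; the uniform norm bound as stated cannot be repaired.
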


    \begin{proof}
    According to \lemmaref{lemma:channel-extension} and the discussion above, we only need to show that 
    \begin{equation}
        \sum_{k=1}^{n} 4\trace(\iota_{k-1} \alpha_k) + \sum_{k=1}^{n-1} 8\, \trace(\ugamma_k \beta_{k+1}) \leq \mu_1 n,% + O(1/n),
    \end{equation} 
    where $\iota_{k-1} = \id$, $\alpha_k$ and $\beta_k$ are chosen as in \eqref{eq:alpha-dephasing-unital} and \eqref{eq:beta-dephasing-unital}, and 
    \begin{equation}
    \ugamma_0 = 0,\quad \ugamma_{k} = \mU_{k}\circ\mE_\theta(\ugamma_{k-1}) + \ubeta_{k},
    \end{equation}
    where $\ubeta_{k}$ is chosen as in \eqref{eq:ubeta-dephasing-unital}. 
    Note that for all $1 \leq k \leq n$, 
    \begin{multline}
    4\trace(\iota_{k-1} \alpha_k) = 4\trace(\alpha_k) = 4(1-p)\trace(G_0^2) + 4p\trace(G_1^2) + 
    \\ \frac{(1-p)(1-4p)}{2p}\trace(G_0Z)^2 - \trace(G_0Z)\trace(G_1Z) - \frac{(3-4p)p}{2(1-p)}\trace(G_1Z)^2 + \frac{2\dot p^2}{p(1-p)}. 
    \end{multline}
    We only need to prove that 
    \begin{equation}
    \sum_{k=1}^{n-1} \, \trace(\ugamma_k \beta_{k+1}) 
    \leq \mu n .% + O\left(\frac{1}{n}\right). 
    \end{equation}
    for some constant $\mu = O(1/p)$. 
    In fact, we will prove that 
    \begin{equation}
    \label{eq:upper}
    \sum_{k=1}^{n-1} \, \abs{\trace(\ugamma_k \beta_{k+1})} 
    \leq \mu n .%+ O\left(\frac{1}{n}\right). 
    \end{equation}

We assume $n \geq 2$. Then the average of $|\trace(\ugamma_k\beta_{k+1})|$ satisfies 
\begin{equation}
\label{eq:cond-2}
    f(n) := \frac{\sum_{k=1}^{n-1} \abs{\trace(\ugamma_k \beta_{k+1})}}{n-1} >
    %\left(\frac{1}{2p(1-p)}-2\right)(1+\delta), 
    %\frac{\abs{b}}{2}\frac{f(n)}{2} \geq 
    \frac{(1-2p)b\abs{\ub_1}}{2p(1-p)} (1+\delta), 
\end{equation}
where $b>0$ and $\ub_1$ are real $O(1)$ constants depending on $\mE_\theta$ that will be specified later, and $\delta = 1$. (Note that choosing $\delta$ to be a different number will improve our upper bound by a constant factor but we will not perform the calculation here as it will not be informative).  If \eqref{eq:cond-2} is not correct, we have proven \eqref{eq:upper}. 
Below we will consider only the case where \eqref{eq:cond-2} holds. 
We consider terms $\abs{\trace(\ugamma_k \beta_{k+1})}$ that is above $1/(1+\delta)$ of the average and define the average of these terms to be
\begin{equation}
g(n) := \frac{1}{n_0} \sum_{l: \abs{\trace(\ugamma_l \beta_{l+1})} > \frac{f(n)}{1+\delta}, \,l \leq n-1} \abs{\trace(\ugamma_l\beta_{l+1})},
\end{equation}
where $n_0 = \abs{\left\{l: \abs{\trace(\ugamma_l \beta_{l+1})} > \frac{f(n)}{1+\delta}, \,l \leq n-1\right\}}$ is the number of terms that is above $1/(1+\delta)$ of the average. 
Clearly, 
\begin{equation}
(n-1-n_0) \cdot \frac{f(n)}{1+\delta} + n_0 g(n) \geq f(n)(n-1)
~~\Rightarrow~~ 
n_0 \geq \frac{\delta (n-1)f(n)}{(1+\delta) g(n) - f(n)}. \label{eq:n0-lower}
\end{equation}

Note that $\trace(\ugamma_k) = \trace(\ubeta_k) = \trace(\beta_k) = \trace(\ubeta_k Z) = \trace(\beta_k Z) = 0$ for all $k$. For all $k$, $\beta_k$ are the same and we assume $\beta_k = b \tilde{X}$, where $b = \sqrt{\trace(\beta_k^2)}$ and $\tilde{X} = \frac{\beta_k}{\sqrt{\trace(\beta_k^2})}$. Let $\tilde{Y} = \frac{1}{2i}[\tilde{X},Z]$. Assume $\ubeta_k|_{\mU_k = \id} = \ub_1 \tilde{X} + \ub_2 \tilde{Y}$, $\ugamma_k = r_{1,k} \tilde{X} + r_{2,k} \tilde{Y} + r_{3,k} Z$. Then $r_{1,k} = \frac{1}{2}\trace(\ugamma_k \tilde{X}) = \frac{1}{2b}\trace(\ugamma_k\beta_{k+1})$. We have 
\begin{equation}
\frac{1}{4}\norm{\ugamma_k}_1^2 =  \frac{1}{2}\trace(\ugamma_k^2) = r_{1,k}^2 + r_{2,k}^2 + r_{3,k}^2,
\end{equation}
where $\norm{\cdot}_1$ is the trace norm, and 
\begin{equation}
\ugamma_{k+1} = \mU_{k+1}\circ\mE_\theta(\ugamma_k) + \ubeta_{k+1} = \mU_{k+1}\left(\left((1-2p) r_{1,k} + \ub_1\right) \tilde{X} + \left((1-2p) r_{2,k} + {\ub_2}\right) \tilde{Y} + r_{3,k} Z \right),
\end{equation}
Using the data-processing inequality for trace norm, we have 
\begin{align}
\frac{1}{4}\norm{\ugamma_{k+1}}_1^2 \leq \frac{1}{4}\norm{\ugamma_{k+1}}_1^2\big|_{\mU_{k+1} = \id}  
&= \frac{1}{2}\trace(\ugamma_{k+1}^2)\big|_{\mU_{k+1} = \id} 
=  ((1-2p)r_{1,k} + \ub_1)^2 + ((1-2p)r_{2,k} + \ub_2)^2 + r_{3,k}^2 
\\&\leq \frac{1}{4}\norm{\ugamma_{k}}_1^2 + ((1-2p)|r_{1,k}| + \abs{\ub_1})^2 - r_{1,k}^2 +  ((1-2p)^2|r_{2,k}| + \abs{\ub_2})^2 - r_{2,k}^2 
%\\&= \frac{1}{4}\norm{\ugamma_{k}}_1^2 - 4p(1-p) r_{1,k}^2 + 2(1-2p)^2|r_{1,k}| + (1-2p)^2,
\\&\leq \frac{1}{4}\norm{\ugamma_{k}}_1^2 + \frac{\ub_1^2+\ub_2^2}{4p(1-p)}.
\end{align}
We maximize over $r_{1,k}$ and $r_{2,k}$ in the last step. 
Since $((1-2p)x+\ub)^2 - x^2$ is a function of $x$ that decreases when $x \geq \frac{1-2p}{4p(1-p)}\ub$ for any $\ub>0$, we have for all $l$ such that $|r_{1,l}| = \frac{1}{2b}\abs{\trace(\ugamma_l \beta_{l+1})} > \frac{f(n)}{2b(1+\delta)} \geq \frac{1-2p}{4p(1-p)}\abs{\ub_1}$, 
\begin{equation}
\begin{split}
&\quad \sum_{l: \abs{\trace(\ugamma_l \beta_{l+1})} > \frac{f(n)}{1+\delta}} \frac{1}{4}\norm{\ugamma_{l+1}}_1^2 - \frac{1}{4}\norm{\ugamma_{l}}_1^2\\
&\leq \Bigg(\sum_{l: \abs{\trace(\ugamma_l \beta_{l+1})} > \frac{f(n)}{1+\delta}}  ((1-2p)|r_{1,k}| + \abs{\ub_1})^2 - r_{1,k}^2\Bigg) + n_0\frac{\ub_2^2}{4p(1-p)}\\
&\leq -n_0\frac{p(1-p)g(n)^2}{b^2} - n_0\frac{(1-2p)g(n)\abs{\ub_1}}{b} + n_0\ub_1^2 %\left((1-2p)\frac{g(n)}{2b} + \abs{\ub_1}\right)^2 - \left(\frac{g(n)}{2b}\right)^2 
+ n_0\frac{\ub_2^2}{4p(1-p)},
\end{split}
\end{equation}
where we use the Cauchy--Schwarz inequality in the last step. 
Then we consider dividing summation $\sum_{k=1}^{n-2}$ into two parts, $\sum_{k: \abs{\trace(\ugamma_k \beta_{k+1})} > \frac{f(n)}{1+\delta}} $ and $\sum_{k: \abs{\trace(\ugamma_k \beta_{k+1})} \leq \frac{f(n)}{1+\delta}}$, we have the following inequality: 
\begin{gather}
- \norm{\ugamma_{1}}_1^2 \leq \norm{\ugamma_{n-1}}_1^2 - \norm{\ugamma_{1}}_1^2 = \sum_{k=1}^{n-2} \norm{\ugamma_{k+1}}_1^2 - \norm{\ugamma_{k}}_1^2, \\
\Rightarrow~~ -(\ub_1^2+\ub_2^2) \leq - p(1-p) \frac{g(n)^2}{b^2} n_0 + (1-2p) \frac{g(n)\abs{\ub_1}}{b} n_0 + \ub_1^2n_0 + \frac{1}{4p(1-p)} ((n-1-n_0)\ub_1^2+(n-1)\ub_2^2),\nonumber \\
\Rightarrow~~ 
%n_0 \left( p(1-p) \frac{g(n)^2}{b^2} - (1-2p)\frac{g(n)\abs{b_1}}{b} + \frac{(1-2p)^2}{4p(1-p)} \ub_1^2\right) \leq \left(\frac{1}{4p(1-p)} + 1 \right)(n-1)(\ub_1^2+\ub_2^2).
n_0 p(1-p) \left( \frac{g(n)}{b} - \frac{(1-2p) \abs{\ub_1}}{2 p(1-p)} \right)^2 \leq \left(\frac{1}{4p(1-p)}+2\right) (n-1)(\ub_1^2+\ub_2^2),
\label{eq:n0-upper}
%\Rightarrow~~ \frac{\delta (n-1) f(n)}{(1+\delta)g(n) - f(n)} \left( p(1-p) g(n)^2 - (1-2p)^2 (g(n) + 1) + \frac{1}{4p(1-p)} - 1\right) \leq \left(\frac{1}{4p(1-p)}-1\right) (n-1) + (1-2p)^2,
\end{gather}
Note that $g(n) \geq f(n) > \frac{(1-2p) b}{2 p(1-p)}\abs{\ub_1}$ so the left-hand side is always positive. 
Combining \eqref{eq:n0-lower} (a lower bound on $n_0$) and \eqref{eq:n0-upper} (an upper bound on $n_0$), we have 
\begin{equation}
f(n) \leq \frac{(1+\frac{1}{\delta})g(n)}{\frac{1}{\delta}+ \frac{ p(1-p) \left( {g(n)} - \frac{(1-2p) b\abs{\barbelowsmall{b}_1}}{2 p(1-p)} \right)^2 }{\left(\frac{1}{4p(1-p)} + 2 \right)(b^2(\barbelowmedium{b}_1^2+\barbelowmedium{b}_2^2))} }.
\end{equation}
Maximizing the denominator over all $g(n) \geq 0$ and let $\delta = 1$, we have 
\begin{equation}
f(n) \leq \frac{2b}{p}  \sqrt{\left(\frac{(1-2p) b\abs{\ub_1}}{2 (1-p)}\right)^2  +  \left(\frac{1}{4(1-p)^2} + \frac{2p}{(1-p)} \right)(b^2(\ub_1^2+\ub_2^2))}. 
\end{equation}
\eqref{eq:upper} is then proven. 
\end{proof}

\subsection{Generalization to bounded ancilla}

Here we will consider situations where bounded ancilla is allowed and the unital controls (and the initial state) can act on the global system of the probe and the ancilla (see \figeref{fig:strategies}). We have the following theorem. 

\begin{theorem}[QFI upper bound for estimating dephasing-class channels using sequential strategies with unital controls and bounded noiseless ancilla]  
\label{thm:dephasing-unital-bounded-ancilla}
    Consider estimating an unknown parameter $\theta$ in a dephasing-class channel $\mE_\theta$ of the most general form in \eqref{eq:one-parameter-dephasing-qubit-channel} using sequential strategies with $n_A$ noiseless ancillary qubits and unital controls $\{\mU_k\}_{k=1}^n$ acting across the probe system and the ancillary system, the channel QFI of 
    \begin{equation}
        \mA_\theta^{(n)} := \mU_n \circ (\mE_\theta \otimes \id_A) \circ \cdots \circ \mU_1 \circ (\mE_\theta \otimes \id_A)
    \end{equation}
    satisfies 
    \begin{equation}
    F(\mA_\theta^{(n)}) \leq F(\mA^{(n)}_\theta \otimes \id) \leq 2^{n_A}\mu_1 n,  
    %+ O(1/n),
    \end{equation}
    where $\mu_1$ is a constant independent of $n,\,n_A$, and $\mu_1 = O(1/p_\theta)$ for small $p_\theta$.
    \end{theorem}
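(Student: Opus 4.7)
The plan is to lift the proof of \thmref{thm:dephasing-unital} to the bounded-ancilla setting by combining the tensor-product structure of $\mE_\theta\otimes\id_A$ with the Hilbert--Schmidt contractivity of unital CPTP maps. First I apply \lemmaref{lemma:channel-extension} to $\mA_\theta^{(n)}$ with Kraus operators $\tK_{(a,i)_k}=C_{k,a}(K_{\theta,i}\otimes\id_A)$, where $\{K_{\theta,i}\}$ are Kraus operators of $\mE_\theta$ in the same gauge \eqref{eq:h-choice} used in \thmref{thm:dephasing-unital} and $\{C_{k,a}\}$ are Kraus operators of $\mU_k$. Because $\mE_\theta$ and each $\mU_k$ are unital on $PA$, one has $\iota_{k-1}=\id_{PA}$ for all $k$, and a direct calculation gives
\begin{equation}
\alpha_k=\alpha^P\otimes\id_A,\quad \beta_k=\beta^P\otimes\id_A,\quad \ubeta_k=\mU_k(\ubeta^P\otimes\id_A),
\end{equation}
where $\alpha^P,\beta^P,\ubeta^P$ are exactly the probe operators from \eqsref{eq:alpha-dephasing-unital}{eq:ubeta-dephasing-unital}. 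The first sum in \lemmaref{lemma:channel-extension} is therefore $\sum_k 4\trace(\iota_{k-1}\alpha_k)=4n\cdot 2^{n_A}\trace(\alpha^P)$, precisely $2^{n_A}$ times its value in \thmref{thm:dephasing-unital} and contributing $O(2^{n_A}n/p)$.

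For the second sum I decompose $\ugamma_k=\tilde X\otimes\tilde A_k+\tilde Y\otimes\tilde B_k+Z\otimes C_k+\id\otimes D_k$ in the same rotated probe basis as in \thmref{thm:dephasing-unital} (where $\beta^P=b\tilde X$ and $\ubeta^P=\ub_1\tilde X+\ub_2\tilde Y$), with Hermitian ancilla operators. Pauli orthogonality gives $\trace(\ugamma_k\beta_{k+1})=2b\,\trace(\tilde A_k)$, reducing the task to bounding $\sum_k|\trace(\tilde A_k)|$. I track this via the HS norm $\tS_k:=\|\ugamma_k\|_2^2$: Kadison--Schwarz applied to any unital CPTP map (which is $2$-positive as it is CP) yields $\|\mU_k(X)\|_2\leq\|X\|_2$ on Hermitian inputs, so applied to the recursion $\ugamma_k=\mU_k[(\mE_\theta\otimes\id_A)(\ugamma_{k-1})+\ubeta^P\otimes\id_A]$ together with the Pauli-diagonal action of $\mE_\theta$ at $\theta=0$ (eigenvalues $1,1-2p,1-2p,1$ on $\id,X,Y,Z$) and the Cauchy--Schwarz inequality $\|\tilde A\|_2^2\geq|\trace\tilde A|^2/2^{n_A}$, a direct expansion yields
\begin{multline}
\tS_k-\tS_{k-1}\leq-\frac{8p(1-p)}{2^{n_A}}\bigl(\trace(\tilde A_{k-1})^2+\trace(\tilde B_{k-1})^2\bigr)\\
+4(1-2p)\bigl(\ub_1\trace(\tilde A_{k-1})+\ub_2\trace(\tilde B_{k-1})\bigr)+2(\ub_1^2+\ub_2^2)\,2^{n_A}.
\end{multline}
Telescoping with $\tS_0=0$ and $\tS_n\geq 0$, and bounding the linear-in-trace cross term by Young's inequality, gives $\sum_{k=0}^{n-1}(\trace(\tilde A_k)^2+\trace(\tilde B_k)^2)=O(n\cdot 2^{2n_A}/p^2)$, and a single Cauchy--Schwarz across the $n$ summands converts this to $\sum_k|\trace(\tilde A_k)|=O(n\cdot 2^{n_A}/p)$. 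Combining with the first sum gives $F(\mA_\theta^{(n)}\otimes\id)\leq 2^{n_A}\mu_1 n$ with $\mu_1=O(1/p)$.

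The main obstacle is that, a priori, the adversarial unital controls $\mU_k$ could route signal from the contracting probe-$\tilde X,\tilde Y$ components into the non-contracting $Z,\id$ components, evading the $(1-2p)$-shrinkage and allowing $\trace(\tilde A_k)$ to grow unchecked. Kadison--Schwarz supplies the right global invariant---the total HS norm of $\ugamma_k$---that bounds all Pauli-on-probe components of $\ugamma_k$ jointly; since the signal pairing $\trace(\ugamma_k\beta_{k+1})$ only sees the $\tilde X$ component and only through its ancilla-trace, the inequality $|\trace M|^2\leq 2^{n_A}\|M\|_2^2$ converts signal into HS-norm budget at the unavoidable cost of the factor $2^{n_A}$ that appears in the theorem statement. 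Everything else is a bookkeeping generalization of the proof of \thmref{thm:dephasing-unital}.
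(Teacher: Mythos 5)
Your proof is correct, and it reaches the theorem by the same essential route as the paper up to the last step: you apply \lemmaref{lemma:channel-extension} with the identical Kraus gauge \eqref{eq:h-choice}, obtain the same tensor-product forms $\alpha_k=\alpha^P\otimes\id_A$, $\beta_k=b\tilde X\otimes\id_A$, $\ubeta_k=\mU_k(\ubeta^P\otimes\id_A)$ with $\iota_{k-1}=\id$, and control the recursion for $\ugamma_k$ through the same Lyapunov function, namely the Hilbert--Schmidt norm $\|\ugamma_k\|_2^2$ together with its contractivity under unital CPTP maps (your Kadison--Schwarz derivation is the standard proof of the fact the paper cites from~\cite{perez2006contractivity}). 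Your decomposition $\ugamma_k=\tilde X\otimes\tilde A_k+\tilde Y\otimes\tilde B_k+Z\otimes C_k+\id\otimes D_k$ is a repackaging of the paper's expansion in the basis $\tilde\sigma_i\otimes B_j$, with $\trace(\tilde A_k)=2^{n_A}r_{1,k}$, and your per-step increment inequality, including the factor $-8p(1-p)/2^{n_A}$ from Cauchy--Schwarz on the ancilla trace, matches the paper's. Where you genuinely diverge is in converting the telescoped HS-norm budget into a bound on $\sum_k|\trace(\ugamma_k\beta_{k+1})|$: the paper reuses the threshold-averaging machinery of \thmref{thm:dephasing-unital} (the quantities $f(n)$, $g(n)$, $n_0$, the lower and upper bounds on $n_0$, and the optimization over $g(n)$), whereas you telescope directly with $\tS_0=0$, $\tS_n\geq 0$, absorb the linear cross term by Young's inequality to get $\sum_k\trace(\tilde A_k)^2=O(n\,2^{2n_A}/p^2)$, and then apply a single Cauchy--Schwarz over the $n$ summands to obtain $\sum_k|\trace(\tilde A_k)|=O(n\,2^{n_A}/p)$. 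Both arguments deliver the same $2^{n_A}\mu_1 n$ scaling with $\mu_1=O(1/p)$; yours is shorter and avoids the case split on whether the average exceeds a threshold, while the paper's yields a somewhat more explicit constant and is structured to parallel the ancilla-free proof verbatim.
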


\begin{proof}
    The proof follows almost directly from previous discussions with a slight generalization. First, note that \lemmaref{lemma:channel-extension} can be extended to the bounded-ancilla-assisted sequential strategy with unital controls. (One can verify that the proof of \lemmaref{lemma:channel-extension} does not assume $\mE_\theta$ is a qubit channel, therefore one can replace $\mE_\theta$ with $\mE_\theta \otimes \id_A$ and derive a new upper bound accordingly). Choosing the Kraus operator of $\mU_k \circ (\mE_\theta \otimes \id_A)$ to be $\tK_{(a,b)_k} = C_{a_k}  (\tK_{b_k} \otimes \id_A)$, where $\{C_{a_k}\}_a$ are any Kraus operators of $\mU_k$ and $\{\tK_{b_k}\}_b$ are Kraus operators defined the same as in the above discussion, i.e., satisfying \eqref{eq:tk01}, \eqref{eq:tk0d}, \eqref{eq:tk1d} and \eqref{eq:h-choice}, we have 
    \begin{equation}
        F(\mA_\theta^{(n)}) \leq F(\mA_\theta^{(n)} \otimes \id) \leq \sum_{k=1}^{n} 4 \trace(\iota^{\mA}_{k-1} \alpha^{\mA}_k) + \sum_{k=1}^{n-1} 8\, \trace(\ugamma^{\mA}_k \beta^{\mA}_{k+1}),
    \end{equation}
    where for all $ 1 \leq k \leq n$, 
    \begin{gather}
            \iota^{\mA}_{k-1} = \mA^{(k-1)}_\theta(\id) = \mU_{k-1}\circ (\mE_\theta\otimes\id) \circ \cdots \mU_{1}\circ (\mE_\theta\otimes\id)(\id) = \id,\\
    \alpha^\mA_k = \alpha_k|_{\mU_k = \id}\otimes \id_A,\quad 
    \beta^\mA_k = \beta_k|_{\mU_k = \id}\otimes \id_A,\quad \ubeta^\mA_k = \mU_k(\ubeta_k|_{\mU_k = \id}\otimes \id_A),\quad\\
    \ugamma^\mA_{k} = \mU_{k}\circ(\mE_\theta \otimes \id_A)(\ugamma^\mA_{k-1}) + \ubeta^\mA_{k},\quad \ugamma_0 = 0,
    \end{gather}
    where the expressions of $\alpha_k,\beta_k,\ubeta_k$ can be found at \eqref{eq:alpha-dephasing-unital}, \eqref{eq:beta-dephasing-unital} and \eqref{eq:ubeta-dephasing-unital}. Our goal is to show 
    \begin{equation}
        \sum_{k=1}^{n-1} \abs{\trace(\ugamma^{\mA}_k \beta^{\mA}_{k+1})} \leq 2^{n_A} \mu n,
    \end{equation}
    for some $\mu$ independent of $n_A$. 
    To see this, we follow the same proof strategy as in the proof of \thmref{thm:dephasing-unital}. (Note that we will not spell out every step below to avoid repetitions from the previous proof, and we will only detail new steps). We adopt the same notations of $\tilde{X},\tilde{Y},b,\ub_1,\ub_2$. 
    We also assume analogously 
    \begin{equation}
    f(n) := \frac{\sum_{k=1}^{n-1} \abs{\trace(\ugamma^{\mA}_k \beta^{\mA}_{k+1})}}{n-1} >
    \frac{2^{n_A}(1-2p)b\abs{\ub_1}}{2p(1-p)} (1+\delta), 
    \end{equation}
    and 
\begin{equation}
g(n) := \frac{1}{n_0} \sum_{l: \abs{\trace(\ugamma^{\mA}_l \beta^{\mA}_{l+1})} > \frac{f(n)}{1+\delta}, \,l \leq n-1} \abs{\trace(\ugamma^{\mA}_l\beta^{\mA}_{l+1})},
\end{equation}
where $\delta = 1$ and $n_0 = \abs{\left\{l: \abs{\trace(\ugamma^{\mA}_l \beta^{\mA}_{l+1})} > \frac{f(n)}{1+\delta}, \,l \leq n-1\right\}}$ is the number of terms that is above $1/(1+\delta)$ of the average. 
    We let 
    \begin{equation}
        \gamma^\mA_{k} = r_{1,k} \tilde{X}\otimes \id_A + r_{2,k} \tilde{Y}\otimes \id_A + r_{3,k} Z\otimes \id_A + \sum_{i=0}^3\sum_{j=1}^{4^{n_A}-1} r_{ij,k} \tilde{\sigma}_i \otimes B_j,
    \end{equation}
    where $\{B_j\}_{j=0}^{4^{n_A}-1}$ is an orthogonal basis of Hermitian operators in the ancillary system (e.g. Pauli basis) satisfying $\trace(B_i B_j) = 2^{n_A}$, $B_0 = \id_A$, and $\tilde\sigma_{0,1,2,3}=\id,\tilde{X},\tilde{Y},Z$. $r_{00,k} = 0$ because $\gamma^\mA_{k}$ is traceless. Then 
    \begin{align}
        \gamma^\mA_{k+1} &= \mU_{k}\circ(\mE_\theta \otimes \id_A)(\ugamma^\mA_{k}) + \ubeta^\mA_{k+1} \\ &= \mU_{k} \Bigg( ((1-2p)r_{1,k}+\ub_1) \tilde{X}\otimes \id_A + ((1-2p)r_{2,k}+\ub_2) \tilde{Y}\otimes \id_A + r_{3,k} Z\otimes \id_A + \sum_{i=0}^3\sum_{j=1}^{4^{n_A}-1} s_{ij,k} \tilde{\sigma}_i \otimes B_j \Bigg),
    \end{align} 
    where $\abs{s_{ij,k}} \leq \abs{r_{ij,k}}$. Using the contractivity of the Hilbert-Schmidt norm under unital channels~\cite{perez2006contractivity}, we have for all $k$,
    \begin{align}
        \frac{1}{2^{1+n_A}}\trace((\gamma^\mA_{k+1})^2) &\leq \frac{1}{2^{1+n_A}}\trace((\gamma^\mA_{k+1})^2|_{\mU_k = \id}) \\
        &= ((1-2p)r_{1,k}+\ub_1)^2 + ((1-2p)r_{2,k}+\ub_2)^2 + r_{3,k}^2 + \sum_{i=0}^3\sum_{j=1}^{4^{n_A}-1} s_{ij,k}^2 \\
        &\leq \frac{1}{2^{1+n_A}}\trace((\gamma^\mA_{k})^2) + \frac{\ub_1^2+\ub_2^2}{4p(1-p)},
    \end{align}
    and for all $l$ such that $\abs{r_{1,l}} = \frac{1}{2^{1+n_A} b}\abs{\trace(\gamma^{\mA}_{l}\beta^{\mA}_{l+1})} > \frac{f(n)}{2^{1+n_A}b(1+\delta)}\geq \frac{1-2p}{4p(1-p)}\abs{\ub_1}$, 
            \begin{align}
        \frac{1}{2^{1+n_A}}\trace((\gamma^\mA_{l+1})^2) - \frac{1}{2^{1+n_A}}\trace((\gamma^\mA_{l})^2) \leq ((1-2p)\abs{r_{1,l}}+\ub_1)^2 - \abs{r_{1,l}}^2 + \frac{\ub_2^2}{4p(1-p)}. 
    \end{align}
    The rest of the proof follows identically from \thmref{thm:dephasing-unital}, replacing $\frac{1}{4}\norm{\ugamma_k}_1^2$ in the previous proof with $\frac{1}{2^{1+n_A}}\trace((\gamma^\mA_{k})^2)$ in the current proof.     
\end{proof}
    
\subsection{General CPTP controls}
\label{app:dephasing-non-unital}

Above, we have shown that the ancilla-free (or bounded-ancilla) sequential strategies with unital controls can only achieve the SQL, in contrast to the ancilla-assisted sequential strategy where the HL is recoverable when HNKS is satisfied. We use the channel extension method (\lemmaref{lemma:channel-extension}) to show that there is a suitable choice of Kraus representation of each $\mU_k \circ \mE_{\theta}$ such that $\beta_k$ and $\ubeta_k|_{\mU_k=\id}$ are perpendicular to noise (Pauli $Z$) which leads to a constant upper bound on the average of $\abs{\trace(\ugamma_k \beta_{k+1})}$ and the corresponding linear scaling of the QFI. For non-unital controls, the above technique does not directly generalize because
\begin{equation}
    \iota_{k-1} = \mE^{(k-1)}_\theta(\id) = \mC_{k-1}\circ\mE_\theta \circ \cdots \mC_{1}\circ \mE_\theta(\id) = \id
\end{equation}
no longer holds true and $\ubeta_k$ (\eqref{eq:ubeta-general}) no longer has a simple form. 

Luckily, our proof strategies can be still modified to prove a non-trivial bound with general quantum controls, as we will show below. 
\begin{theorem}[QFI upper bound for estimating dephasing-class channels using ancilla-free sequential strategies with CPTP controls]  
\label{thm:dephasing-non-unital}
    Consider estimating an unknown parameter $\theta$ in a dephasing-class channel $\mE_\theta$ of the most general form in \eqref{eq:one-parameter-dephasing-qubit-channel} using the ancilla-free sequential strategy and assume arbitrary CPTP controls $\mC_k$, the channel QFI of \eqref{eq:entire-channel} satisfies 
    \begin{equation}
    F(\mE_\theta^{(n)}) \leq F(\mE^{(n)}_\theta \otimes \id) \leq \mu_2 n^{3/2},
    \end{equation}
    where $\mu_2$ is a constant independent of $n$ and $\mu_2 = O(1/p_\theta)$ for small $p_\theta$.
    \end{theorem}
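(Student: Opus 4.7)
The plan is to start from the refined channel extension method (the Lemma above), which gives
\begin{equation*}
F(\mE^{(n)}_\theta) \leq F(\mE^{(n)}_\theta \otimes \id) \leq \sum_{k=1}^{n} 4\trace(\iota_{k-1}\alpha_k) + \sum_{k=1}^{n-1} 8\trace(\ugamma_k\beta_{k+1}),
\end{equation*}
valid for arbitrary CPTP controls $\mC_k$. Two complications distinguish this from the unital case: $\iota_{k-1} \neq \id$ in general, so $\trace(\iota_{k-1}\alpha_k)$ is no longer simply $\trace(\alpha_k)$; and $\ubeta_k$ from \eqref{eq:ubeta-general} inherits an explicit $\iota_{k-1}$-dependence, producing $Z$-components in $\ugamma_k$ that were absent in the unital analysis. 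As before, I will use the Kraus-representation freedom (parameter $h$) for each $\mC_k\circ\mE_\theta$ to keep $\beta_k$ orthogonal to $Z$, so that the cross term $\trace(\ugamma_k\beta_{k+1})$ couples only to the $X,Y$ components of $\ugamma_k$.

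The first main step is to prove the bound under a ``not-too-non-unital'' condition $|\trace(\iota_k Z)| \leq 1$ for every $k$. Under this restriction $\iota_{k-1}$ has a bounded Bloch vector, so an $h$ can be chosen (depending on $\iota_{k-1}$, slightly generalizing \eqref{eq:h-choice}) making $\trace(\iota_{k-1}\alpha_k) = O(1)$ uniformly in $k$. The key new estimate is $\norm{\ugamma_k}_2 = O(\sqrt{k})$: expanding the recurrence $\ugamma_k = \mC_k\circ\mE_\theta(\ugamma_{k-1}) + \ubeta_k$ in the $\{\tilde X, \tilde Y, Z\}$ basis used in \thmref{thm:dephasing-unital}, the $(1-2p)$ dephasing contraction in $\tilde X,\tilde Y$ together with a $Z$-growth controlled by $|\trace(\iota_k Z)| \leq 1$ yield a quadratic inequality of the form $\trace(\ugamma_k^2) \leq \trace(\ugamma_{k-1}^2) + O(1)$. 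Cauchy--Schwarz then gives $|\trace(\ugamma_k\beta_{k+1})| \leq \norm{\ugamma_k}_2 \norm{\beta_{k+1}}_2 = O(\sqrt{k})$, whose sum over $k$ is $O(n^{3/2})$.

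The second step handles the general case by partitioning $\{1,\ldots,n\}$ into consecutive segments such that, within each segment viewed as a fresh channel with its own $\iota_0 = \id$, the not-too-non-unital condition holds; Step~1 then bounds the ancilla-assisted QFI of each length-$n_j$ segment by $c\,n_j^{3/2}$. Applying \lemmaref{lemma:channel-extension} at a second layer, treating each segment as a single ``super-step'', yields an aggregate bound whose $\sum_j 4\trace(\tilde\iota_{j-1}\tilde\alpha_j)$ term is controlled by the segment QFIs via \eqref{eq:sdp} and whose cross term admits a quadratic-recurrence treatment analogous to Step~1 but at the segment level. The main obstacle is ensuring this aggregation does not degrade the $n^{3/2}$ scaling: a naive use of the root-QFI chain rule yields $\sqrt{F(\mE^{(n)}_\theta\otimes\id)} \leq \sum_j \sqrt{c\,n_j^{3/2}}$ and hence $F = O(m^{1/2} n^{3/2})$ by H\"older, which is $O(n^{3/2})$ only if the number of segments $m$ is $O(1)$---not guaranteed a priori. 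The second-layer analysis must therefore exploit that creating a new segment forces $|\trace(\iota_{k} Z)|$ to change by at least $1$ while globally $|\trace(\iota_k Z)| \leq 2$; this discrete accounting, combined with the quadratic-in-$\norm{\tilde\ugamma_j}$ budget, is what must absorb all segment boundaries into an overall $O(n^{3/2})$ contribution.
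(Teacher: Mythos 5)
Your Step~1 matches the paper's \lemmaref{lemma:non-unital-partial} essentially exactly: a $k$-dependent choice of the gauge $h$ (generalizing \eqref{eq:h-choice} to account for $\iota_{k-1}$) keeps $\trace(\iota_{k-1}\alpha_k)=O(1)$ and makes $\ubeta_k|_{\mC_k=\id}$ traceless and $Z$-orthogonal, the recurrence gives $\trace(\ugamma_k^2)\leq\trace(\ugamma_{k-1}^2)+O(1)$, and Cauchy--Schwarz yields $\sum_k\abs{\trace(\ugamma_k\beta_{k+1})}=O(n^{3/2})$. That part is sound.

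Step~2 has a genuine gap. You correctly identify the obstacle --- the number of segments $m$ is not $O(1)$ (it can be $\Theta(n)$, e.g.\ if every control is a replacement channel), so neither the root-QFI chain rule nor a naive second-layer bound suffices --- but the mechanism you propose to overcome it does not work, and the actual key idea is missing. Your ``discrete accounting'' claim, that opening a new segment forces $\abs{\trace(\iota_k Z)}$ to move by $1$ within a global budget of $2$, does not bound $m$: the relevant quantity is reset at the start of each segment (each segment's $\iota$ is computed from its own $\id$ input), and even the global Bloch-$Z$ component can oscillate arbitrarily many times under adaptive controls. The paper's resolution is different and is the crux of the proof: by construction each segment $\mE^{(n_{k-1}+1,n_k)}$ satisfies $\norm{\mE^{(n_{k-1}+1,n_k)}(\id)-\id}\geq 1/2$, i.e.\ it is a \emph{substantially non-unital} channel, hence (by \lemmaref{lemma:non-unital-HKS}) it violates HNKS, hence (by \lemmaref{lemma:alpha-bound-non-unital}) it admits a Kraus representation with $\beta_k=0$ \emph{exactly} and with $\norm{\alpha_k}$ inflated only by a universal constant factor $2+\xi(1,\tfrac12)$ over the tensor-product choice. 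With $\beta_k=0$ for every segment, the cross terms in the second-layer channel extension vanish identically, and the bound collapses to $8\sum_k\norm{\alpha_k}\leq O\big(\sum_k (n_k-n_{k-1})^{3/2}\big)+O(m)=O(n^{3/2})+O(n)$, where each $\norm{\alpha_k}$ is controlled via \eqref{eq:sdp} and Step~1 applied to the segment minus its last step. Without this ``non-unitality kills $\beta$'' argument (or an equivalent), your second layer still carries cross terms over up to $\Theta(n)$ segments and the $O(n^{3/2})$ scaling is not recovered. (A small additional point: the paper also needs the root-QFI chain rule once, to attach the final segment when it fails to be sufficiently non-unital, which your sketch does not address.)
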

\thmref{thm:dephasing-non-unital} shows the QFI of $n$ concatenated dephasing-class channels is at most $O(n^{3/2})$, which has a clear separation from the HL $O(n^2)$. (However, $O(n^{3/2})$ may not be a tight bound and it still remains open whether there is a separation between ancilla-free sequential strategies with unital controls and that with non-unital controls.)

To prove \thmref{thm:dephasing-non-unital}, we apply the channel extension method (\lemmaref{lemma:channel-extension}) in two steps and a useful upper bound on the norm of $\alpha$ operators (\lemmaref{lemma:alpha-bound-non-unital}) that we previously proved in \appref{app:HNKS}. 
In the first step, we partition the entire quantum channel $\mE^{(n)}_\theta$ into the concatenation of several shorter segments such that each of them is \emph{not too} non-unital and prove each of them has a QFI upper bound of a cube-of-square-root scaling with respect to the number of concatenated $\mE_\theta$. In the second step, we calculate an upper bound on $F(\mE^{(n)}_\theta)$ using the channel extension method (\lemmaref{lemma:channel-extension}), where each segment functions as a single quantum channel and each $\norm{\alpha}$ is bounded through \lemmaref{lemma:alpha-bound-non-unital}. 
%and use the chain rule of root QFI to derive its QFI upper bound. 

To this end, we first prove the following lemma, which shows the cube-of-square-root upper bound for channels that are not too non-unital. 
\begin{lemma}[QFI upper bound when controls are not too non-unital]
Consider the same setting as in \thmref{thm:dephasing-non-unital}. Assume $\abs{\trace(\iota_k Z)} < 1$ for all $k \leq n$. Then 
\begin{equation}
\label{eq:non-unital-partial}
F(\mE^{(n)}_\theta) \leq F(\mE^{(n)}_\theta \otimes \id) < \tilde{\mu}_2 n^{3/2},
\end{equation}
where $\tilde{\mu}_2$ is a constant independent of $n$ and $\tilde\mu_2 = O(1/p_\theta)$ for small $p_\theta$.
\label{lemma:non-unital-partial}
\end{lemma}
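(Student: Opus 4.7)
The plan is to apply \lemmaref{lemma:channel-extension} with a carefully chosen Kraus representation and then bound the two resulting sums separately. Taking the Kraus operators of $\mC_k \circ \mE_\theta$ as $\tK_{(a,b)_k} = C_{a_k} \tK_{b_k}$, where $\{C_{a_k}\}$ are any Kraus operators of $\mC_k$ and $\{\tK_{b_k}\}$ are the Kraus operators of $\mE_\theta$ from \eqref{eq:natural-kraus} with the same choice of $h$ as in \eqref{eq:h-choice}, yields $\alpha_k = \alpha_k|_{\mC_k = \id}$, $\beta_k = \beta_k|_{\mC_k = \id}$, and $\ubeta_k = \mC_k(\ubeta_k|_{\mC_k = \id})$. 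In particular, $\alpha_k$ and $\beta_k$ are uniformly $O(1)$ operators depending only on $p, G_0, G_1$, while $\ubeta_k|_{\mC_k = \id}$ is traceless and supported on the $\tilde X, \tilde Y$ directions. Since each $\iota_{k-1}$ is the image of $\id$ under a composition of CPTP maps, $\iota_{k-1} \geq 0$ and $\trace(\iota_{k-1}) = 2$, so H\"older's inequality gives $\trace(\iota_{k-1} \alpha_k) \leq 2\norm{\alpha_k}$, hence the first sum $\sum_{k=1}^n 4 \trace(\iota_{k-1} \alpha_k) = O(n)$, well below the target bound of $O(n^{3/2})$.

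The bulk of the proof is devoted to the second sum $\sum_{k=1}^{n-1} \trace(\ugamma_k \beta_{k+1})$. Since $\norm{\beta_{k+1}}_1 = O(1)$ uniformly, it suffices to establish $\norm{\ugamma_k} = O(\sqrt{k})$ uniformly over all admissible control choices satisfying $\abs{\trace(\iota_j Z)} < 1$. Decomposing $\ugamma_k = \vv_k \cdot \vsig$ converts the recursion $\ugamma_k = \mC_k \circ \mE_0(\ugamma_{k-1}) + \ubeta_k$ into a Bloch-vector recursion $\vv_k = T_k D \vv_{k-1} + \vb_k$, where $D = \diag(1-2p, 1-2p, 1)$ is the dephasing contraction, $T_k$ is the (non-expansive) Bloch matrix of $\mC_k$ acting on traceless operators, and $\vb_k$ is the Bloch vector of $\ubeta_k$. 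The $X$ and $Y$ components of $\vv_k$ are strictly contracted by $(1-2p)$ at each step and remain uniformly bounded; only the $Z$ component can potentially accumulate. The plan is to control this accumulation by writing $\norm{\vv_k}^2$ as a telescoping sum and applying a Cauchy--Schwarz bound on the cross term $\vb_k \cdot (T_k D \vv_{k-1})$, exploiting the fact that $\vb_k$ inherits near-orthogonality to $Z$ from $\ubeta_k|_{\mC_k=\id}$; the degree to which this orthogonality fails is controlled by $\abs{\trace(\iota_{k-1} Z)}$, which is strictly less than $1$ by hypothesis. This argument is parallel in spirit to the proof of \thmref{thm:dephasing-unital}, where a similar Cauchy--Schwarz-style telescoping on $\norm{\ugamma_k}_1^2$ produced the linear-in-$n$ bound.

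The main obstacle is extracting the square-root bound $\norm{\ugamma_k} = O(\sqrt k)$ rather than the naive linear bound. A direct triangle-inequality on the recursion only yields $\norm{\ugamma_k} = O(k)$, which would produce the trivial $O(n^2)$ QFI bound. The square-root improvement must come from an averaging effect: although each individual step may add an $O(1)$ $Z$-contribution, the combination of the strict $X,Y$-contraction, the bounded non-unitality $\abs{\trace(\iota_j Z)} < 1$, and the freedom to choose $h$ at each step to partially cancel the $Z$-component of $\vb_k$ should force the $Z$-component of $\vv_k$ to behave like a zero-mean, bounded-variance random walk, giving $\norm{\vv_k}^2 = O(k)$. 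Once $\norm{\ugamma_k} = O(\sqrt{k})$ is established, summing $\abs{\trace(\ugamma_k \beta_{k+1})} \leq \norm{\ugamma_k} \norm{\beta_{k+1}}_1$ from $k=1$ to $n-1$ gives $O\bigl(\sum_{k=1}^n \sqrt{k}\bigr) = O(n^{3/2})$, completing the proof, with the dependence $\tilde\mu_2 = O(1/p)$ tracked through the explicit constants in \eqref{eq:alpha-dephasing-unital}--\eqref{eq:ubeta-general}.
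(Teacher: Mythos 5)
Your high-level skeleton matches the paper's proof (apply \lemmaref{lemma:channel-extension}, bound the $\alpha$-sum by $O(n)$, establish $\norm{\ugamma_k}_1=O(\sqrt{k})$, and sum $\abs{\trace(\ugamma_k\beta_{k+1})}$ to get $O(n^{3/2})$), but the argument has a genuine gap precisely at the step that carries all the weight. You fix the gauge $h$ as in \eqref{eq:h-choice} --- the unital-control choice --- and assert that $\ubeta_k|_{\mC_k=\id}$ is traceless and supported on the $X,Y$ directions. That is false once $\iota_{k-1}\neq\id$: from \eqref{eq:ubeta-general}, $\trace(\ubeta_k|_{\mC_k=\id})$ picks up $(1-p)\trace(\iota_{k-1}G_0)+p\trace(\iota_{k-1}G_1)+2\sqrt{p(1-p)}\Re[h_{10}]\trace(\iota_{k-1}Z)$, which does not vanish for generic $\iota_{k-1}$, and likewise for the $Z$-component. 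The paper instead chooses $h$ \emph{adaptively at each step}, as a function of $z_{k-1}=\trace(\iota_{k-1}Z)/2$ and $g_{\pm,k-1}$ (conditions \eqref{eq:h-cond-1}--\eqref{eq:h-cond-3}), precisely to enforce $\trace(\ubeta_k)|_{\mC_k=\id}=\trace(Z\ubeta_k)|_{\mC_k=\id}=0$. This is also where the hypothesis $\abs{\trace(\iota_k Z)}<1$ actually enters: it keeps the denominators $1-z_{k-1}^2$ bounded away from zero so that the adaptive $h$, and hence $\beta_k$ and $\ubeta_k$, remain uniformly bounded. Your proposal gestures at "the freedom to choose $h$ at each step" but never identifies or executes this mechanism, and it contradicts your earlier commitment to the fixed $h$ of \eqref{eq:h-choice}.

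The second gap is the square-root bound itself. You correctly note that the triangle inequality only yields $\norm{\ugamma_k}=O(k)$, but then substitute a heuristic ("should behave like a zero-mean, bounded-variance random walk") for the needed argument. There is no randomness to invoke; the bound is deterministic. Once $\ubeta_{k+1}|_{\mC_{k+1}=\id}$ lies purely in the $(1-2p)$-contracted $X,Y$ subspace, one writes $\frac{1}{4}\norm{\ugamma_{k+1}}_1^2 \leq \left((1-2p)r_{1,k}+b_1\right)^2+\left((1-2p)r_{2,k}+b_2\right)^2+r_{3,k}^2$, using trace-norm data processing for $\mC_{k+1}$ together with the identity $\frac{1}{4}\norm{A}_1^2=\frac{1}{2}\trace(A^2)$ for traceless qubit operators, and then the elementary estimate $((1-2p)x+b)^2-x^2\leq b^2/(4p(1-p))$ shows the squared norm grows by at most $O(1/p)$ per step, regardless of how the control redistributes weight among $X$, $Y$ and $Z$. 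Without this completing-the-square step (and the gauge fix that makes it applicable), your argument establishes only the trivial $O(n^2)$ bound, so the proof does not close as written.
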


\begin{proof}
In \appref{app:dephasing-unital}, we chose $h_{00} = h_{11} = 0$ and $h_{01} = h_{10} = -\frac{(1-p)\trace(G_0 Z) + p \trace(G_1 Z)}{4\sqrt{p(1-p)}}$ in \eqsref{eq:alpha-general}{eq:ubeta-general} such that $\beta_k$ and $\ubeta_k|_{\mC_k=\id}$ are perpendicular to noise (Pauli $Z$) which allows us to derive the linear upper bound on QFI. For non-unital controls, the above is no longer true and we need to choose $h$ more carefully as below. 
Let 
\begin{gather}
z_k = \frac{\trace(\iota_{k}Z)}{2} ,\quad 
g_{+,k} 
%= \frac{\trace(\iota_{k}(G_0-\trace(G_0Z)Z/2))}{2} 
= \frac{\trace(\iota_{k}G_+)}{2} - \frac{\trace(G_+ Z)}{2}z_k,
\quad g_{-,k}
%= \frac{\trace(\iota_{k}(G_1-\trace(G_1 Z)Z/2))}{2}.
= \frac{\trace(\iota_{k}G_-)}{2} - \frac{\trace(G_- Z)}{2}z_k.
\end{gather}
$\abs{z_k} < 1/2$ as assumed. $g_{\pm,k}$ are bounded by $\abs{g_{\pm,k}} \leq \norm{G_\pm}$. We will take $\Im[h_{01}] = 0$ for simplicity and we will take other parts of $h$ to be the ones that satisfy 
\begin{gather}
(1-p) h_{00} + p h_{11} = \frac{-g_{+,k-1}}{1-z_{k-1}^2},%\left((1-p)g_{0,k} + p g_{1,k}\right),
\label{eq:h-cond-1}\\
2\sqrt{p(1-p)}\Re[h_{01}] + \frac{\trace(G_+ Z)}{2} = \frac{z_{k-1}g_{+,k-1}}{1-z_{k-1}^2}%\left((1-p)g_{0,k} + p g_{1,k}\right)
,\label{eq:h-cond-2}\\
(1-p)h_{00}-p h_{11} =  
-  g_{-,k-1} 
%2 p g_{1,k-1} -  2 (1-p) g_{0,k-1} 
- \frac{\trace(G_- Z)}{2} z_{k-1}.  \label{eq:h-cond-3}
\end{gather}
(Note that for unital controls, $\iota_{k} = \id$, $g_{+,k}=g_{-,k}=z_{k} = 0$, and the above choice of $h$ is exactly the one we made in \appref{app:dephasing-unital}).  
One can verify that the first two conditions (\eqref{eq:h-cond-1} and \eqref{eq:h-cond-2}) guarantee
\begin{equation}
    \trace(\ubeta_k)|_{\mC_k=\id} = \trace(Z\ubeta_k)|_{\mC_k=\id} = 0,
\end{equation}
and the third condition (\eqref{eq:h-cond-3}) minimizes $\trace(\iota_{k-1}\alpha_k)$ while the first two conditions are satisfied. Then 
\begin{gather}
\beta_k = G_+ - \frac{\trace(G_+ Z)}{2} Z + \frac{-g_{+,k-1}}{1-z_{k-1}^2} \id + \frac{z_{k-1}g_{+,k-1}}{1-z_{k-1}^2} Z,\\
\trace(X\ubeta_k)|_{\mC_k=\id} = \frac{1}{2}\trace(\iota_{k-1}\{X,G_-\}) - \left( g_{-,k-1} + \frac{\trace(G_- Z)}{2} z_{k-1}\right) \trace(\iota_{k-1} X), 
\label{eq:x-expression}\\
\trace(Y\ubeta_k)|_{\mC_k=\id} = \frac{1}{2}\trace(\iota_{k-1}\{Y,G_-\}) - \left( g_{-,k-1} + \frac{\trace(G_- Z)}{2} z_{k-1}\right) \trace(\iota_{k-1} Y) 
,\label{eq:y-expression}\\
\begin{split}
\frac{1}{2}\trace(\iota_{k-1}\alpha_k)  &= \frac{1-p}{2}\trace(G_0^2) + \frac{p}{2}\trace(G_1^2) + \frac{\dot p^2}{4p(1-p)} - h_{00}^2 - h_{11}^2 + h_{01}^2 + h_{01} \sqrt{p(1-p)} \trace((G_0+G_1)Z).
\end{split}
\end{gather}
where $h_{00,11,01}$ are chosen as in \eqsref{eq:h-cond-1}{eq:h-cond-3}. 

Clearly, $\trace(\iota_{k-1}\alpha_k) = O(1/p)$ for all $k$. 
%Note that for unital channels, $x_{k-1} = z_{k-1} = 0$, the previous results on non-unital channels are recovered. 
\lemmaref{lemma:channel-extension} indicates 
    \begin{equation}
    {F(\mE^{(n)}_\theta\otimes \id)} \leq \sum_{k=1}^{n} 4 \trace(\iota_{k-1} \alpha_k) + \sum_{k=1}^{n-1} 8\, \trace(\ugamma_k \beta_{k+1}).
    \end{equation}
The first term above has at most a linear scaling with $n$. It is sufficient to show the second term is upper bounded by $\mu n^{3/2}$ for some $\mu = O(1/\sqrt{p})$.  
Assume $\ugamma_k = r_{1,k} X + r_{2,k} Y + r_{3,k} Z$ ($\ugamma_k$ is always traceless because $\ubeta_k$ is traceless). %$r_{1,k} = \trace(\ugamma_k\beta_{k+1}) = \trace(\ugamma_k X) = 2r_{1,k}$. 
We have 
\begin{gather}
\frac{1}{4}\norm{\ugamma_k}_1^2 =  \frac{1}{2}\trace(\ugamma_k^2) = r_{1,k}^2 + r_{2,k}^2 + r_{3,k}^2,\\
\ubeta_k = \mC_k \left( \frac{\trace(X\ubeta_k)|_{\mC_k=\id}}{2} X + \frac{\trace(Y\ubeta_k)|_{\mC_k=\id}}{2} Y\right),\\
\begin{split}
\ugamma_{k+1} 
%&= \mC_{k+1}\circ\mE_\theta(\ugamma_k) + \ubeta_{k+1} \\
&= \mC_{k+1}\left(\left((1-2p) r_{1,k} + \frac{\trace(X\ubeta_k)|_{\mC_k=\id}}{2}\right) X + \left((1-2p)r_{2,k} + \frac{\trace(Y\ubeta_k)|_{\mC_k=\id}}{2}\right) Y + r_{3,k} Z \right),
\end{split}
\end{gather}
%Note that $\frac{1}{2}\iota_k$ is a density matrix. Then $0 \leq \abs{x_k},\abs{y_k},\abs{z_k} \leq 1$ and $x_k^2 + y_k^2 + z_k^2 \leq 1$. $\abs{1-x_{k}(x_{k}+cz_{k})} \leq 1+ c/4$ and $\abs{x_k y_k} \leq 1/4$. 
Using the data-processing inequality, we have 
%\allowdisplaybreaks
\begin{align}
\frac{1}{4}\norm{\ugamma_{k+1}}_1^2 &\leq  \frac{1}{4}\norm{\ugamma_{k+1}}_1^2\big|_{\mC_{k+1} = \id}  = \frac{1}{2}\trace(\ugamma_{k+1}^2)\big|_{\mC_{k+1} = \id}\\ &=  \left((1-2p)r_{1,k} + \frac{\trace(X\ubeta_k)|_{\mC_k=\id}}{2}\right)^2 + \left((1-2p)r_{2,k} + \frac{\trace(Y\ubeta_k)|_{\mC_k=\id}}{2}\right)^2 + r_{3,k}^2 
\\&\leq \frac{1}{4}\norm{\ugamma_{k}}_1^2 
+ \frac{\trace(X\ubeta_k)^2\big|_{\mC_{k} = \id} + \trace(Y\ubeta_k)^2\big|_{\mC_{k} = \id}}{16p(1-p)}. 
%+  (1-2p)^2 \left(|r_{1,k}| + \abs{1-x_{k}(x_{k}+cz_{k})}\right)^2 - r_{1,k}^2 + (1-2p)^2 \left(|r_{2,k}| + \abs{x_k y_k}\right)^2 - r_{2,k}^2
%\\&\leq \norm{\ugamma_{k}}_1^2 + \left(\frac{1}{4p(1-p)}-1\right)\left(\abs{1-x_{k}(x_{k}+cz_{k})}^2 + \abs{x_k y_k}^2\right)\\ &\leq \norm{\ugamma_{k}}_1^2 + \left(\frac{1}{4p(1-p)}-1\right)\left(1+c/4\right)^2. 
\end{align}
Note that from \eqref{eq:x-expression} and \eqref{eq:y-expression}, 
\begin{align}
\trace(X\ubeta_k)^2 + \trace(Y\ubeta_k)^2\big|_{\mC_{k} = \id} 
&\leq \trace(\iota_{k-1}\{X,G_-\})^2 + \trace(\iota_{k-1}\{Y,G_-\})^2 + \trace(\iota_{k-1}G_+)^2(\trace(\iota_{k-1}X)^2 + \trace(\iota_{k-1}Y)^2)\nonumber\\
&\leq    8(\norm{G_-}^2 + \norm{G_+}^2). 
\end{align}
Then 
\begin{equation}
\frac{1}{4}\norm{\ugamma_{k}}_1^2 \leq k \frac{\norm{G_-}^2 + \norm{G_+}^2}{2p(1-p)}, 
\end{equation}
and %Since $\ugamma_k$ is traceless,
\begin{align}
\abs{\trace(\ugamma_k\beta_{k+1})} 
&\leq \norm{\ugamma_k}_1 \norm{\beta_{k+1}}
\leq \sqrt{2k \frac{\norm{G_-}^2 + \norm{G_+}^2}{p(1-p)}}
\left(\norm{G_+} + \frac{|g_{+,k-1}|}{1-|z_{k-1}|} + \frac{|\trace(G_+ Z)|}{2}\right)\\
&\leq 4\sqrt{2k} \sqrt{\frac{\norm{G_-}^2 + \norm{G_+}^2}{p(1-p)}}\norm{G_+}. 
\end{align}
Finally, we have 
\begin{equation}
\sum_{k=1}^{n-1} \abs{\trace(\ugamma_k\beta_{k+1})} \leq 4\sqrt{2} n^{3/2} \sqrt{\frac{\norm{G_-}^2 + \norm{G_+}^2}{p(1-p)}}\norm{G_+},
\end{equation}
proving \eqref{eq:non-unital-partial} using \lemmaref{lemma:alpha-bound-non-unital}. 
\end{proof}

(Note that we proved above an upper bound of $O(n)/p + O(n^{3/2})/\sqrt{p}$ on the QFI, which is slightly more general than the statement of the lemma, but for simplicity we will not delve into the details here as the scaling with respect to $p$ is not our focus.)
With \lemmaref{lemma:non-unital-partial}, we are now ready to present the full proof of \thmref{thm:dephasing-non-unital}.

\begin{proof}[Proof of {\thmref{thm:dephasing-non-unital}}]
Let $\mE^{(\ell_1,\ell_2)}_\theta$ denotes $\mC_{\ell_2}\circ\mE_\theta \circ \cdots \mC_{\ell_1} \circ \mE_\theta$ %and $K^{(\ell_1,\ell_2)}_{\theta,\vi} = K_{\theta,i_{\ell_2}} \cdots K_{\theta,i_{\ell_1}}$ 
where $\ell_2 \geq \ell_1$ and $\ell_{1,2}$ is the $\ell_{1,2}$-th control operation. For example, $\mE^{(1,n)}_\theta = \mE^{(n)}_\theta$. %and $K^{(1,n)}_{\theta,\vi} = K^{(n)}_{\theta,\vi}$. 
We will also let $\mE^{(\ell_1,\ell_2)}_\theta$ denote the identity channel and $K^{(\ell_1,\ell_2)}_{\theta,\vi}$ the identity operator when $\ell_2 < \ell_1$. We choose $n_0 < n_1 < n_2 < \cdots < n_m$ satisfying $n_0 = 0$ and $n_m = n$ such that the channel $\mE^{(n)}_\theta$ is partitioned into the composition of the following $m$ channels 
\begin{equation}
\mE^{(1,n_1)}_\theta, \quad 
\mE^{(n_1+1,n_2)}_\theta, \quad 
\cdots, \quad 
\mE^{(n_{m-1}+1,n_m)}_\theta,  
\end{equation}
i.e.,
\begin{equation}
\mE^{(n)}_\theta = \mE^{(n_{m-1}+1,n_m)}_\theta \circ \cdots \circ \mE^{(n_1+1,n_2)}_\theta \circ \mE^{(1,n_1)}_\theta,
\end{equation}
such that 
\begin{equation}
\abs{\frac{\trace(\mE^{(\ell_1,\ell_2)}(\id)Z)}{2} } < \frac{1}{2}
\end{equation}
whenever $\ell_1 = n_k + 1$ and $n_k + 1 \leq \ell_2 \leq n_{k+1}-1$ for all $k=0,\ldots, m-1$, and 
\begin{equation}
\abs{\frac{\trace(\mE^{(\ell_1,\ell_2)}(\id)Z)}{2} } \geq \frac{1}{2}
\end{equation}
whenever $\ell_1 = n_k + 1$ and $\ell_2 =n_{k+1}$ for all $k=0,\ldots, m-2$. 
Consider $\mE^{(1,n_{m-1})}$ as a composition of $\{\mE^{(n_k+1,n_{k+1})}\}_{k=0}^{m-2}$. Then we have from \lemmaref{lemma:channel-extension}, 
\begin{equation}
\label{eq:upper-m-segment}
{F(\mE^{(1,n_{m-1})}\otimes \id)} \leq \sum_{k=1}^{m-1} 4 \trace(\iota_{k-1} \alpha_k) + \sum_{k=1}^{m-2} 8\, \trace(\ugamma_k \beta_{k+1}),
\end{equation}
where
\begin{gather}
\alpha_k = \sum_{\vi} \dot \tK_{\vi}^{(n_{k-1}+1,n_{k})\dagger} \dot \tK_{\vi}^{(n_{k-1}+1,n_{k})\dagger} %= \sum_{\vi} \left(\dot K_{\vi} - i\sum_{\vj} h_{\vi\vj}K_{j_k}\right)^\dagger \left(\dot K_{\vi} - i\sum_{\vj} h_{\vi\vj}K_{\vj}\right)
,\quad 
\beta_k = i \sum_{\vi}  \tK_{\vi}^{(n_{k-1}+1,n_{k})\dagger} \dot \tK_{\vi}^{(n_{k-1}+1,n_{k})} %= i \sum_{\vi} K_{\vi}^\dagger\dot K_{\vi} + \sum_{\vi\vj} h_{\vi\vj} K_{\vi}^\dagger K_{\vj}
,\\
\ugamma_{k+1} = \mE^{(n_{k}+1,n_{k+1})}(\ugamma_k) + \ubeta_{k+1},\qquad \ugamma_0 = 0,\\   
\ubeta_k 
= \frac{1}{2}\bigg( i \sum_{\vi} \dot\tK_{\vi}^{(n_{k-1}+1,n_{k})} \iota_{k-1}  \tK_{\vi}^{(n_{k-1}+1,n_{k})\dagger} - i \sum_{\vi} \tK_{\vi}^{(n_{k-1}+1,n_{k})} \iota_{k-1} \dot\tK_{\vi}^{(n_{k-1}+1,n_{k})\dagger} \bigg) 
%&= \frac{1}{2} \sum_{\vi} i\dot K_{\vi} \iota_{k-1}  K_{\vi}^\dagger - i K_{\vi} \iota_{k-1} \dot K_{\vi} ^\dagger  + \sum_{\vi\vj} h_{\vi\vj}K_{\vj}\iota_{k-1}  K_{\vi}^\dagger 
,\\  
\iota_{k-1} = \mE^{(1,n_{k-1})}(\id), \quad \iota_{0} = \id,
\end{gather}
and $\tK_{\vi}^{(n_{k-1}+1,n_{k})}$ can be any Kraus operators representing $\mE^{(n_{k-1}+1,n_{k})}$. (Note that in this proof, $\iota_{k},\alpha_k,\beta_k,\ubeta_k,\ugamma_k$ are defined and used differently from previous discussions, but we keep these notations for simplicity.)

%Assume $n_k + 1 \leq n_{k+1}-1$. 
Using \lemmaref{lemma:non-unital-partial} and \eqref{eq:sdp}, we know that there exists a Kraus representation $\{\tK_{\vi}^{(n_{k-1}+1,n_{k}-1)}\}$ of $\mE^{(n_{k-1} + 1,n_{k}-1)}$ such that 
\begin{equation}
\big\|\alpha^{(n_{k-1} + 1,n_{k}-1)}\big\| = \frac{F(\mE^{(n_{k-1} + 1,n_{k}-1)} \otimes \id)}{4} \leq \frac{\tilde{\mu}_2}{4} (n_{k} - n_{k-1} - 1)^{3/2}. 
\end{equation}
for all $1 \leq k \leq m$, 
where 
\begin{equation}
\alpha^{(n_{k-1} + 1,n_{k}-1)} = \sum_{\vi} \dot\tK^{(n_{k-1} + 1,n_{k}-1)\dagger }_\vi \dot\tK^{(n_{k-1} + 1,n_{k}-1)}_\vi.  
\end{equation}

First, let us choose the Kraus representation of $\mE^{(n_{k-1} + 1,n_{k})}$ to be 
\begin{equation}
\label{eq:kraus-choice-1}
\tK^{(n_{k-1} + 1,n_{k})}_{i,\vi'} = K_i \tK^{(n_{k-1} + 1,n_{k}-1)}_{\vi'}.
\end{equation}
$\{K_i\}$ is a Kraus representation of $\mC_{n_{k+1}} \circ \mE$ that can be written as $K_{i=(a,b)} = C_a K_b$ where $\{C_a\}$ is an arbitrary Kraus representation of $\mC_{n_{k+1}}$ and $\{K_b\}$ in the natural choice of Kraus representation from \eqref{eq:natural-kraus}. Specifically, we have 
\begin{equation}
    \sum_i \dot K_i^\dagger \dot K_i = \sum_b \dot K_b^\dagger \dot K_b =  \bigg(\frac{1-p}{2}\trace(G_0^2) + \frac{p}{2}\trace(G_1^2)\bigg) \id. 
\end{equation}
(Note that $\{K_i\}$ depends on $n_{k}$, but for simplicity we omit the superscript here.) Using the Kraus representation in \eqref{eq:kraus-choice-1}, we have 
\begin{equation*}
\begin{split}
\norm{\alpha_k}\big|_{\text{\eqref{eq:kraus-choice-1} holds}}&= \bigg\|\sum_\vi \dot\tK^{(n_{k-1} + 1,n_k)\dagger}_{\vi} \dot\tK^{(n_{k-1} + 1,n_k)}_{\vi}\bigg\| \\
&= \bigg\| \sum_{\vi'} \dot\tK^{(n_{k-1} + 1,n_k-1)\dagger}_{\vi'} \dot\tK^{(n_{k-1} + 1,n_k-1)}_{\vi'} + \sum_{i,\vi'} \tK^{(n_{k-1} + 1,n_k-1)\dagger}_{\vi}\dot K_i^\dagger K_i  \dot\tK^{(n_{k-1} + 1,n_k-1)}_{\vi} \\
&\quad  \;+ \sum_{i,\vi'} \dot\tK^{(n_{k-1} + 1,n_k-1)\dagger}_{\vi} K_i^\dagger \dot K_i  \tK^{(n_{k-1} + 1,n_k-1)}_{\vi} + \sum_{i,\vi'} \tK^{(n_{k-1} + 1,n_k-1)\dagger}_{\vi'} \dot K_i^\dagger \dot K_i \tK^{(n_{k-1} + 1,n_k-1)}_{\vi'} \bigg\|\\
&\leq \norm{\alpha^{(n_{k-1} + 1,n_k-1)}} + \Big\|\sum_i \dot K_i^\dagger \dot K_i\Big\|+ 2\sqrt{\norm{\alpha^{(n_{k-1} + 1,n_k-1)}}\Big\|\sum_i \dot K_i^\dagger \dot K_i\Big\|}\\
&\leq 2 \norm{\alpha^{(n_{k-1} + 1,n_k-1)}} + 2 \Big\|\sum_i \dot K_i^\dagger \dot K_i\Big\| \leq \frac{\tilde{\mu}_2}{2} (n_k - n_{k-1} - 1)^{3/2} + (1-p)\trace(G_0^2) + p\trace(G_1^2),
\end{split}
\end{equation*}
and 
\begin{equation*}
\begin{split}
\norm{\beta_k}\big|_{\text{\eqref{eq:kraus-choice-1} holds}} = \bigg\|\sum_\vi \dot\tK^{(n_{k-1} + 1,n_k)\dagger}_{\vi} \tK^{(n_{k-1} + 1,n_k)}_{\vi}\bigg\| \leq \sqrt{\norm{\alpha_k}}\big|_{\text{\eqref{eq:kraus-choice-1} holds}},
\end{split}
\end{equation*}
where we use $\norm{AB} \leq \norm{A}\norm{B}$ and $\norm{A} = \sqrt{\norm{A^\dagger A}}$ for any matrices $A$ and $B$.

Next, we will choose a different Kraus representation of $\mE^{(n_{k-1} + 1,n_k)}$ to prove the desired QFI upper bound on $F(\mE^{(1,n_{m-1})}\otimes \id)$. We will apply \lemmaref{lemma:alpha-bound-non-unital} to $\mE^{(n_{k-1} + 1,n_k)}$ and the Kraus representation in \eqref{eq:kraus-choice-1} for all $k=1,\ldots,m-1$. Note that 
\begin{equation}
\norm{\mE^{(n_{k-1} + 1,n_k)}(\id)  - \id} \geq \abs{\frac{\trace(\mE^{(n_{k-1} + 1,n_k)}(\id)Z)}{2} } \geq \frac{1}{2}. 
\end{equation}
\lemmaref{lemma:alpha-bound-non-unital} states that (for all $1 \leq k \leq m-1$) there exists a different Kraus representation of $\mE^{(n_{k-1} + 1,n_{k})}$ from \eqref{eq:kraus-choice-1} such that 
\begin{equation}
\label{eq:kraus-choice-2}
\beta_k = 0,  \; \text{and} \;
\norm{\alpha_k} \leq 2 \norm{\alpha_k}\big|_{\text{\eqref{eq:kraus-choice-1} holds}} + \xi(\sqrt{\norm{\alpha_k}},\frac{1}{2})\big|_{\text{\eqref{eq:kraus-choice-1} holds}} = \left(2 + \xi(1,\frac{1}{2})\right) \norm{\alpha_k}\big|_{\text{\eqref{eq:kraus-choice-1} holds}} . 
\end{equation}
Plugging the new Kraus representation that satisfies \eqref{eq:kraus-choice-2} into \eqref{eq:upper-m-segment}, we have 
\begin{align}
{F(\mE^{(1,n_{m-1})}\otimes \id)}
&\leq 4\sum_{k=1}^{m-1} \trace(\iota_{k-1} \alpha_k) \leq 8\sum_{k=1}^{m-1} \norm{\alpha_k} \leq 8\left(2 + \xi(1,\frac{1}{2})\right) \sum_{k=1}^{m-1} \norm{\alpha_k}\big|_{\text{\eqref{eq:kraus-choice-1} holds}} \nonumber \\
&\leq 8\left(2 + \xi(1,\frac{1}{2})\right)\sum_{k=1}^{m-1} \bigg( \frac{\tilde{\mu}_2}{2} (n_k - n_{k-1} - 1)^{3/2} + (1-p)\trace(G_0^2) + p\trace(G_1^2) \bigg) \nonumber \\
&\leq 8\left(2 + \xi(1,\frac{1}{2})\right) \left(
\frac{\tilde{\mu}_2}{2} \, n_{m-1}^{3/2} + (m-1)\big((1-p)\trace(G_0^2) + p\trace(G_1^2)\big)\right). \label{eq:gluing}
% \sum_{k=1}^{m-1} \mu (n_{k} - n_{k-1})^{3/2} + 4(m-1)\left(1 + \xi(1,\frac{1}{2})\right)\left(1 + \frac{c^2}{4p(1-p)}\right) \\
% &\leq 4\left(1 + \xi(1,\frac{1}{2})\right)\sqrt{n_{m-1}}\sum_{k=1}^{m-1} \mu (n_{k} - n_{k-1}) + 4(m-1)\left(1 + \xi(1,\frac{1}{2})\right)\left(1 + \frac{c^2}{4p(1-p)}\right) \\
% &= 4\left(1 + \xi(1,\frac{1}{2})\right)\mu n_{m-1}^{3/2} + 4(m-1)\left(1 + \xi(1,\frac{1}{2})\right)\left(1 + \frac{c^2}{4p(1-p)}\right).
\end{align}

Finally, we complete the proof by taking the last segment $\mE^{(n_{m-1}+1,n_m)}$ into consideration. The proof will then be completed considering the following two cases: 
\begin{enumerate}[(1)]
    \item $\abs{\frac{\trace(\mE^{(n_{m-1}+1,n_{m})}(\id)Z)}{2} } \geq \frac{1}{2}$. Then the discussion above (e.g. \eqref{eq:gluing}) can be directly extended to include the $k=m$ case to show that 
    \begin{equation}
    {F(\mE^{(n)}\otimes \id)} = {F(\mE^{(1,n_{m})}\otimes \id)}\leq 8\left(2 + \xi(1,\frac{1}{2})\right) \left(
\frac{\tilde{\mu}_2}{2} \, n_{m}^{3/2} + m\big((1-p)\trace(G_0^2) + p\trace(G_1^2)\big)\right). 
    \end{equation}
    \item $\abs{\frac{\trace(\mE^{(n_{m-1}+1,n_{m})}(\id)Z)}{2} } < \frac{1}{2}$. Using \lemmaref{lemma:non-unital-partial}, we have 
    \begin{equation}
    {F(\mE^{(n_{m-1}+1,n_{m})}\otimes \id)}  \leq \tilde\mu_2 (n_{m}-n_{m-1})^{3/2}. 
    \end{equation}
Then using the chain rule of root QFI~\cite{yuan2017fidelity,katariya2020geometric}, namely, 
\begin{equation}
    F( (\mM_\theta \circ \mN_\theta) \otimes \id)^{1/2} \leq F(  \mN_\theta \otimes \id)^{1/2} + F( \mM_\theta \otimes \id)^{1/2}, 
\end{equation}
we have 
    \begin{align}
    {F(\mE^{(n)}\otimes \id)}
    &\leq \left( \sqrt{{F(\mE^{(1,n_{m-1})}\otimes \id )}} + \sqrt{{F(\mE^{(n_{m-1}+1,n_{m})}\otimes \id )}} \right)^2\nonumber\\
    &\leq 2 {F(\mE^{(1,n_{m-1})}\otimes \id )} + 2 {F(\mE^{(n_{m-1}+1,n_{m})}\otimes \id )}\nonumber\\
    &\leq 16\left(2 + \xi(1,\frac{1}{2})\right) \left(
\frac{\tilde{\mu}_2}{2} \, n_{m-1}^{3/2} + (m-1)\big((1-p)\trace(G_0^2) + p\trace(G_1^2)\big)\right) + 2\tilde\mu_2 (n_{m}-n_{m-1})^{3/2}\nonumber\\
    &\leq 16\left(2 + \xi(1,\frac{1}{2})\right) \left(\frac{{\tilde{\mu}_2}}{2} n^{3/2} + n\big((1-p)\trace(G_0^2) + p\trace(G_1^2)\big)\right), 
    % &\leq 8\left(1 + \xi(1,\frac{1}{\sqrt{2}})\right)\mu n_{m-1}^{3/2} + 8(m-1)\left(1 + \xi(1,\frac{1}{\sqrt{2}})\right)\left(1 + \frac{c^2}{4p(1-p)}\right) + 2\mu (n-n_{m-1})^{3/2}\\
    % &\leq 8\left(1 + \xi(1,\frac{1}{\sqrt{2}})\right)\mu n^{3/2} + O(n). 
    \end{align}
    which completes the proof. 
\end{enumerate}
\end{proof}

\section{Dephasing-class channels: Achieving the SQL}
\label{app:sql}

In \appref{app:dephasing-unital}, we showed the ancilla-free sequential strategy with unital controls can at most achieve a QFI of $O(n)$ (even when the HNKS condition is satisfied). In this appendix, we present a specific protocol that uses unitary controls and achieves the linear scaling of QFI for channels satisfying the RGNKS condition (i.e., either $G_0$ or $G_1$ is not proportional to a Pauli-Z operator), proving the upper bound is tight. Note that although the SQL, i.e., a QFI of $\Theta(n)$, can be trivially achieved when quantum measurement can be performed in each of the $n$ steps, it is previously unknown whether the SQL is achievable with only unitary controls and a single measurement in the end. In particular, as we will see later in \appref{app:dephasing-2}, when the RGNKS condition is violated (when both $G_0$ and $G_1$ are proportional to the Pauli-Z operator), the QFI is at most a constant when only unitary controls is available, in contrast to the case where repeated measurements are possible. 

Specifically, here we again consider the most general form of one-parameter dephasing-class channel (\eqref{eq:one-parameter-dephasing-qubit-channel}), i.e., 
\begin{equation}
\mE_\theta(\rho) = (1-p_\theta) e^{-i G_0 \theta} \rho e^{i G_0 \theta} + p_\theta  Z e^{-i G_1 \theta} \rho e^{-i G_1 \theta} Z, 
\end{equation}
and we assume the RGNKS condition is satisfied, i.e., either $G_0$ or $G_1$ contains a non-Pauli-Z component and one of $\trace(G_0 X)$, $\trace(G_0Y)$, $\trace(G_1Y)$ and $\trace(G_1Y)$ must be non-zero, and quantum controls are unitary operations $\mC_k(\cdot) = \mU_k(\cdot) = U_k(\cdot)U_k^\dagger$. We will prove a lower bound of $\Omega(n)$ on the channel QFI $F(\mE^{(n)}_\theta)$ by presenting a specific choice of $\{U_k\}_{k=1}^n$ and an initial state $\rho_0$ that achieves the SQL.

\begin{theorem}[Achieving the SQL using unitary controls for dephasing-class channels]  
\label{thm:dephasing-lower}
    Consider estimating an unknown parameter $\theta$ in a dephasing-class channel $\mE_\theta$ of the most general form in \eqref{eq:one-parameter-dephasing-qubit-channel} using the ancilla-free sequential strategy. Assume either $G_0$ or $G_1$ is not proportional to a Pauli-Z operator, i.e., the RGNKS condition is satisfied. Then for each $n$, there exists a unitary control sequence $\{(\mU_k)_n\}_{k=1}^n$ and an initial state $\rho_0$ such that 
    \begin{equation}
    F(\mE^{(n)}_\theta(\rho_0) = F((\mU_n)_n \circ \mE_{\theta} \circ (\mU_{n-1})_n \circ \mE_{\theta} \circ \cdots \circ (\mU_{1})_n \circ \mE_{\theta}(\rho_0) = \Theta(n), 
    \end{equation}
    where we use $(\cdot)_n$ to denote the dependence on $n$. The control sequences are different for different $n$.  
    \end{theorem}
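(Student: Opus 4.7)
\emph{Proof proposal.}
The plan is to explicitly construct, for each $n$, a single-qubit unitary control protocol that saturates the linear upper bound of \thmref{thm:dephasing-unital}. The guiding intuition is borrowed from the worked example in the main text: applying a small perpendicular rotation at every step drives the state towards a near-steady Bloch vector with an off-$Z$ component of order $1/\sqrt{n}$, and because the dominant eigenvalue of the effective single-step Bloch map is $1-\Theta(1/n)$, the $\theta$-derivative accumulates coherently across the $n$ steps to reach magnitude $\Theta(\sqrt{n})$, hence a QFI of $\Theta(n)$.

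First I would reduce to a canonical form. Using the Kraus freedom $G_i \mapsto G_i + c_i \id$ together with conjugation of the whole $n$-step channel by a $Z$-rotation (which commutes with the dephasing factor and can be absorbed into $\rho_0$, the final POVM, and each $\mU_k$), the RGNKS assumption $\{G_0, G_1\} \not\subset \mathrm{span}\{\id, Z\}$ lets me assume without loss of generality that $G_0$ has a nonzero $X$-component and zero $Y$-component, i.e., $\trace(G_0 X) \neq 0$ and $\trace(G_0 Y) = 0$ (the case where only $G_1$ carries the non-$Z$ part is handled symmetrically). Then I take $\rho_0 = \ket{0}\bra{0}$ and every control to be $\mU_k(\cdot) = U(\cdot)U^\dagger$ with $U = e^{-i\phi X/2}$ and $\phi = c/\sqrt{n}$ for a small fixed constant $c > 0$ to be tuned.

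The main analysis is the Bloch-vector dynamics. Since $\mE_\theta$ is unital for every $\theta$, the single-step map $M_\theta = \mU \circ \mE_\theta$ is linear on Bloch vectors. Restricted to the $yz$-plane (invariant under $M_0$ for our initial state), $M_0 = R_X(\phi)\, D_p$ with $D_p = \mathrm{diag}(1-2p, 1-2p, 1)$, whose dominant eigenvalue is $\lambda_+ = 1 - (1-p)\phi^2/(2p) + O(\phi^4) = 1 - \Theta(1/n)$ with eigenvector $v_+ \propto (\phi/(2p),\, 1)^{T}$, and whose subdominant eigenvalue is $\lambda_- = 1 - 2p + O(\phi^2)$, bounded strictly away from $1$. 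Consequently $\vv_n$ at $\theta = 0$ is $\Theta(1)$-close to a scalar multiple of $v_+$ with $\|\vv_n\|$ uniformly bounded away from $1$. Expanding
\begin{equation*}
\dot\vv_n = \sum_{k=0}^{n-1} M_0^{n-1-k}\, \dot M\, M_0^k\, \vv_0,
\end{equation*}
where $\dot M = \partial_\theta M_\theta|_{\theta = 0}$ collects the infinitesimal rotations generated by $G_0, G_1$ and the $\dot p$ mixing contribution, the crucial claim is that the component of $\dot M\, v_+$ along $v_+$ equals $\Theta(1/\sqrt{n})$: this pairs the $\hat{y}$-component of $v_+$ (of order $\phi \sim 1/\sqrt{n}$) with the $\hat{y} \to \hat{z}$ first-order coupling produced by $\trace(G_0 X) \neq 0$, which is exactly what the RGNKS condition supplies. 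Because $\lambda_+^{n-1-k} = \Theta(1)$ uniformly in $k$, the $n$ such contributions add coherently to give $\dot\vv_n = \Theta(\sqrt{n})\, v_+$ plus perpendicular corrections that are damped by $\lambda_-^{n-1-k}$ and sum to $O(1)$.

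Because $\|\vv_n\|$ is bounded strictly away from $1$, the single-qubit QFI satisfies $F \geq \|\dot\vv_n\|^2 = \Theta(n)$, and together with \thmref{thm:dephasing-unital} this yields the claimed $\Theta(n)$. The main obstacle is the uniform-in-$n$ spectral/perturbative bookkeeping behind the crucial claim above: one must verify that the projection of $\dot M\, v_+$ onto $v_+$ really has order $1/\sqrt{n}$ with a nonzero prefactor, that the $\dot p$ contribution and any residual off-$X$ components of $G_0, G_1$ do not cancel the leading signal, and that the $O(1/p)$-step initial transient does not spoil the coherent accumulation. The RGNKS condition is precisely what keeps this signal alive — when $G_0, G_1 \in \mathrm{span}\{\id, Z\}$, every infinitesimal $\theta$-rotation commutes with the dephasing-preserved $\hat{z}$-axis, the perpendicular overlap vanishes at leading order, and no amount of perpendicular unitary control can generate a linear-in-$n$ signal, consistent with \thmref{thm:SQL-2}.
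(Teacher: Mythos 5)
Your proposal follows essentially the same route as the paper's proof: the identical protocol (initial state $\ket{0}$, a fixed single-qubit rotation $e^{-i\varphi X/2}$ with $\varphi=\Theta(1/\sqrt{n})$ at every step), the same Pauli-transfer-matrix iteration $\dot\vv_n=\sum_k (OM)^{n-1-k}OD(OM)^k\vv_0$, and the same spectral analysis of $OM$ (dominant eigenvalue $1-\Theta(1/n)$ with eigenvector tilted by $O(\varphi)$ off the $z$-axis, subdominant eigenvalues bounded away from $1$), concluding $\|\dot\vv_n\|=\Theta(\sqrt{n})$ and hence QFI $\Theta(n)$. The one substantive caution concerns your ``crucial claim'': the $\hat y\to\hat z$ entry of $D$ is $\trace(G_+X)$ with $G_+=(1-p)G_0+pG_1$, not $\trace(G_0X)$, and $\trace(G_+X)$ can vanish while RGNKS holds (e.g.\ $(1-p)\trace(G_0X)=-p\trace(G_1X)\neq 0$), so the mechanism you describe would, taken literally, give a vanishing prefactor in such cases. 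The paper's explicit computation of $e_3^TS^{-1}OD\,Se_3$ shows the true coefficient collects $\trace(G_+X)+\trace(G_-X)=2(1-p)\trace(G_0X)$ from several $O(\varphi)$ contributions (the left eigenvector's $y$-component and the control rotation itself), which is why the pure $X$-rotation control works precisely when $G_0$ carries the non-$Z$ part, and why the $G_1$ case requires the modified control $e^{-i\varphi X/2}Z$ (giving prefactor $\propto\trace(G_1X)$) rather than being a formal symmetry of the same protocol. Carrying out that bookkeeping is exactly the content of the paper's proof, so your sketch is sound in outline but the step you flag as ``the main obstacle'' is genuinely where the work lies.
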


\begin{proof}
Since we consider only unitary controls, it is convenient to represent the evolution of quantum states using Bloch sphere representation. To be more specific, for $k = 0,\ldots , n$,  let 
\begin{equation}
\rho_k = \mU_k \circ \mE_{\theta} \circ \mU_{n-1} \circ \mE_{\theta} \circ \cdots \circ \mU_{1} \circ \mE_{\theta}(\rho_0) =: \frac{\id + x_k X + y_k Y + z_k Z}{2},
\end{equation}
where the Bloch vector $\vv_k := (x_k,y_k,z_k)$ fully represents the quantum state $\rho_k$ at the $k$-th step. The iteration relation, for $k=0,\ldots,n$, 
\begin{equation}
    \rho_k = \mU_k \circ \mE_\theta (\rho_{k-1})
\end{equation}
can be translated to an iteration relation on the Bloch vector 
\begin{equation}
\label{eq:pauli-transfer}
    \vv_k = O_k \left( (1-p) e^{R_0 \theta} + p \begin{pmatrix}-1 & 0 & 0 \\ 0 & -1 & 0 \\ 0 & 0 & 1 \end{pmatrix}e^{R_1 \theta}\right) \vv_{k-1}, 
\end{equation}
where $O_k$ is a rotation matrix representing $\mU_k$ and 
\begin{equation}
    R_{0,1} = \begin{pmatrix}
        0 & -\trace(G_{0,1} Z) & \trace(G_{0,1} Y) \\
        \trace(G_{0,1} Z) & 0 & -\trace(G_{0,1} X) \\
        -\trace(G_{0,1} Y) & \trace(G_{0,1} X) & 0 
    \end{pmatrix}. 
\end{equation}
Note that quantum channel acts through standard matrix multiplication in \eqref{eq:pauli-transfer}, where this matrix representation is usually called \emph{Pauli transfer matrix} (It is originally a four-by-four matrix defined by $\frac{1}{2}\trace(\sigma_i\mN(\sigma_j))$ for Pauli basis $\sigma_{0,1,2,3} = \id,X,Y,Z$ representing any qubit channel $\mN(\cdot)$ but here we consider only a three-by-three matrix in basis $\sigma_{1,2,3}$ because every channel is unital in this discussion.)
Moreover, the derivative of $\vv_k$ at $\theta = 0$ satisfies
\begin{equation}
\dot\vv_k = O_k D \vv_{k-1} + O_k M \dot\vv_{k-1},   
\end{equation}
where $\dot\vv_{0} = 0$, 
\begin{equation}
    M = \begin{pmatrix}
    1-2p & 0 & 0 \\
    0 & 1-2p & 0 \\
    0 & 0 & 1
    \end{pmatrix},\quad 
    D = %(1-p)R_0 + p \begin{pmatrix}-1 & 0 & 0 \\ 0 & -1 & 0 \\ 0 & 0 & 1 \end{pmatrix} R_1. 
    \begin{pmatrix}
        -2\dot p & -\trace(G_- Z) & \trace(G_- Y) \\
        \trace(G_- Z) & -2\dot p & -\trace(G_- X) \\
        -\trace(G_+ Y) & \trace(G_+ X) & 0 
    \end{pmatrix}
\end{equation}
Assume $O_k = O$ for all $k$, i.e., unitary controls are the same each step. Then we can write 
\begin{equation}
\label{eq:SQL-iteration}
    \vv_n =  (O M)^n \vv_0,\quad 
    \dot\vv_n =  \sum_{k=0}^{n-1} (O M)^{n-k-1} OD (O M)^k \vv_0. 
\end{equation}
%Since the HNKS condition is satisfied, we have $G_+ \notin {\rm span}\{\id,Z\}$. 

Since we assume $G_0$ and $G_1$ are not both proportional to Pauli-Z operators, $\trace(G_0 X)$, $\trace(G_0 Y)$, $\trace(G_1 X)$ and $\trace(G_1 Y)$ cannot be simultaneously zero. We first consider the case where $\trace(G_0 X) \neq 0$. (The proof when $\trace(G_0 Y) \neq 0$ follows analogously.).  Then we assert the following ansatz of $O = O_k$ can achieve the SQL: 
\begin{equation}
\label{eq:SQL-ansatz}
O = \begin{pmatrix}
1 & 0 & 0 \\
0 & \cos\varphi & -\sin\varphi \\
0 & \sin\varphi & \cos\varphi 
\end{pmatrix},
\end{equation}
which corresponds to the unitary control $U_k = \exp(-i\frac{\varphi}{2}X)$ and $\varphi = \sqrt{w/n}$ for some real number $w > 0$. To compute the output state and its derivative with respect to $\theta$ (\eqref{eq:SQL-iteration}) with the given ansatz \eqref{eq:SQL-ansatz}, we first compute the eigendecomposition of $O M = S\Lambda S^{-1}$, where 
{\small
\begin{equation}
    S =  \begin{pmatrix}
        1 & 0 & 0 \\
        0 & 1 & -\frac{\varphi}{2p} + O(\varphi^3) \\
        0 & -\frac{1-2p}{2p}\varphi + O(\varphi^3) & 1
    \end{pmatrix} ,\quad 
    \Lambda = \begin{pmatrix}
        1-2p & 0 & 0 \\
        0 & 1-2p + \frac{(1-p)(1-2p)}{2p}\varphi^2 + O(\varphi^3) & 0 \\
        0 & 0 & 1 - \frac{(1-p)\varphi^2}{2p} + O(\varphi^3) \\
    \end{pmatrix} . 
    % \begin{pmatrix}
    %     1 & \frac{\varphi}{2p} + O(\varphi^3) \\
    %     \frac{1-2p}{2p}\varphi + O(\varphi^3) & 1
    % \end{pmatrix} 
\end{equation}}
Then 
\begin{gather}
    (OM)^n = S \Lambda^n S^{-1} = 
    % S 
    % \begin{pmatrix}
    %     0 & 0 \\
    %     0 & \exp\left(\frac{1-p}{2p}w\right)  \\
    % \end{pmatrix}S^{-1} + O(1/n).
    \exp\left(-\frac{1-p}{2p}w\right) \begin{pmatrix}
        0 & 0 & 0 \\
        0 & 0 & -\frac{\sqrt{w}}{2p\sqrt{n}} \\
        0 & \frac{(1-2p)\sqrt{w}}{2p\sqrt{n}} & 1 
    \end{pmatrix} + O(1/n), \\
    \begin{split}\sum_{k=0}^{n-1} (O M)^{n-k-1} OD (O M)^k &= \sum_{k=0}^{n-1} S \Lambda^{n-k-1} S^{-1} O D S \Lambda^k S^{-1} 
    \\& = \exp\left(-\frac{1-p}{2p}w\right) \frac{(1-p)\trace(G_0X)\sqrt{w n}}{p} \begin{pmatrix}
        0 & 0 & 0 \\
        0 & 0 & 0 \\
        0 & 0 & 1 
    \end{pmatrix} + O(1). 
    \end{split}
\end{gather}
We have the output state and its derivative equal to 
\begin{equation}
\label{eq:derivative-rho}
    \rho_n = \frac{\id + \exp\left(-\frac{p}{2(1-p)}w\right) z_0 Z}{2} + O(1/\sqrt{n}), \quad\dot\rho_n = \frac{\exp\left(-\frac{p}{2(1-p)}w\right) \frac{p\trace(G_0X)\sqrt{w n}}{1-p} z_0 Z}{2} + O(1). 
\end{equation}
The QFI of $\rho_n$ can be computed as 
\begin{equation}
\label{eq:control-QFI-1}
    F(\rho_n) = \frac{\frac{(1-p)^2}{p^2}w}{z_0^{-2}\exp\left(\frac{1-p}{p}w\right)-1} \trace(G_0 X)^2  n + O(\sqrt{n}). 
\end{equation}
Taking $z_0 = 1$ and $w$ a small number, $F(\rho_n)$ can be arbitrarily close to $\frac{1-p}{p} \trace(G_0 X)^2 n$ asymptotically. When $p$ is small, $F(\rho_n)$ can be very large and goes beyond $n F(\mE_\theta)$ which is the optimal QFI allowed using $n$ repeated measurements.   

Above, we show the SQL is achievable when $\trace(G_0 X) \neq 0$. The same argument works analogously when $\trace(G_0 Y) \neq 0$ in which case we can choose the control operator to be $U_k = \exp(-i\frac{\varphi}{2}Y)$ where $\varphi = \sqrt{w/n}$ for some real number $w > 0$. When $\trace(G_1 X) \neq 0$ or $\trace(G_1 Y) \neq 0$, we can choose instead the control operator to be $U_k = \exp(-i\frac{\varphi}{2}X)Z$ or $U_k = \exp(-i\frac{\varphi}{2}Y)Z$ where $\varphi = \sqrt{w/n}$ for some real number $w > 0$. Below, we calculate the QFI in the case where $\trace(G_1 X) \neq 0$ and $U_k = \exp(-i\frac{\varphi}{2}X)Z$ and the calculation for the $\trace(G_1 Y) \neq 0$ case follow analogously. Here 
\begin{equation}
O = \begin{pmatrix}
-1 & 0 & 0 \\
0 & -\cos\varphi & -\sin\varphi \\
0 & -\sin\varphi & \cos\varphi 
\end{pmatrix},
\end{equation}
and $O M = S \Lambda S^{-1}$ where 
{\small
\begin{equation}
    S =  \begin{pmatrix}
        1 & 0 & 0 \\
        0 & 1 & -\frac{\varphi}{2(1-p)} + O(\varphi^3) \\
        0 & \frac{1-2p}{2(1-p)}\varphi + O(\varphi^3) & 1
    \end{pmatrix} ,\quad 
    \Lambda = \begin{pmatrix}
        -1+2p & 0 & 0 \\
        0 & -1+2p + \frac{-p(1-2p)}{2(1-p)}\varphi^2 + O(\varphi^3) & 0 \\
        0 & 0 & 1 - \frac{p\varphi^2}{2(1-p)} + O(\varphi^3) \\
    \end{pmatrix} . 
    % \begin{pmatrix}
    %     1 & \frac{\varphi}{2p} + O(\varphi^3) \\
    %     \frac{1-2p}{2p}\varphi + O(\varphi^3) & 1
    % \end{pmatrix} 
\end{equation}}
Then 
\begin{gather}
    (OM)^n = S \Lambda^n S^{-1} = 
    % S 
    % \begin{pmatrix}
    %     0 & 0 \\
    %     0 & \exp\left(\frac{1-p}{2p}w\right)  \\
    % \end{pmatrix}S^{-1} + O(1/n).
    \exp\left(-\frac{p}{2(1-p)}w\right) \begin{pmatrix}
        0 & 0 & 0 \\
        0 & 0 & -\frac{\sqrt{w}}{2(1-p)\sqrt{n}} \\
        0 & \frac{-(1-2p)\sqrt{w}}{2(1-p)\sqrt{n}} & 1 
    \end{pmatrix} + O(1/n), \\
    \begin{split}\sum_{k=0}^{n-1} (O M)^{n-k-1} OD (O M)^k &= \sum_{k=0}^{n-1} S \Lambda^{n-k-1} S^{-1} O D S \Lambda^k S^{-1} 
    \\& = \exp\left(-\frac{p}{2(1-p)}w\right) \frac{p\trace(G_1X)\sqrt{w n}}{1-p} \begin{pmatrix}
        0 & 0 & 0 \\
        0 & 0 & 0 \\
        0 & 0 & 1 
    \end{pmatrix} + O(1). 
    \end{split}
\end{gather}
Similar to \eqref{eq:control-QFI-1}, we will have 
\begin{equation}
\label{eq:control-QFI-2}
    F(\rho_n) = \frac{\frac{p^2}{(1-p)^2}w}{z_0^{-2}\exp\left(\frac{p}{1-p}w\right)-1} \trace(G_1 X)^2  n + O(\sqrt{n}). 
\end{equation}
Taking $z_0 = 1$ and $w$ a small number, $F(\rho_n)$ can be arbitrarily close to $\frac{p}{1-p} \trace(G_1 X)^2 n$ asymptotically. 
\end{proof}

\section{Dephasing-class channels: Unachievability of the SQL}
\label{app:dephasing-2}

We showed in \appref{app:dephasing} two types of QFI upper bounds for estimating an unknown parameter in dephasing-class channels using unital and CPTP controls. In \appref{app:sql}, we showed the upper bound of linear scaling is achievable using unitary controls for dephasing-class channels satisfying the RGNKS condition. Here we show that in contrast to the previous case, when the RGNKS condition is violated, the QFI is at most a constant using unital controls. 
%---meaning that concatenation of quantum channels provides no advantage in this special situation. 

\begin{theorem}[QFI upper bound for estimating dephasing-class channels using ancilla-free sequential strategies with unital controls]  
\label{thm:dephasing-upper-2}
 Consider estimating an unknown parameter $\theta$ in a dephasing-class channel $\mE_\theta$ of the most general form in \eqref{eq:one-parameter-dephasing-qubit-channel} using the ancilla-free sequential strategy and assume arbitrary unital controls $\mU_k$. Assume $G_0,G_1 \propto Z$, i.e., the RGNKS condition is violated. Then the channel QFI of \eqref{eq:entire-channel} for any fixed $\mE_\theta$ satisfies 
    \begin{equation}
    \label{eq:unital-dephasing-II}
    \sup_{\{\mU_k\}_{k=1}^n} F(\mE^{(n)}_\theta) = \sup_{\{\mU_k\}_{k=1}^n} F(\mU_n\circ\mE_\theta\circ\cdots\circ\mU_1\circ\mE_\theta) = O(1). 
    \end{equation}
\end{theorem}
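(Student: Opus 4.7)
The plan is to work in the Bloch-vector picture and exploit the fact that, when $G_0,G_1\propto Z$, the $\theta$-information funnels entirely through the transverse (xy) component of the Bloch vector, which is strictly contracted by every application of $\mE_\theta$. By the classification of dephasing-class channels and the failure of RGNKS (see \eqref{eq:one-parameter-dephasing-qubit-channel} and \appref{app:HNKS}), I may assume the canonical form with $g_0,g_1\in\bR$ (absorbing fixed unitaries into the controls). The iteration reads $\vv_k = O_k M_\theta \vv_{k-1}$, where $O_k\in\bR^{3\times 3}$ is the contractive Bloch matrix of $\mU_k$ and $M_\theta$ acts as the identity on the $z$-component and as complex multiplication by $\lambda_\theta := (1-p_\theta)e^{2ig_0\theta} - p_\theta e^{2ig_1\theta}$ on the transverse component, so that $|\lambda_0| = 1-2p$. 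Consequently $\dot M_\theta$ lands entirely in the xy-plane and a direct calculation gives $\|\dot M_\theta\vv\| = C_1\|(\vv)_{xy}\|$ with $C_1 := 2\sqrt{\dot p^2 + ((1-p)g_0 - pg_1)^2}$.

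The first ingredient is a telescoping bound on the transverse norm. From the identity $\|\vv_{k-1}\|^2 - \|M_\theta\vv_{k-1}\|^2 = 4p(1-p)\|(\vv_{k-1})_{xy}\|^2$ together with $\|\vv_k\|^2 \leq \|M_\theta\vv_{k-1}\|^2$ (contractivity of $O_k$), summation yields $S_n := \sum_{k=1}^n\|(\vv_{k-1})_{xy}\|^2 \leq (\|\vv_0\|^2-\|\vv_n\|^2)/(4p(1-p))$, hence $S_n \leq 1/(4p(1-p))$ and $1-\|\vv_n\|^2 \geq 4p(1-p)S_n$. The second ingredient is a one-step additive recursion for $\|\dot\vv_k\|^2$. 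Writing $\dot\vv_k = O_k(\dot M_\theta\vv_{k-1} + M_\theta\dot\vv_{k-1})$, the first summand lies in xy with norm $C_1\|(\vv_{k-1})_{xy}\|$, while $M_\theta$ contracts the xy-part of $\dot\vv_{k-1}$ by $1-2p$ and preserves its z-part. Splitting the argument of $O_k$ into orthogonal xy and z parts and applying $(a+b)^2 \leq (1+\varepsilon)b^2 + (1+\varepsilon^{-1})a^2$ with the calibrated choice $\varepsilon = 4p(1-p)/(1-2p)^2$---for which $(1+\varepsilon)(1-2p)^2 = 1$ and $1+\varepsilon^{-1} = 1/(4p(1-p))$, by the identity $(1-2p)^2+4p(1-p)=1$---followed by the contractivity of $O_k$, yields
\begin{equation*}
\|\dot\vv_k\|^2 \leq \|\dot\vv_{k-1}\|^2 + \frac{C_1^2}{4p(1-p)}\|(\vv_{k-1})_{xy}\|^2,
\end{equation*}
so telescoping gives $\|\dot\vv_n\|^2 \leq C_1^2 S_n/(4p(1-p))$.

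To conclude, I apply the qubit QFI identity $F(\rho_\theta) = \|\dot\vv\|^2 + (\vv\cdot\dot\vv)^2/(1-\|\vv\|^2)$, which via Cauchy--Schwarz gives $F\leq \|\dot\vv\|^2/(1-\|\vv\|^2)$. Combining with the two previous bounds,
\begin{equation*}
F(\mE^{(n)}_\theta(\rho_0)) \leq \frac{\|\dot\vv_n\|^2}{1-\|\vv_n\|^2} \leq \frac{C_1^2 S_n/(4p(1-p))}{4p(1-p)S_n} = \frac{C_1^2}{(4p(1-p))^2},
\end{equation*}
whenever $S_n > 0$. When $S_n = 0$, every $\dot M_\theta\vv_{k-1}$ vanishes, so $\dot\vv_n = 0$ and $F(\rho_n) = 0$, covering also the pure-state edge case $\|\vv_n\| = 1$. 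The bound is independent of $n$, of the initial state, and of the control sequence, establishing \eqref{eq:unital-dephasing-II}. The crux of the argument---and the main obstacle---is the calibrated choice of $\varepsilon$: any other choice leaves a multiplicative factor strictly greater than $1$ on $\|\dot\vv_{k-1}\|^2$ in the recursion, yielding only $\|\dot\vv_n\|^2 = O(n)$ by Cauchy--Schwarz, hence at best a linear-in-$n$ QFI bound; the algebraic cancellation $(1-2p)^2 + 4p(1-p) = 1$ is precisely the structural identity that upgrades the bound to $O(1)$.
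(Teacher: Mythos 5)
Your proposal is correct and follows essentially the same route as the paper's proof: both telescope $\|\dot\vv_k\|^2$ against the monotone decrease of $\|\vv_k\|^2$ through the transverse Bloch component (your calibrated $\varepsilon$ with $(1-2p)^2+4p(1-p)=1$ is exactly the paper's maximization of the quadratic in $\dot x_{k-1},\dot y_{k-1}$), and both close with a bound of the form $F\leq \mathrm{const}\cdot\|\dot\vv_n\|^2/(1-\|\vv_n\|^2)$. Your use of the exact qubit QFI identity merely sharpens the paper's constant by a factor of $8$; the argument is otherwise the same.
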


\begin{proof}
    For $k = 0,\ldots, n$, let 
\begin{equation}
\rho_k = \mU_k \circ \mE_{\theta} \circ \mU_{k-1} \circ \mE_{\theta} \circ \cdots \circ \mU_{1} \circ \mE_{\theta}(\rho_0) =: \frac{\id + x_k X + y_k Y + z_k Z}{2}. 
\end{equation}
Then we have
\begin{equation}
%\label{eq:constant-iteration}
    \vv_k = T_k M \vv_{k-1},\quad 
    \dot\vv_k = T_k D \vv_{k-1} + T_k M \dot\vv_{k-1},
\end{equation}
where $T_k$ describes $\mU_k$ by $\mC_k\left(\frac{1}{2}(\id + \vw\cdot\vsig)\right) = \frac{1}{2}(\id + (T_k\vw)\cdot\vsig)$ and
\begin{equation}
    M = \begin{pmatrix}
        1-2p & 0 & 0 \\
        0 & 1-2p & 0\\
        0 & 0 & 1
    \end{pmatrix},\quad 
    D = \begin{pmatrix}
        -2\dot p & - \trace(G_- Z) & 0\\
        \trace(G_- Z) & -2\dot p & 0 \\
        0 & 0 & 0
    \end{pmatrix},
\end{equation}
where we use the assumption that $G_{0,1} \propto Z$. We also assume $T_0 = \id$. Note that $\norm{T_k} \leq 1$ in order for the channel to be CPTP. We have 
\begin{align}
    \norm{\dot\vv_k}^2 &= \norm{T_k D \vv_{k-1}+ T_k M \dot\vv_{k-1}}^2 \leq \norm{D \vv_{k-1}+M \dot\vv_{k-1}}^2 \\
    &= \left(\left((1-2p)\dot x_{k-1} - \trace(G_- Z) y_{k-1} - 2\dot p x_{k-1} \right)^2 + \left((1-2p)\dot y_{k-1} + \trace(G_- Z) x_{k-1}  - 2\dot p y_{k-1}\right)^2 + \dot z_{k-1}^2\right).\nonumber 
\end{align}
Note that $\norm{\dot\vv_{k-1}}^2 = \dot x_{k-1}^2 + \dot y_{k-1}^2 + \dot z_{k-1}^2$. 
Then we have 
\begin{align}
    \norm{\dot\vv_k}^2 - \norm{\dot\vv_{k-1}}^2 
    & \leq \frac{(- \trace(G_- Z) y_{k-1} - 2\dot p x_{k-1} )^2 + (\trace(G_- Z) x_{k-1} - 2\dot p y_{k-1})^2}{4p(1-p)}\\
    & \leq \frac{(\trace(G_- Z)^2+ 4\dot p^2) (x_{k-1}^2 + y_{k-1}^2)}{2p(1-p)}.\label{eq:iteration-1}
    %&\leq \norm{T_k}^2 \frac{\left(\trace(G_- Z)^2 + 4\dot p^2\right)\left(\norm{\vv_{k-1}}^2-\norm{\vv_k}^2\right)}{2p(1-p)(1-(1-2p)^2)},
\end{align}
Then note that $\norm{\vv_{k-1}}^2 = x_{k-1}^2 + y_{k-1}^2 + z_{k-1}^2$, $\norm{M\vv_{k-1}}^2 \leq (1-2p)^2x_{k-1}^2 + (1-2p)^2y_{k-1}^2 + z_{k-1}^2$, and $\norm{\vv_{k}} = \norm{T_k M\vv_{k-1}} \leq \norm{M\vv_{k-1}}$. We have  
\begin{equation}
     {x_{k-1}^2 + y_{k-1}^2} = \frac{\norm{\vv_{k-1}}^2 - \norm{M \vv_{k-1}}^2}{1-(1-2p)^2} \leq \frac{\norm{\vv_{k-1}}^2 - \norm{\vv_{k}}^2}{1-(1-2p)^2}.
\end{equation}
Then from \eqref{eq:iteration-1}, we have 
\begin{equation}
    \norm{\dot\vv_k}^2 - \norm{\dot\vv_{k-1}}^2 \leq \frac{\trace(G_- Z)^2+ 4\dot p^2}{8p^2(1-p)^2} \left(\norm{\vv_{k-1}}^2 - \norm{\vv_{k}}^2\right). 
\end{equation}
Summing the above inequality from $k=1$ to $n$, we have 
\begin{equation}
    \norm{\dot\vv_n}^2\leq \frac{\trace(G_- Z)^2+ 4\dot p^2}{8p^2(1-p)^2} \left(\norm{\vv_{0}}^2 - \norm{\vv_{n}}^2\right) \leq \frac{\trace(G_- Z)^2+ 4\dot p^2}{8p^2(1-p)^2} \left(1 - \norm{\vv_{n}}^2\right). 
\end{equation}
When $\norm{\vv_n} = 1$, $\dot\vv_n = 0$ and we have $F(\rho_n) = 0$. Otherwise, the QFI satisfies (let the diagonalization of $\rho_n$ be $\lambda_0\ket{\psi_0}\bra{\psi_0} + \lambda_1 \ket{\psi_1}\bra{\psi_1}$)
\begin{align}
    F(\rho_n) 
    &= \frac{\abs{\bra{\psi_0}\dot\rho_n\ket{\psi_0}}^2}{\lambda_0} + \frac{\abs{\bra{\psi_1}\dot\rho_n\ket{\psi_1}}^2}{\lambda_1} + 2 \abs{\bra{\psi_0}\dot\rho_n\ket{\psi_1}}^2\\
    &\leq \norm{\dot\vv_n}^2 \left( \frac{1}{\frac{1+\norm{\vv_n}}{2}} + \frac{1}{\frac{1-\norm{\vv_n}}{2}} + 2 \right) \leq \frac{8\norm{\dot\vv_n}^2}{1 - \norm{\vv_n}^2} \leq  \frac{\trace(G_- Z)^2+ 4\dot p^2}{p^2(1-p)^2} . \label{eq:qfi-upper-purity}
\end{align}
\eqref{eq:unital-dephasing-II} is then proven. 
\end{proof}

\thmref{thm:dephasing-upper-2} shows a gap between dephasing-class channels that satisfy the RGNKS condition and that violate the RGNKS condition. Specifically, when RGNKS is satisfied, the SQL is attainable using unitary controls; but when RGNKS is violated, the QFI has a constant upper bound using unital controls. 

One might wonder whether general CPTP controls allows the SQL to be achieved even in the second case. We don't have a definite answer to this question. However, below we will present a lemma showing that even with CPTP controls, the derivative of the output quantum state when RGNKS fails has at most a constant scaling, i.e. $\norm{\dot\vv_n} \leq \text{constant}$, in contrast to the case when RGNKS holds and the derivative operator grows with a square-root scaling (as in \eqref{eq:derivative-rho}). Note that this observation itself does not directly imply the constant scaling of QFI for dephasing-class channels violating the RGNKS condition, but it implies that if there exist a set of CPTP controls that achieve the SQL then the control must have a unique feature that makes the Bloch vector of the output state asymptotically converges to the Bloch surface (see \eqref{eq:qfi-upper-purity}, the QFI has a constant upper bound if $\norm{\vv_n}$ is bounded away from $1$). 
% that is, $\abs{\trace(Z\rho_{\theta,n-1})} = \abs{z_{n-1}} \rightarrow  1$. Otherwise, if $\abs{z_{n-1}} \leq 1-\delta$ for large $n$, $F(\rho_{\theta,n}) \leq c/({1-\norm{\vv_n}^2})$ where $c$ is a constant (from \eqref{eq:qfi-upper-purity} and \lemmaref{lemma:dephasing-derivative}) would imply $\norm{\vv_n}$ approaches $1$ and then $x_n^2+y_n^2 \geq 1 - (1-\delta)^2 - \delta'$ for any $\delta'>0$ and large enough $n$. However, this indicates that 

\begin{proposition}[Constant scaling of the derivative of output states from dephasing-class channels using ancilla-free sequential strategies]
\label{prop:dephasing-derivative}
Consider estimating an unknown parameter $\theta$ in a dephasing-class channel $\mE_\theta$ of the most general form in \eqref{eq:one-parameter-dephasing-qubit-channel} using the ancilla-free sequential strategy and assume arbitrary CPTP controls $\mC_k$. Assume $G_0,G_1 \propto Z$, i.e., the RGNKS condition is violated. Then there exists a constant $c$ such that 
\begin{equation}
\norm{\dot\rho_{\theta,n}} \leq c,
\end{equation}
where $\rho_{\theta,n} = \mC_k \circ \mE_{\theta} \circ \mC_{k-1} \circ \mE_{\theta} \circ \cdots \circ \mC_{1} \circ \mE_{\theta} (\rho_0)$ for any initial state $\rho_0$. 
\end{proposition}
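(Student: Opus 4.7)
\emph{Proof plan.} I would work in the Bloch representation, writing $\rho_k = \tfrac{1}{2}(\id + \vv_k\cdot\vsig)$, so that each control $\mC_k$ acts as $\vv\mapsto \vs_k + T_k\vv$ with $\|T_k\|\le 1$, and at $\theta=0$ the dephasing $\mE_0$ acts on Bloch vectors by the diagonal matrix $M = \diag(1-2p,1-2p,1)$. Because $G_0, G_1\propto Z$, the derivative $\dot M = \partial_\theta M_\theta|_{\theta=0}$ is supported on the XY subspace with $\|\dot M\vv\| = \sqrt{a^2+b^2}\,\|\tilde\vv\|$, where $a = -2\dot p$, $b = 2((1-p)g_0-pg_1)$ (writing $G_i = g_i Z$), and $\tilde\vv$ denotes the XY part of $\vv$. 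The derivative of the state satisfies $\dot\vv_k = T_k(\dot M\vv_{k-1} + M\dot\vv_{k-1})$, and since $\|\dot\rho_k\|_\infty = \tfrac12\|\dot\vv_k\|$, the problem reduces to uniformly bounding $\|\dot\vv_n\|$ in $n$ and in the choice of controls.

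Mirroring the manipulations in the proof of \thmref{thm:dephasing-upper-2}, I would square the iteration, apply $\|T_k\|\le 1$ together with $\|M\dot\vv_{k-1}\|^2 = \|\dot\vv_{k-1}\|^2 - 4p(1-p)\|\tilde w_{k-1}\|^2$ (where $\tilde w_{k-1}$ is the XY part of $\dot\vv_{k-1}$), bound the cross term by Cauchy--Schwarz, and complete the square in $\|\tilde w_{k-1}\|$, producing the one-step inequality
\begin{equation*}
\|\dot\vv_k\|^2 \le \|T_k\|^2 \|\dot\vv_{k-1}\|^2 + \|T_k\|^2\,\frac{(a^2+b^2)\,\|\tilde\vv_{k-1}\|^2}{4p(1-p)}.
\end{equation*}
The key improvement over crudely using $\|T_k\|^2\le 1$ in front of the forcing is retaining the explicit prefactor $\|T_k\|^2$, which vanishes whenever $\mC_k$ is strongly non-unital (e.g., a reset channel with $T_k=0$, in which case $\dot\vv_k=0$).

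The main obstacle is concluding from this recursion that $\|\dot\vv_n\|$ is bounded by a constant. Iterating yields $\|\dot\vv_n\|^2 \le C\sum_k (\prod_{j>k}\|T_j\|^2)\|T_k\|^2\|\tilde\vv_{k-1}\|^2$, and bounding this sum is nontrivial because, unlike in \thmref{thm:dephasing-upper-2} where the unital monotonicity $\|\vv_k\|\le\|\vv_{k-1}\|$ enables purity telescoping, non-unital $\mC_k$ can increase the Bloch norm and na\"ively $\sum_k\|\tilde\vv_{k-1}\|^2 = O(n)$. The way around is the intrinsic CPTP trade-off: for any qubit Bloch pair $(T_k,\vs_k)$, the requirement that $\vs_k + T_k\vv$ remain inside the unit ball for every $\|\vv\|\le 1$ forces $\|T_k\|$ to shrink whenever $\|\vs_k\|$ (and hence the potential purity gain) is large. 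I would make this quantitative by introducing a Lyapunov function of the form $\Phi_k = \|\dot\vv_k\|^2 + \lambda(1-\|\vv_k\|^2)$ for a suitably chosen $\lambda = \lambda(p,a,b)$, combine the refined one-step bound with the CPTP constraint to show that $\Phi_k - \Phi_{k-1}$ telescopes to a bounded remainder (the near-unital regime handled via the purity decrement and the strongly non-unital regime handled via the $\|T_k\|^2$ prefactor), and conclude $\|\dot\vv_n\|^2 \le \Phi_n \le c$ for a constant $c$ depending only on $p, g_0, g_1, \dot p$.
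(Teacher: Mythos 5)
Your plan follows the same route as the paper's proof: Bloch-vector recursion, retaining the explicit $\norm{T_k}^2$ prefactor in the one-step bound, converting $\norm{\tilde\vv_{k-1}}^2$ into a purity decrement, and exploiting a non-unitality/contraction trade-off for CPTP maps. However, there is a genuine gap at exactly the step you flag as "the main obstacle." The inequality your telescoping needs is
$\norm{\vt_k + T_k\vw}^2 \leq 1 - \norm{T_k}^2 + \norm{\vw}^2$ for \emph{all} $\vw$ in the Bloch ball (so that $\norm{\vv_{k}}^2 - \norm{M\vv_{k-1}}^2 \leq 1-\norm{T_k}^2$, hence $\norm{\tilde\vv_{k-1}}^2 \lesssim \norm{\vv_{k-1}}^2 - \norm{\vv_k}^2 + 1-\norm{T_k}^2$). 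You justify this only by the remark that the image of the unit ball must stay inside the unit ball. That constraint yields, e.g., $\norm{\vt_k}^2 + \norm{T_k}^2 \leq 1$ (take $\vw$ the top right-singular vector with favorable sign), but it does not by itself give the bound uniformly over short $\vw$; the paper proves it (\propref{prop:bloch}) via the positive-semidefiniteness of the Choi matrix, establishing the stronger $\norm{\vt}^2 \leq (1-\sigma_{\min}(T)^2)(1-\norm{T}^2)$ and then a Cauchy--Schwarz step. This lemma is the technical crux of the whole proposition, and "I would make this quantitative" is where your proof actually has to do the work.

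A secondary issue is the bookkeeping. A Lyapunov function $\Phi_k = \norm{\dot\vv_k}^2 + \lambda(1-\norm{\vv_k}^2)$ with a \emph{constant} $\lambda$ cannot absorb the forcing terms $1-\norm{T_k}^2$, since $\sum_k (1-\norm{T_k}^2)$ can be $\Theta(n)$ (e.g.\ every control with $\norm{T_k}^2 = 1/2$). What saves the argument is the multiplicative discounting you already wrote down in the iterated form: with $Q_k := \prod_{j>k}\norm{T_j}^2$ one has $\sum_k Q_k\norm{T_k}^2(1-\norm{T_k}^2)\cdot(\text{bounded}) \leq \text{const}$ because $Q_k\norm{T_k}^2(1-\norm{T_k}^2)$ telescopes as a difference of products, and similarly the weighted purity differences $\sum_k Q_k\norm{T_k}^2(\norm{\vv_{k-1}}^2-\norm{\vv_k}^2)$ are summed by parts against the monotone weights. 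This is exactly how the paper closes the argument; your constant-$\lambda$ telescoping would not.
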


\begin{proof}
% Let $G_0 = g_0 Z$ and $G_1 = g_1 Z$, we have 
% \begin{equation}
%     \mE_\theta(\cdot) = (1-p_\theta) e^{-ig_0 \theta Z}(\cdot)e^{ig_0 \theta Z} + p_\theta e^{-ig_1 \theta Z}(\cdot)e^{ig_1 \theta Z}. 
% \end{equation}
For $k = 0,\ldots, n$, let 
\begin{equation}
\rho_k = \mC_k \circ \mE_{\theta} \circ \mC_{k-1} \circ \mE_{\theta} \circ \cdots \circ \mC_{1} \circ \mE_{\theta}(\rho_0) =: \frac{\id + x_k X + y_k Y + z_k Z}{2}. 
\end{equation}
Then we have
\begin{equation}
%\label{eq:constant-iteration}
    \vv_k = \vt_k + T_k M \vv_{k-1},\quad 
    \dot\vv_k = T_k D \vv_{k-1} + T_k M \dot\vv_{k-1},
\end{equation}
where $(T_k,\vt_k)$ describes $\mC_k$ by $\mC_k\left(\frac{1}{2}(\id + \vw\cdot\vsig)\right) = \frac{1}{2}(\id + (\vt_k + T_k\vw)\cdot\vsig)$ and
\begin{equation}
    M = \begin{pmatrix}
        1-2p & 0 & 0 \\
        0 & 1-2p & 0\\
        0 & 0 & 1
    \end{pmatrix},\quad 
    D = \begin{pmatrix}
        -2\dot p & - \trace(G_- Z) & 0\\
        \trace(G_- Z) & -2\dot p & 0 \\
        0 & 0 & 0
    \end{pmatrix},
\end{equation}
where we use the assumption that $G_{0,1} \propto Z$. We also assume $T_0 = \id$ and $\vt_0 = 0$.  Note that $\norm{\vt_k  + T_k \vw} \leq \norm{\vw}$ for any unit vector $\vw$ in order for $(\vt_k,T_k)$ to describe a valid quantum channel. Note that $\norm{T_k} \leq 1$. We have 
% Using the iteration relation \eqref{eq:constant-iteration}, we have 
% \begin{equation}
%     \dot\vv_n = \sum_{\ell=1}^{n} \left(\prod_{j=\ell+1}^{n} (T_j M)\right) (T_\ell D) \left(\prod_{j=1}^{\ell-1} (T_j M)\right) \vv_0     + \sum_{k=1}^{n} \sum_{\ell=k+1}^{n} \left(\prod_{j=\ell+1}^{n} (T_j M)\right) (T_\ell D) \left(\prod_{j=k+1}^{\ell-1} (T_j M)\right) \vt_k. 
% \end{equation}
% ---
% When $\mC_k = \mU_k$ is unitary and $T_k = O_k$ is orthogornal, we have 
\begin{align}
    \norm{\dot\vv_k}^2 &= \norm{T_k D \vv_{k-1}+ T_k M \dot\vv_{k-1}}^2 \leq \norm{T_k}^2\norm{D \vv_{k-1}+M \dot\vv_{k-1}}^2 \\
    &= \norm{T_k}^2\left(\left((1-2p)\dot x_{k-1} - \trace(G_- Z) y_{k-1} - 2\dot p x_{k-1} \right)^2 + \left((1-2p)\dot y_{k-1} + \trace(G_- Z) x_{k-1}  - 2\dot p y_{k-1}\right)^2 + \dot z_{k-1}^2\right).\nonumber 
\end{align}
Note that $\norm{\dot\vv_{k-1}}^2 = \dot x_{k-1}^2 + \dot y_{k-1}^2 + \dot z_{k-1}^2$. 
Then we have 
\begin{align}
    \norm{\dot\vv_k}^2 - \norm{T_k}^2\norm{\dot\vv_{k-1}}^2 
    & \leq \norm{T_k}^2 \frac{(- \trace(G_- Z) y_{k-1} - 2\dot p x_{k-1} )^2 + (\trace(G_- Z) x_{k-1} - 2\dot p y_{k-1})^2}{4p(1-p)}\\
    & \leq \norm{T_k}^2 \frac{(\trace(G_- Z)^2+ 4\dot p^2) (x_{k-1}^2 + y_{k-1}^2)}{2p(1-p)}.\label{eq:iteration-1-0}
    %&\leq \norm{T_k}^2 \frac{\left(\trace(G_- Z)^2 + 4\dot p^2\right)\left(\norm{\vv_{k-1}}^2-\norm{\vv_k}^2\right)}{2p(1-p)(1-(1-2p)^2)},
\end{align}

To prove $\norm{\vv_k} \leq c$ for some constant $c$, we use 
\begin{equation}
     {x_{k-1}^2 + y_{k-1}^2} = \frac{\norm{\vv_{k-1}}^2 - \norm{M \vv_{k-1}}^2}{1-(1-2p)^2} \leq \frac{\norm{\vv_{k-1}}^2 - \norm{\vv_k}^2 + 1 - \norm{T_{k-1}}^2}{1-(1-2p)^2},\label{eq:iteration-2} 
\end{equation}
which holds for general CPTP controls. 
In the last step, we use \eqref{eq:qubit-channel-ineq} in \propref{prop:bloch} (a mathematical property of qubit channels that we will prove later) which leads to 
\begin{equation}
    \norm{\vv_{k}}^2 - \norm{M \vv_{k-1}}^2 = \norm{\vt_k + T_kM\vv_{k-1}}^2 - \norm{M \vv_{k-1}}^2 \leq 1 - \norm{T_k}^2. 
\end{equation}
Combining \eqref{eq:iteration-1-0} and \eqref{eq:iteration-2}, we have
\begin{gather}
    \norm{\dot\vv_{k}}^2 - \norm{T_k}^2 \norm{\dot \vv_{k-1}}^2 \leq \norm{T_k}^2 (\norm{\vv_{k-1}}^2 - \norm{\vv_k}^2 + 1 - \norm{T_{k-1}}^2) \, \frac{\trace(G_- Z)^2+ 4\dot p^2}{8p^2(1-p)^2},\\
    \Rightarrow~~ \norm{T_n}^2\cdots\norm{T_{k+1}}^2\norm{\dot\vv_{k}}^2 -  \norm{T_n}^2\cdots\norm{T_{k+1}}^2\norm{T_k}^2 \norm{\dot \vv_{k-1}}^2 \nonumber\\ 
    \quad \quad \leq  \norm{T_n}^2\cdots\norm{T_{k+1}}^2\norm{T_k}^2 (\norm{\vv_{k-1}}^2 - \norm{\vv_k}^2 + 1 - \norm{T_{k-1}}^2) \, \frac{\trace(G_- Z)^2+ 4\dot p^2}{8p^2(1-p)^2}, \\
    \Rightarrow~~ \norm{\dot \vv_{n}}^2 \leq \left( - \norm{T_n}^2 \norm{\vv_n}^2 + \sum_{k=1}^{n} \norm{T_n}^2\cdots\norm{T_{k+1}}^2\norm{T_k}^2 (1 - \norm{T_{k-1}}^2)(1 + \norm{\vv_{k-1}}^2) \right)\frac{\trace(G_- Z)^2+ 4\dot p^2}{8p^2(1-p)^2} \\
     \leq \norm{T_n}^2 \frac{\trace(G_- Z)^2+ 4\dot p^2}{4p^2(1-p)^2} \leq \frac{\trace(G_- Z)^2+ 4\dot p^2}{4p^2(1-p)^2} =: c, \quad \;\;\,
\end{gather}
where in the last step we sum the inequality over $k = 1$ to $n$.
\end{proof}

Below, we provide the mathematical property of qubit quantum channels that was used above in the proof of \propref{prop:dephasing-derivative}. Note that this property is a simple mathematical relation that applies to any qubit quantum channels and might have other applications beyond quantum metrology. 
\begin{proposition}[Property of Pauli transfer matrices of qubit channels]
\label{prop:bloch}
For any qubit quantum channel $\mE$ represented by $(\vt,T)$ through  
\begin{equation}
    \mE\left(\frac{1}{2}(\id + \vw\cdot\vsig)\right) = \frac{1}{2}(\id + (\vt + T\vw)\cdot\vsig), \quad \forall \vw \in \bR^3, 
\end{equation}
we have 
\begin{equation}
\label{eq:qubit-channel-ineq-0}
\norm{\vt}^2 \leq (1 - \sigma_{\min}(T)^2)(1 - \norm{T}^2),
\end{equation} 
where $\sigma_{\min}(\cdot)$ represents the smallest singular value of $T$. 
As a corollary, we have for all $\vw \in \bR^3$, 
   \begin{equation}
   \label{eq:qubit-channel-ineq}
       \norm{\vt + T\vw}^2 \leq 1 - \norm{T}^2 + \norm{\vw}^2. 
   \end{equation}
\end{proposition}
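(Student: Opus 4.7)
The plan is to reduce $T$ to diagonal form via the singular value decomposition and then extract the desired inequality from the complete positivity of $\mE$ encoded in its Choi matrix.

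First I will reduce to the case where $T$ is diagonal. The SVD gives $T = R_1 \Lambda R_2^{\top}$ with $R_1, R_2 \in O(3)$ and $\Lambda$ diagonal. Sign ambiguities from $O(3)\setminus SO(3)$ can be absorbed into $\Lambda$ (by flipping the sign of a column of $R_j$ and the corresponding diagonal entry of $\Lambda$), so without loss of generality $R_1, R_2 \in SO(3)$ and $\Lambda = \diag(\lambda_1, \lambda_2, \lambda_3)$ with $|\lambda_i| = \sigma_i(T)$. Via the $SU(2)\to SO(3)$ double cover, each of these $SO(3)$ rotations on the Bloch sphere is induced by a unitary conjugation on the qubit, so there exist unitaries $V_1,V_2$ such that the modified channel $\mE'(\rho) := V_1^\dagger \mE(V_2 \rho V_2^\dagger) V_1$ has Pauli transfer data $(\vt', \Lambda)$ with $\norm{\vt'} = \norm{\vt}$, and $\mE'$ is CPTP iff $\mE$ is. Since the claimed inequality depends only on $\norm{\vt}$ and $\sigma_i(T)$, I may assume $T = \Lambda$ is diagonal.

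Next I will use complete positivity of $\mE$ via the $4\times 4$ Choi matrix $J := \sum_{i,j=0,1} |i\rangle\langle j| \otimes \mE(|i\rangle\langle j|)$, which must be positive semi-definite. Direct computation in the diagonal form gives $J$ a sparse ``4-cycle'' pattern, with diagonal entries $(1 \pm t_3 \pm \lambda_3)/2$ and nonzero off-diagonal entries $(t_1 \pm i t_2)/2$ and $(\lambda_1 \pm \lambda_2)/2$. Applying $J \succeq 0$ to the $2 \times 2$ principal submatrices along the diagonal yields the axis-wise conditions $\sum_{j \neq i} t_j^2 + (|t_i| + |\lambda_i|)^2 \leq 1$ for each $i \in \{1,2,3\}$, while the $2 \times 2$ submatrices involving the $\lambda_1 \pm \lambda_2$ entries yield the Fujiwara--Algoet conditions $(\lambda_j \pm \lambda_k)^2 + t_i^2 \leq (1 \pm \lambda_i)^2$ for each permutation $(i,j,k)$ of $(1,2,3)$.

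Third, I combine these to establish the refined inequality $\sum_i t_i^2/(1-\lambda_i^2) \leq 1 - \max_i \lambda_i^2 = 1 - \norm{T}^2$. The main claim then follows from $\norm{\vt}^2 = \sum_i (1-\lambda_i^2) \cdot t_i^2/(1-\lambda_i^2) \leq (1 - \min_i \lambda_i^2) \sum_i t_i^2/(1-\lambda_i^2) \leq (1-\sigma_{\min}(T)^2)(1-\norm{T}^2)$. For the corollary \eqref{eq:qubit-channel-ineq}, I expand $\norm{\vt+T\vw}^2 = \norm{\vt}^2 + 2\sum_i t_i \lambda_i w_i + \sum_i \lambda_i^2 w_i^2$, apply AM--GM termwise as $2 t_i \lambda_i w_i \leq t_i^2 \lambda_i^2/(1-\lambda_i^2) + (1-\lambda_i^2)w_i^2$ (where $t_i = 0$ is forced whenever $\lambda_i^2 = 1$ by the axis-wise condition), and simplify to $\norm{\vt + T\vw}^2 \leq \sum_i t_i^2/(1-\lambda_i^2) + \norm{\vw}^2 \leq 1 - \norm{T}^2 + \norm{\vw}^2$.

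The main obstacle is the third step. The axis-wise $2 \times 2$ Choi conditions alone only yield the weaker bound $\norm{\vt}^2 \leq 1 - \norm{T}^2$; obtaining the refined inequality with the extra $(1-\sigma_{\min}^2)$ factor appears to require combining them carefully with the Fujiwara--Algoet conditions across all three axes, or exploiting higher-order principal minors of $J$. A delicate case analysis is needed, since the binding constraint depends on the ordering of $|\lambda_1|, |\lambda_2|, |\lambda_3|$ and on the relative signs and magnitudes of the $t_i$.
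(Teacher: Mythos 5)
Your first two steps (SVD reduction to diagonal $T=\diag(\lambda_1,\lambda_2,\lambda_3)$, then positivity of the Choi matrix) coincide with the paper's setup, but the proposal does not actually prove the proposition: the third step, which you yourself flag as ``the main obstacle,'' is the entire content of \eqref{eq:qubit-channel-ineq-0}. The axis-wise $2\times 2$ minors and the Fujiwara--Algoet-type conditions you list are necessary but are never shown to imply either your refined inequality $\sum_i t_i^2/(1-\lambda_i^2)\leq 1-\norm{T}^2$ or the stated product bound $\norm{\vt}^2\leq(1-\sigma_{\min}(T)^2)(1-\norm{T}^2)$; as you note, on their own they only give the weaker $\norm{\vt}^2\leq 1-\norm{T}^2$. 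Since your derivation of the corollary \eqref{eq:qubit-channel-ineq} also leans on the unproven refined inequality, both halves of the statement remain open in your write-up.

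The ingredient you are missing is a reduction of the \emph{direction} of $\vt$, not only of $T$. The paper computes the characteristic polynomial of the $4\times4$ Choi matrix and observes that it depends on $\vs:=R_1^T\vt$ only through $\norm{\vs}$ and the single combination $\sum_i\lambda_i^2 s_i^2$. Because $\sum_i\lambda_i^2 s_i^2\geq\lambda_3^2\norm{\vs}^2$ when $|\lambda_3|=\sigma_{\min}(T)$, replacing $\vs$ by $(0,0,\norm{\vs})$ (aligned with the weakest contraction axis) adds a non-negative constant to the characteristic polynomial and therefore preserves positive semidefiniteness. In that aligned configuration the four Choi eigenvalues are explicit, $1-\lambda_3\pm\sqrt{\norm{\vs}^2+(\lambda_1-\lambda_2)^2}$ and $1+\lambda_3\pm\sqrt{\norm{\vs}^2+(\lambda_1+\lambda_2)^2}$, so non-negativity gives $\norm{\vs}^2\leq\min\{(1-\lambda_3)^2-(\lambda_1-\lambda_2)^2,\,(1+\lambda_3)^2-(\lambda_1+\lambda_2)^2\}$, and a short two-case analysis ($\lambda_3\geq\lambda_1^2$ versus $\lambda_3<\lambda_1^2$, optimizing over $\lambda_2$ in the second case) produces the factor $(1-\sigma_{\min}^2)(1-\norm{T}^2)$. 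To repair your route you would need either to carry out this alignment argument, or to extract your refined inequality from the full $4\times4$ positivity of $J$ rather than from its $2\times2$ principal minors, which are genuinely insufficient. Your Cauchy--Schwarz/AM--GM treatment of the corollary is otherwise the same as the paper's final display.
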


\begin{proof}
    We first prove \eqref{eq:qubit-channel-ineq-0}. To simplify calculations, we first note that $T$ can be decomposed into $T = R_1 \Lambda R_2$ such that $S$ and $R_{1,2}$ are rotation matrices (orthogonal matrices with determinant equal to 1) and $\Lambda$ is a diagonal matrix whose diagonal elements are $(\lambda_1,\lambda_2,\lambda_3)$. ($\lambda_{1,2,3}$ can be negative and we assume $\abs{\lambda_1} \geq \abs{\lambda_2} \geq \abs{\lambda_3}$ and $1 > \abs{\lambda_1} > 0$ without loss of generality.) Then there always exist unitaries $U$ and $V$ such that  
    \begin{equation}
        U \mE  \left(V\frac{1}{2}(\id + \vw\cdot\vsig)V^\dagger\right) U^\dagger = \frac{1}{2}\left(\id + (R_1^T \vt + R_1^T T R_2^T \vw)\cdot\vsig\right) = \frac{1}{2}\left(\id + (R_1^T \vt + \Lambda \vw)\cdot\vsig\right). 
    \end{equation}
    Note that \eqref{eq:qubit-channel-ineq-0} is equivalent to $\norm{R_1^T \vt}^2 = \norm{\vt}^2 \leq (1 - \lambda_3^2)(1 - \lambda_1^2) = (1 - \sigma_{\min}(\Lambda)^2)(1 - \norm{\Lambda}^2)$. Thus it will be sufficient to prove \eqref{eq:qubit-channel-ineq-0} for channel $\mE'(\cdot) = U\mE(V(\cdot)V^\dagger)U^\dagger$ represented by $(\vs,\Lambda)$, where we let $\vs = R_1^T \vt$ for simplicity. To obtain the relation between $\vs$ and $\Lambda$ such that $\mE'$ is a CPTP map, we write down the Choi representation of $\mE'$ which gives 
    \begin{align}
        \Sigma = (\mE'\otimes \id)\left(\sum_{i=0,1} \ket{i}\ket{i}\sum_{j=0,1} \bra{j}\bra{j}\right) &= \frac{1}{2}(\mE'\otimes \id)\left(\id\otimes \id + X\otimes X^T + Y\otimes Y^T + Z\otimes Z^T\right) \\
        &= \frac{1}{2}\left((\id+\vs\cdot\vsig)\otimes \id + \lambda_1 X\otimes X^T + \lambda_2 Y\otimes Y^T + \lambda_3 Z\otimes Z^T\right).
    \end{align}
    $\Sigma$ being postivie semidefinite is a necessary and sufficient condition for $\mE'$ to be a CPTP map. %, i.e., all eigenvalues are positive. 
    Calculations show that when we fix $\lambda_{1,2,3}$ and view $\Sigma$ as a function of $s_{1,2,3}$, the characteristic polynomial of $\Sigma$ is 
    \begin{equation}
    \label{eq:cptp-condition}
        \det(\Sigma - \kappa \id) = \kappa^4 - 4\kappa^3 + c_2(\norm{\vs}) \kappa^2 + c_1(\norm{\vs}) \kappa + c_0(\norm{\vs}) - 4 \sum_{i=1}^3\lambda_i^2 s_i^2,
        \end{equation}
    where we didn't write out specific expressions of $c_{0,1,2}$ which are functions of $\norm{\vs}$ only. Our goal is to show for any $(\vs,\Lambda)$ such that $\Sigma$ is positive semidefinite, $\norm{\vs}^2 \leq (1 - \lambda_3^2)(1 - \lambda_1^2)$. To simplify the proof further, we observe from \eqref{eq:cptp-condition} that for any $\vs$, if $(\vs,\Lambda)$ is a CPTP map, $((0,0,\norm{\vs}),\Lambda)$ must also be a CPTP map. The reason is $\det(\Sigma - \kappa \id)$ will only be increased by a non-negative constant number when we replace $\vs$ with $(0,0,\norm{\vs})$, and thus eigenvalues of $\det(\Sigma - \kappa \id)$ stay non-negative after the replacement if they are non-negative before the replacement. Moreover, when $\vs = (0,0,\norm{\vs})$, we have the following analytical solution of $\det(\Sigma - \kappa \id)$ which gives 
    \begin{equation}
        \kappa = 1 - \lambda_3 \pm \sqrt{\norm{\vs}^2 + (\lambda_1-\lambda_2)^2},\; 1 + \lambda_3 \pm \sqrt{\norm{\vs}^2 + (\lambda_1+\lambda_2)^2}. 
    \end{equation}
    (Note that when the argument in $\sqrt{(\cdot)}$ is negative, the corresponding solution should be deleted from the above list.) From the discussion above, we know that all solutions of $\kappa$ must be non-negative. Specifically, $\kappa_1 = 1 - \lambda_3 - \sqrt{\norm{\vs}^2 + (\lambda_1-\lambda_2)^2}$ and $\kappa_2 = 1 + \lambda_3 - \sqrt{\norm{\vs}^2 + (\lambda_1+\lambda_2)^2}$ must be non-negative, which leads to 
    \begin{equation}
        \norm{\vs}^2 \leq \min\{(1-\lambda_3)^2 - (\lambda_1-\lambda_2)^2,(1+\lambda_3)^2 - (\lambda_1+\lambda_2)^2\}. 
    \end{equation}
    We will prove \eqref{eq:qubit-channel-ineq-0} by considering two cases, (i) $\lambda_3 \geq \lambda_1^2$. In this case, $(1-\lambda_3)^2 - (\lambda_1-\lambda_2)^2$ is always no larger than $(1+\lambda_3)^2 - (\lambda_1+\lambda_2)^2$ and then we have $\norm{\vs}^2 \leq (1-\lambda_3)^2 \leq (1-\lambda_3^2)(1-\lambda_1^2)$. (ii) $\lambda_3 < \lambda_1^2$. In this case, $\min\{(1-\lambda_3)^2 - (\lambda_1-\lambda_2)^2,(1+\lambda_3)^2 - (\lambda_1+\lambda_2)^2\}$ is that largest when $\lambda_2 = \lambda_3/\lambda_1$ and taking $\lambda_2 = \lambda_3/\lambda_1$, we have $\norm{\vs}^2 \leq 1 + \lambda_3^2 - \lambda_1^2 - (\lambda_3/\lambda_1)^2 \leq (1-\lambda_3^2)(1-\lambda_1^2)$.  

Now we prove \eqref{eq:qubit-channel-ineq} using \eqref{eq:qubit-channel-ineq-0}.
As shown above, there always exists unitaries $U$ and $V$ such that $\mE'(\cdot) = U\mE(V(\cdot)V^\dagger)U^\dagger$ is represented by $(\vs,\Lambda)$ and $\vs=R_1^T$ and $\Lambda = R_1^T T R_2^T$ is diagonal for some rotation matrices $R_{1,2}$ and $\abs{\lambda_1}\geq\abs{\lambda_2}\geq\abs{\lambda_3}$. We assume $\abs{\lambda_1} < 1$ as otherwise $\vt = 0$ and the inequality trivially holds. It is sufficient to prove \eqref{eq:qubit-channel-ineq} for $(\vs,\Lambda)$, i.e., for any $\vw \in \bR^3$, 
\begin{equation}
\label{eq:qubit-channel-ineq-simp}
    \norm{\vs+\Lambda \vw}^2 \leq 1 - \lambda_1^2 + \norm{\vw}^2. 
\end{equation}
Furthermore, \eqref{eq:qubit-channel-ineq-simp} is true because 
\begin{align}
    \norm{\vs+\Lambda \vw}^2 &= \sum_{i=1}^3 (s_i + \lambda_i w_i)^2 \leq \sum_{i=1}^3 \left( (1-\lambda_i^2) + (1-(1-\lambda_i^2))\right)\left(\frac{s_i^2}{1-\lambda_i^2} + w_i^2\right)\\
    &= \sum_{i=1}^3 \frac{s_i^2}{1-\lambda_i^2} + \norm{\vw}^2 \leq \frac{\norm{\vs}^2}{1-\lambda_3^2} + \norm{\vw}^2 \leq 1 - \lambda_1^2 + \norm{\vw}^2,
\end{align}
where we use the Cauchy-Schwarz inequality in the first inequality and \eqref{eq:qubit-channel-ineq-0} in the last inequality. 
\end{proof}

\section{Strictly Contractive Channels: Unachievability of the SQL}
\label{app:contractive}

Here we show a constant QFI upper bound in strategies (d) (see \figdref{fig:strategies}) when the one-parameter channel is strictly constractive. We will first prove \propref{prop:contraction} which implies the contraction coefficient of strictively contractive channels with respect to QFI is smaller than one and then use the relation between the QFI and the Bures distance to prove our final result. Note that \propref{prop:contraction} itself is a mathematical property of QFI and might have applications beyond quantum channel estimation. 

\begin{proposition}[Contraction coefficient of qubit channels with respect to QFI] Given a strictly contractive qubit channel $\mN(\cdot)$. 
\begin{equation}
    \eta(\mN):= \sup_{\sigma_\theta} \frac{F(\mN(\sigma_\theta))}{F(\sigma_\theta)} \leq \max_{\text{\rm Hermitian}\,A:\trace(A)=0}\frac{\norm{\mN(A)}_1}{\norm{A}_1}< 1, 
\end{equation}
where $\sigma_\theta$ is an arbitrary one-parameter quantum state. 
In other words, the contraction coefficient with respect to QFI is no larger than the contraction coefficient with respect to the trace distance. 
\label{prop:contraction}
\end{proposition}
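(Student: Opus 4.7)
Plan: The proof naturally splits into two parts: (i) identifying and bounding the trace-norm contraction coefficient $c_{\mathrm{tr}}(\mN):=\max_{A:\trace A=0}\|\mN(A)\|_1/\|A\|_1$, and (ii) showing that the QFI contraction is dominated by $c_{\mathrm{tr}}(\mN)$.

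Part (i) is straightforward from the Bloch-sphere representation. Any traceless Hermitian operator on a qubit has the form $A=\frac{1}{2}\vw\cdot\vsig$ with $\|A\|_1=|\vw|$, and the Bloch action $\mN:\vw\mapsto \vt+T\vw$ reduces on the traceless part to the linear map $A\mapsto \frac{1}{2}(T\vw)\cdot\vsig$ (the offset $\vt$ is proportional to $\id$ and drops out). Hence $c_{\mathrm{tr}}(\mN)=\|T\|$, which by the canonical form \eqref{eq:qubit-main} equals $\|\Lambda\|<1$ for strictly contractive channels.

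Part (ii) is the substantive step. The plan is to combine the explicit Bloch QFI formula $F(\sigma_\theta)=|\dot\vs|^2+(\vs\cdot\dot\vs)^2/(1-|\vs|^2)$ for $|\vs|<1$ (and $F=|\dot\vs|^2$ on the Bloch surface) with the analogous expression for $F(\mN(\sigma_\theta))$ under the substitution $\vs\mapsto\vt+T\vs$, $\dot\vs\mapsto T\dot\vs$. First I would bound the ratio $F(\mN(\sigma_\theta))/F(\sigma_\theta)$ by decomposing $T\dot\vs$ into components parallel and perpendicular to $\vt+T\vs$, applying $|T\dot\vs|\leq\|T\||\dot\vs|$ on each piece, and invoking \propref{prop:bloch}, notably $|\vt+T\vs|^2\leq 1-\|T\|^2+|\vs|^2$, to control the Bloch-sphere denominators. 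The quantum preprocessing-optimized FI of Refs.~\cite{len2021quantum,zhou2023optimal} then supplies a variational characterization of the QFI under which this comparison becomes transparent at the level of trace norms, ultimately giving $\eta(\mN)\leq\|T\|=c_{\mathrm{tr}}(\mN)$.

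The hard part will be controlling the denominator ratio $(1-|\vs|^2)/(1-|\vt+T\vs|^2)$ when $\sigma$ is close to pure. A naive application of $|T\dot\vs|^2\leq\|T\|^2|\dot\vs|^2$ leaves $(1-|\vt+T\vs|^2)^{-1}$ uncontrolled, and \propref{prop:bloch} alone is not sharp enough because $|\vt+T\vs|$ can approach $1$ even while $|\vs|<1$. The preprocessing-optimized FI is expected to bypass this difficulty by comparing SLD-like functionals of $\dot\sigma$ and $\mN(\dot\sigma)$ directly via trace norms, so that the Bloch-sphere denominators enter only through $c_{\mathrm{tr}}(\mN)$ itself rather than through a potentially divergent auxiliary factor.
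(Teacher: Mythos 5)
Your Part (i) is correct and matches what the paper needs: on the traceless subspace the Bloch action reduces to $A=\tfrac12\vw\cdot\vsig\mapsto\tfrac12(T\vw)\cdot\vsig$, so the trace-norm contraction coefficient is $\norm{T}<1$. The problem is Part (ii). You correctly diagnose that the direct Bloch-formula comparison fails — the factor $(1-\abs{\vt+T\vs}^2)^{-1}$ is uncontrolled near the Bloch surface and \propref{prop:bloch} is not sharp enough — but you then defer the resolution to the preprocessing-optimized FI without saying how it actually closes the gap. As written, the proposal names the right tool but does not contain the argument; "expected to bypass this difficulty" is the entire content of the step that matters.

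The mechanism the paper uses is quite different from a Bloch-sphere ratio estimate, and it is worth spelling out because it is the whole proof. For a qubit the QFI of the output state is attained by a rank-one projective measurement $\{B,\id-B\}$ with $B=\ket{\phi_0}\bra{\phi_0}$, so $F(\mN(\sigma_\theta))=F(\mN(\sigma_\theta),\{B,\id-B\})$. Pulling the measurement back through the adjoint map gives $F(\mN(\sigma_\theta))=F(\sigma_\theta,\{\mN^\dagger(B),\id-\mN^\dagger(B)\})\leq F^\bP(\sigma_\theta,\{M,\id-M\})$ with $M=\mN^\dagger(B)$. The cited result of Ref.~\cite{zhou2023optimal} bounds the preprocessing-optimized FI of a binary qubit POVM by $\big[1-\big(\sqrt{m_1m_2}+\sqrt{(1-m_1)(1-m_2)}\big)^2\big]F(\sigma_\theta)$, where $m_{1,2}$ are the eigenvalues of $M$; a short monotonicity argument shows this factor is at most $\delta:=\abs{m_1-m_2}$. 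The decisive point — absent from your proposal — is that $M=\mN^\dagger\big(\tfrac12(\id+\vw_0\cdot\vsig)\big)=\tfrac12\big((1+\vt\cdot\vw_0)\id+(T\vw_0)\cdot\vsig\big)$ for a unit vector $\vw_0$, so its eigenvalue spread is $\delta=\abs{T\vw_0}\leq\norm{T}$. That is how the bound closes with only $\norm{T}$ appearing and no divergent denominators; the purity of $\sigma_\theta$ never enters. Your proposal would need to be rebuilt around this adjoint-pullback step (and the invocation of \propref{prop:bloch} dropped — it plays no role in this proposition) before it constitutes a proof.
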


\begin{proof}
To prove this lemma, we use the fact that the QFI is equal to the optimal classical FI optimized over projective measurements. In particular, for qubits, there exist~\cite{braunstein1994statistical} an optimal binary measurement defined by measurement operators $B = \ket{\phi_0}\bra{\phi_0}$ and $\id - B = \ket{\phi_1}\bra{\phi_1}$ that satisfies 
\begin{align}
    F(\mN(\sigma_\theta)) 
    &= F(\mN(\sigma_\theta),\{B,\id-B\}) \\
    &= \trace(B \mN(\dot\sigma_\theta))^2 \left( \frac{1}{\trace(B \mN(\sigma_\theta))} + \frac{1}{1 - \trace(B \mN(\sigma_\theta))}\right)\\
    &= F(\sigma_\theta,\{\mN^\dagger(B),\id-\mN^\dagger(B)\}) \leq F^{\bP}(\sigma_\theta,\{\mN^\dagger(B),\id-\mN^\dagger(B)\}),
\end{align}
where $\mN^\dagger$ is the dual map of $\mN$ and $F^\bP(\sigma_\theta,\{M_i\})$ is called quantum preprocessing-optimized FI~\cite{len2021quantum,zhou2023optimal} defined by
\begin{equation}
    F^\bP(\sigma_\theta,\{M_i\}) = \sup_{\text{CPTP:}\,\mR} F(\mR(\sigma_\theta),\{M_i\}). 
\end{equation}
Refs.~\cite{zhou2023optimal} showed that for binary measurements on a qubit and any $\sigma_\theta$
\begin{equation}
    \frac{F^\bP(\sigma_\theta,\{M,\id-M\})}{F(\sigma_\theta)} \leq 1 - \big(\sqrt{m_1m_2} + \sqrt{(1-m_1)(1-m_2)} \big)^2,
\end{equation}
where $m_{1,2}$ are eigenvalues of $M$. In our case, $M = \mN^\dagger(\ket{\phi_0}\bra{\phi_0})$. Up to unitary rotations before and after $\mN$ (we can see below unitary rotations before and after $\mN$ doesn't affect our derivations), we can assume 
\begin{equation}
    \mN\left(\frac{\id + \vw \cdot \vsig}{2}\right) = \frac{\id + (\vt + T\vw) \cdot \vsig}{2},\quad T = \begin{pmatrix}
        \lambda_1 & 0 & 0 \\
        0 & \lambda_2 & 0\\
        0 & 0 & \lambda_3 \\
    \end{pmatrix},
\end{equation}
for all $\vw$ and $\abs{\lambda_1} \geq \abs{\lambda_2} \geq \abs{\lambda_3}$ and $ \abs{\lambda_1} = \max_{\text{\rm Hermitian}\,A:\trace(A)=0}\frac{\norm{\mN(A)}_1}{\norm{A}_1}$. Let $\ket{\phi_0}\bra{\phi_0} = \frac{\id + \vw_0 \cdot \vsig}{2}$ where $\vw_0$ is a unit vector, we have 
\begin{equation}
    \mN^\dagger\left(\frac{\id + \vw_0 \cdot \vsig}{2}\right) = \frac{(1 + \vt\cdot\vw_0)\id + (T\vw_0)\cdot \vsig}{2}. 
\end{equation}
It implies $\delta := \abs{m_1 - m_2} \leq \abs{\lambda_1}$. Without loss of generality, assume $m_1 > m_2$ and $m_1 = m_2 + \delta$ we write 
\begin{equation}
    1 - \big(\sqrt{m_1m_2} + \sqrt{(1-m_1)(1-m_2)} \big)^2 = 1 - \big(\sqrt{(m_2+\delta)m_2} + \sqrt{(1-m_2)(1-m_2-\delta)}\big)^2,
\end{equation}
\sloppy and $0\leq m_2 \leq 1 - \delta$. Note that $\frac{\partial}{\partial m_2} \sqrt{(m_2+\delta)m_2} + \sqrt{(1-m_2)(1-m_2-\delta)} = \frac{1}{2}\bigg(\sqrt{\frac{m_2}{m_2+\delta}}+\sqrt{\frac{m_2+\delta}{m_2}}-\sqrt{\frac{1-m_2}{1-m_2-\delta}}-\sqrt{\frac{1-m_2-\delta}{1-m_2}}\bigg)$ is monotonically decreasing in $m_2 \in [0,1-\delta]$. Then the minimum value of $\sqrt{m_1m_2} + \sqrt{(1-m_1)(1-m_2)}$ as a function of $m_2$ must be reached at either $m_2 = 0$ or $m_2 = 1- \delta$. It indicates that $\sqrt{m_1m_2} + \sqrt{(1-m_1)(1-m_2)} \leq \sqrt{1 - \delta}$. Finally, combining all discussions above, we have 
\begin{align}
    \frac{F(\mN(\sigma_\theta))}{F(\sigma_\theta)} &\leq \frac{F^\bP(\sigma_\theta,\{M_i\})}{F(\sigma_\theta)} \leq 1 - (\sqrt{m_1m_2} + \sqrt{(1-m_1)(1-m_2)})^2 \\ &\leq 1 - (\sqrt{1-\delta})^2 = \delta \leq \abs{\lambda_1} = \max_{\text{\rm Hermitian}\,A:\trace(A)=0}\frac{\norm{\mN(A)}_1}{\norm{A}_1}. 
\end{align}
The lemma is then proven. 
\end{proof}

\begin{theorem}[QFI upper bound for estimating strictly contractive channels using ancilla-free sequential strategies]  
\label{thm:contractive-upper}
 Consider estimating an unknown parameter $\theta$ in a strictly contractive channel $\mE_\theta$ using the ancilla-free sequential strategy and assume arbitrary general CPTP controls $\mC_k$. Then the channel QFI of \eqref{eq:entire-channel} for any fixed $\mE_\theta$ satisfies 
    \begin{equation}
   %\label{eq:unital-dephasing-II}
    \sup_{\{\mC_k\}_{k=1}^n} F(\mE^{(n)}_\theta) = \sup_{\{\mC_k\}_{k=1}^n} F(\mC_n\circ\mE_\theta\circ\cdots\circ\mC_1\circ\mE_\theta) \leq \frac{F(\mE_\theta)}{(1-\sqrt{\eta(\mE_\theta)})^2} = O(1). 
    \end{equation}
    Additionally, the constant scaling of the QFI is achievable (by letting $\mC_{n-1}= \trace(\cdot)\,\psi $ to be a replacement channel such that $F(\mE_\theta(\psi)) > 0$).
\end{theorem}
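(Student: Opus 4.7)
The plan is to derive a \emph{contractive chain rule} for the root QFI of a state evolved by a one-parameter channel, iterate it to a geometric series, and then invoke \propref{prop:contraction} to ensure that the common ratio of the series is strictly less than one.

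First, I will start from the Bures-distance representation $F(\rho_\theta)=\lim_{d\theta\to 0} d_B^2(\rho_{\theta+d\theta},\rho_{\theta-d\theta})/(d\theta)^2$ and use the triangle inequality for $d_B$ to establish the one-step bound
\begin{equation*}
\sqrt{F(\mE_\theta(\sigma_\theta))}\;\leq\;\sqrt{\eta(\mE_\theta)}\,\sqrt{F(\sigma_\theta)}\;+\;\sqrt{F(\mE_\theta)}.
\end{equation*}
The key step is to insert the intermediate state $\mE_{\theta+d\theta}(\sigma_{\theta-d\theta})$ into $d_B(\mE_{\theta+d\theta}(\sigma_{\theta+d\theta}),\mE_{\theta-d\theta}(\sigma_{\theta-d\theta}))$. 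The two resulting pieces correspond to (i) the channel being frozen while the state varies, and (ii) the state being frozen while the channel varies. As $d\theta\to 0$, piece (i) contributes $d\theta\sqrt{F(\mN(\sigma_\theta))}$ with $\mN$ fixed at $\mE_\theta$, which by the definition of the QFI contraction coefficient is at most $d\theta\sqrt{\eta(\mE_\theta) F(\sigma_\theta)}$; piece (ii) contributes $d\theta\sqrt{F(\mE_\theta(\sigma))}$ with $\sigma$ fixed, which is bounded by $d\theta\sqrt{F(\mE_\theta)}$ since the channel QFI is the supremum over inputs. Dividing by $d\theta$ gives the chain rule.

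Next, set $\rho_k:=\mC_k\circ\mE_\theta\circ\cdots\circ\mC_1\circ\mE_\theta(\rho_0)$. Since each $\mC_k$ is parameter-independent, data processing gives $F(\rho_k)\leq F(\mE_\theta(\rho_{k-1}))$, and combining this with the chain rule (with $\sigma_\theta=\rho_{k-1}$) yields the linear recurrence $a_k\leq\beta\,a_{k-1}+c$, where $a_k:=\sqrt{F(\rho_k)}$, $\beta:=\sqrt{\eta(\mE_\theta)}$ and $c:=\sqrt{F(\mE_\theta)}$. Starting from $a_0=0$, this resolves to $a_n\leq c/(1-\beta)$, and squaring yields $F(\rho_n)\leq F(\mE_\theta)/(1-\sqrt{\eta(\mE_\theta)})^2$. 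Proposition~\ref{prop:contraction} then ensures $\eta(\mE_\theta)<1$ for strictly contractive qubit channels, since it is upper-bounded by the trace-norm contraction coefficient, which is strictly less than one by the very definition of strict contractivity. Hence the right-hand side is a finite constant independent of $n$.

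For achievability of a positive constant scaling, I would take $\mC_{n-1}$ to be a replacement channel $(\cdot)\mapsto\trace(\cdot)\,\psi$ with $\psi$ chosen so that $F(\mE_\theta(\psi))>0$ (which is possible whenever $\mE_\theta$ has a nontrivial $\theta$ dependence), and let $\mC_n$ be the identity; then $\rho_n=\mE_\theta(\psi)$ and $F(\rho_n)=F(\mE_\theta(\psi))$ is a positive constant. The main technical hurdle I anticipate is making the limiting argument in the chain rule fully rigorous---in particular, justifying that replacing $\mE_{\theta+d\theta}$ by $\mE_\theta$ inside the frozen-channel piece only incurs an $o(d\theta)$ error---which should follow from joint continuity of $d_B$ in both arguments together with differentiability of $\mE_\theta$, but deserves care.
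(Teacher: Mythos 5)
Your proposal is correct and follows essentially the same route as the paper: both rest on the Bures-distance representation of the QFI, the triangle inequality with hybrid intermediate states in which the channel parameter is shifted one application at a time, the contraction coefficient $\eta(\mE_\theta)<1$ supplied by \propref{prop:contraction}, and the resulting geometric series $\sum_k \eta^{(n-k-1)/2}\sqrt{F(\mE_\theta)}$ yielding the identical constant $F(\mE_\theta)/(1-\sqrt{\eta(\mE_\theta)})^2$. Your one-step root-QFI recursion $a_k\le\beta a_{k-1}+c$ is just a repackaging of the paper's $n$-fold telescoping sum, and the regularity issue you flag (replacing $\mE_{\theta+d\theta}$ by $\mE_\theta$ in the frozen-channel piece) is the same one the paper handles by fixing $\epsilon,\epsilon'>0$ with $\eta(\mE_\delta)\le 1-\epsilon'$ on a neighborhood of $\theta=0$.
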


\begin{proof}
When $F(\mE_\theta) = 0$, we must have $\dot\mE_\theta = 0$ and then the theorem trivially holds. Therefore, we assume $F(\mE_\theta) > 0$ throughout our discussion. 
Let the initial quantum state be $\rho_0$. The state at the $k$-th step is
\begin{equation}
    \rho_{\theta,k} = \mC_k\circ\mE_\theta\circ\cdots\circ\mC_1\circ\mE_\theta(\rho_0). 
\end{equation}
Let $\mE_\theta$ be the strictly contractive channel under our consideration. It was assumed (without loss of generality) that $\theta = 0$ is the true value of $\theta$ and we also assumed $\mE_\theta$ is strictly contractive in a neighborhood of $0$. Using \propref{prop:contraction}, we can assume there exists $\epsilon,\epsilon' > 0$ such that the QFI contraction coefficient of $\mE_\delta$ is smaller than $1-\epsilon'$ when $\delta \in [-\epsilon,\epsilon]$, i.e., $F(\mE_{\delta}(\rho_\theta))/F(\rho_\theta) \leq 1-\epsilon'$ for any $\rho_\theta$ as a function of the unknown parameter $\theta$ and any $\delta \in [-\epsilon,-\epsilon]$.

%Since $\mE_\theta$ is a strictly contractive channel, $\rho_{\theta,k}$ for any $k\geq 1$ is a rank-two mixed state. \sisi{not necessarily rank-two need to rewrite this part}
Then for any $n > 1$, we have 
\begin{equation}
\label{eq:qfi-bures}
    F(\rho_{\theta,n}) = \,\lim_{d\theta \rightarrow 0} \frac{d^2_{\rm B}(\rho_{\theta+d\theta,n},\rho_{\theta-d\theta,n})}{(d\theta)^2}, \text{~~and equivalently~~} \sqrt{F(\rho_{\theta,n})} = \,\lim_{d\theta \rightarrow 0} \frac{d_{\rm B}(\rho_{\theta+d\theta,n},\rho_{\theta-d\theta,n})}{|d\theta|}
\end{equation}
which is a mathematical relation between the QFI and the Bures distance~\cite{zhou2019exact,hubner1992explicit}. 
Using the triangular inequality of the Bures distance, we have 
\begin{equation}
\label{eq:triangular-bures}
    d_{\rm B}(\rho_{\theta+d\theta,n},\rho_{\theta-d\theta,n})|_{\theta = 0} \leq d_{\rm B}(\rho^{[n]},\rho^{[n-1]}) + d_{\rm B}(\rho^{[n-1]},\rho^{[n-2]}) + \cdots + d_{\rm B}(\rho^{[1]},\rho^{[0]})
    % (\mC_n\circ\mE_{\theta+d\theta}\circ\cdots\circ\mC_1\circ\mE_{\theta+d\theta},\mC_n\circ\mE_{\theta}\circ\mC_{n-1}\circ\mE_{\theta+d\theta}\circ\cdots\circ\mC_1\circ\mE_{\theta+d\theta}) +     d_{\rm B} (\mC_n\circ\mE_{\theta}\circ\mC_{n-1}\circ\mE_{\theta+d\theta}\circ\cdots\circ\mC_1\circ\mE_{\theta+d\theta},\mC_n\circ\mE_{\theta}\circ\mC_{n-1}\circ\mE_{\theta+\delta}\circ\cdots\circ\mC_1\circ\mE_{\theta+d\theta}
\end{equation}
where 
\begin{equation}
    \rho^{[k]}:= \mC_n\circ\mE_{d\theta}\circ\cdots\circ\mC_{k+1}\circ\mE_{d\theta}\circ\mC_{k}\circ\mE_{-d\theta}\circ\cdots\circ\mC_1\circ\mE_{-d\theta},
\end{equation}
and $\rho^{[0]} = \rho_{d\theta,n}$, and $\rho^{[n]} = \rho_{-d\theta,n}$. Here $d\theta$ is an arbitrary number in $[-\epsilon,\epsilon]$. 
% \begin{equation}
%     \sqrt{F(\rho_{\theta,n})} = 4 \,\lim_{d\theta \rightarrow 0^+} \frac{d_{\rm B}(\rho_{\theta+d\theta,n},\rho_{\theta,n})}{d\theta}.
% \end{equation}
Furthermore, for any $b > 0$, when $d\theta$ is small enough, it holds that 
\begin{align}
    &\quad\; d^2_{\rm B}(\rho^{[k+1]},\rho^{[k]})\\ 
    &= \left(\lim_{\delta\rightarrow 0}\frac{d^2_{\rm B}\big(\mC_{n}\circ\mE_{d\theta} \circ\cdots\circ\mE_{d\theta} \circ\mC_{k+1}\circ\mE_{-\delta}(\rho_{0,k}),\mC_{n}\circ\mE_{d\theta} \circ\cdots \circ \mE_{d\theta} \circ \mC_{k+1}\circ\mE_{\delta}(\rho_{0,k})\big)}{\delta^2}\right)(d\theta)^2 + o(d\theta^2)\\
    &= F( (\mC_{n}\circ\mE_{d\theta} \circ\cdots \circ\mC_{k+2}\circ\mE_{d\theta})\circ(\mC_{k+1}\circ\mE_{\theta})(\rho_{0,k})) (d\theta)^2 + o(d\theta^2)\\
    &\leq (1-\epsilon')^{n-k-1} F(\mE_{\theta}) (d\theta)^2 + o((d\theta)^2) \leq (1-\epsilon')^{n-k-1} F(\mE_{\theta}) (1+b) (d\theta)^2,
    \label{eq:qfi-exp-decay}
\end{align}
where in the first step we take the Taylor expansion of the $k$-step $\mE_{d\theta}$ over $d\theta$ and we replace $d\theta$ with $\delta$ in the second line to distinguish it from other $d\theta$'s that were not involved in the Taylor expansion. In the second step, we simply use the relation between the QFI and the Bures distance. Note that the notation $F( (\mC_{n}\circ\mE_{d\theta} \circ\cdots \circ\mC_{k}\circ\mE_{d\theta})\circ\mC_{k}\circ\mE_{\theta}(\rho_{0,k}))$ means the QFI of the quantum state in the parenthesis when $\theta$ is viewed as the parameter to be estimated (treating $d\theta$ as a constant).  In the third step, we use the data-processing inequality and the fact that the QFI contraction coefficient of $\mE_{d\theta}$ when $d\theta \in [-\delta,\delta]$ is no larger than $1-\epsilon'$, as discussed above. In the final step, we note that for any $b > 0$, we can always pick $d\theta$ small enough (the choice of $d\theta$ may depends on $n$ which is a fixed finite constant here) such that for any $k$, $o((d\theta)^2)$ in the above inequality is smaller than $b (1-\epsilon')^{n-k} F(\mE_{\theta}) d\theta^2$. Combining \eqref{eq:qfi-exp-decay} with \eqref{eq:triangular-bures}, we have 
\begin{align}
    d_{\rm B}(\rho_{\theta+d\theta,n},\rho_{\theta-d\theta,n})|_{\theta = 0} 
    &\leq \sum_{k=0}^{n-1} d_{\rm B}(\rho^{[n]},\rho^{[n-1]}) \leq \sum_{k=0}^{n-1} \sqrt{(1-\epsilon')^{n-k-1} F(\mE_{\theta}) (1+b) (d\theta)^2} \\ 
    &= \frac{1-\sqrt{1-\epsilon'}^n}{1-\sqrt{1-\epsilon'}} \sqrt{(1+b)F(\mE_\theta)} \abs{d\theta} \leq \frac{\sqrt{(1+b)F(\mE_\theta)} }{1-\sqrt{1-\epsilon'}} \abs{d\theta}. 
\end{align}
Taking $d\theta \rightarrow 0$, we have 
\begin{equation}
    F(\rho_{\theta,n}) \leq \frac{(1+b)F(\mE_\theta) }{(1-\sqrt{1-\epsilon'})^2}. 
\end{equation}
In particular, since $b$ can be any positive number, and $1-\epsilon'$ can be arbitrarily close to the QFI contraction coefficient of $\mE_{\theta}$, we have 
\begin{equation}
    F(\rho_{\theta,n}) \leq \frac{F(\mE_\theta)}{(1-\sqrt{\eta(\mE_\theta)})^2}, 
\end{equation}
proving the theorem. 
\end{proof}

Note that the proof of \thmref{thm:contractive-upper} is general and does not require $\mE_\theta$ to be a qubit channel. In fact, it applies to any qudit channel $\mE_\theta$ satisfying $\eta(\mE_\theta) < 1$. However, whether \propref{prop:contraction} can be generalized to qudit cases is unknown.

% \section{HNKS \texorpdfstring{$\subsetneq$}{is a necessary but not sufficient condition of} RGNKS for one-parameter qubit channels}
% \label{app:RGNKS}

\end{document}